\tiny\color{gray},
\newcommand{\system}{\textsf{LinCQA}}
\newcommand{\examplequery}{q^{\mathsf{ex}}}
\newcommand{\examplequerynb}{q^{\mathsf{nex}}}
\newcommand{\cforest}{\mathcal{C}_{\mathsf{forest}}}
\newcounter{reviewcounter}
\newcommand{\inlineitem}[1][]{%
\ifnum\enit@type=\tw@
    {\descriptionlabel{#1}}
  \hspace{\labelsep}%
\else
  \ifnum\enit@type=\z@
       \refstepcounter{\@listctr}\fi
    \quad\@itemlabel\hspace{\labelsep}%
\fi}
\newtheorem{definition}{Definition}
\newtheorem{rules}{Rule}
\newcommand{\cut}[1]{}
\newenvironment{packed_item}{
\begin{itemize}
   \setlength{\itemsep}{1pt}
   \setlength{\parskip}{0pt}
   \setlength{\parsep}{0pt}
}
{\end{itemize}}
\newenvironment{packed_enum}{
\begin{enumerate}
   \setlength{\itemsep}{1pt}
  \setlength{\parskip}{0pt}
   \setlength{\parsep}{0pt}
}
{\end{enumerate}}
\newcommand{\introparagraph}[1]{\vspace{1mm} \noindent \textbf{#1 }}
\newcommand{\primarykey}[1]{\textbf{#1}}
\newcommand{\relation}[1]{\textsf{#1}}
\newcommand{\database}[1]{\textbf{\textsf{#1}}}
\newcommand{\obtainedfrom}{\text{ :- } }
\newcommand{\PreserveBackslash}[1]{\let\temp=\\#1\let\\=\temp}
\newcolumntype{C}[1]{>{\PreserveBackslash\centering}p{#1}}
\newcolumntype{R}[1]{>{\PreserveBackslash\raggedleft}p{#1}}
\newcolumntype{L}[1]{>{\PreserveBackslash\raggedright}p{#1}}
\newcommand{\FO}{$\mathbf{FO}$}
\newcommand{\LSPACE}{$\mathbf{L}$}
\newcommand{\coNP}{$\mathbf{coNP}$}
\newcommand{\NP}{$\mathbf{NP}$}
\newcommand{\card}[1]{|{#1}|}
\newcommand{\cqa}[1]{\mathsf{CERTAINTY}({#1})}
\mathchardef\mhyphen="2D
\newcommand{\db}{\mathbf{db}}
\newcommand{\block}{\mathbf{b}}
\newcommand{\var}[1]{\mathsf{vars}({#1})}
\newcommand{\queryvars}[1]{{\mathsf{vars}}({#1})}
\newcommand{\atomvars}[1]{{\mathbf{vars}}({#1})}
\newcommand{\fd}[2]{{#1}\rightarrow{#2}}
\newcommand{\attacks}[1]{\stackrel{#1}{\leadsto}}
\newcommand{\nattacks}[1]{\stackrel{#1}{\not\leadsto}}
\newcommand{\keyclosure}[2]{{#1}^{+,{#2}}}
\newcommand{\goodkey}[3]{{#1}_{\mathsf{gkey}}({#2}, {#3})}
\newcommand{\fk}[1]{{#1}_{\mathsf{fkey}}}
\newcommand{\jk}[1]{{#1}_{\mathsf{join}}}
\newcommand{\key}[1]{{\mathsf{key}}({#1})}
\newcommand{\revm}[1]{{\color{black} #1}}
\newcommand{\reva}[1]{{\color{black}  #1}}
\newcommand{\revb}[1]{{\color{black}   #1}}
\newcommand{\revc}[1]{{\color{black}    #1}}
\newcommand{\revjef}[1]{{\color{black}    #1}}
\definecolor{purple}{rgb}{0.65,0.05,0.3}
\definecolor{red}{rgb}{1,0,0}
\definecolor{RED}{rgb}{1,0,0}
  \providecommand\BibTeX{{%
    \normalfont B\kern-0.5em{\scshape i\kern-0.25em b}\kern-0.8em\TeX}}}
\begin{document}
\title[LinCQA: Faster Consistent Query Answering with Linear Time Guarantees]{LinCQA: Faster Consistent Query Answering \\ with Linear Time Guarantees}

\author{Zhiwei Fan}
  \affiliation{%
    \institution{University of Wisconsin--Madison}
    \city{Madison, WI}
    \country{USA}
    }
  \email{zhiwei@cs.wisc.edu}

\author{Paraschos Koutris}
  \affiliation{%
    \institution{University of Wisconsin--Madison}
    \city{Madison, WI}
    \country{USA}
    }
  \email{paris@cs.wisc.edu}

  \author{Xiating Ouyang}
  \affiliation{%
    \institution{University of Wisconsin--Madison}
    \city{Madison, WI}
    \country{USA}
    }
  \email{xouyang@cs.wisc.edu}

  \author{Jef Wijsen}
  \affiliation{%
    \institution{University of Mons}
    \city{Mons}
    \country{Belgium}
    }
  \email{jef.wijsen@umons.ac.be}

\renewcommand{\shortauthors}{Zhiwei Fan, Paraschos Koutris, Xiating Ouyang and Jef Wijsen}

\begin{abstract}

Most data analytical pipelines often encounter the problem of querying inconsistent data that violate pre-determined integrity constraints. 
Data cleaning is an extensively studied paradigm that singles out a consistent repair of the inconsistent data. Consistent query answering (CQA) is an alternative approach to data cleaning that asks for all tuples guaranteed to be returned by a given query on all (in most cases, exponentially many) repairs of the inconsistent data. This paper identifies a class of acyclic select-project-join (SPJ) queries for which CQA can be solved via SQL rewriting with a linear time guarantee. Our rewriting method can be viewed as a generalization of Yannakakis's algorithm for acyclic joins to the inconsistent setting. We present \system, a system that can output rewritings in both SQL and non-recursive Datalog rules for every query in this class. We show that \system\ often outperforms the existing CQA systems on both synthetic and real-world workloads, and in some cases, by orders of magnitude.

\end{abstract}

\begin{CCSXML}
<ccs2012>
<concept>
<concept_id>10002951.10002952.10003197.10010822</concept_id>
<concept_desc>Information systems~Relational database query languages</concept_desc>
<concept_significance>500</concept_significance>
</concept>
<concept>
<concept_id>10003752.10010070.10010111.10011736</concept_id>
<concept_desc>Theory of computation~Incomplete, inconsistent, and uncertain databases</concept_desc>
<concept_significance>500</concept_significance>
</concept>
<concept>
<concept_id>10003752.10010070.10010111.10011734</concept_id>
<concept_desc>Theory of computation~Logic and databases</concept_desc>
<concept_significance>500</concept_significance>
</concept>
</ccs2012>
\end{CCSXML}




\maketitle

\section{Introduction}
\label{sec:intro}

A database is {\em inconsistent} if it violates one or more integrity constraints that are supposed to be satisfied. 
Database inconsistency can naturally occur when the dataset results from an integration of heterogeneous sources, or because of noise during data collection.

{\em Data cleaning} \cite{rahm2000data} is the most widely used approach to manage inconsistent data in practice. It first {\em repairs} the inconsistent database by removing or modifying the inconsistent records so as to obey the integrity constraints. Then, users can run queries on a {\em clean} database.
There has been a long line of research on data cleaning. Several frameworks have been proposed \cite{DBLP:conf/icde/GeGM0W21,DBLP:journals/pvldb/RezigOAEMS21,DBLP:journals/pvldb/GeertsMPS13,DBLP:conf/sigmod/ArasuK09,GrecoGZ03}, using techniques such as knowledge bases and machine learning~\cite{DBLP:journals/pvldb/RekatsinasCIR17,DBLP:conf/sigmod/ChuIKW16,DBLP:journals/mst/BertossiKL13,DBLP:journals/pvldb/HeCGZNC18,DBLP:journals/pvldb/EbaidEIOQ0Y13,DBLP:conf/icde/LiRBZCZ21,DBLP:conf/sigmod/BergmanMNT15,DBLP:conf/icde/ChuIP13,DBLP:conf/icde/TongCZLC14,DBLP:journals/pvldb/KrishnanWWFG16}. Data cleaning has also been studied under different contexts \cite{kohler2021possibilistic,DBLP:journals/pvldb/ChengCX08,bertossi2013data,DBLP:conf/sigmod/KhayyatIJMOPQ0Y15,DBLP:conf/icde/BohannonFGJK07,DBLP:journals/pvldb/ProkoshynaSCMS15}. 
However, the process of data cleaning is often ad hoc and arbitrary choices are frequently made regarding which data to keep in order to restore database consistency. 
This comes at the price of losing important information since the number of cleaned versions of the database can be exponential in the database size. Moreover, data cleaning is commonly seen as a laborious and time-intensive process in data analysis.
There have been efforts to accelerate the data cleaning process~\cite{DBLP:journals/pvldb/RekatsinasCIR17,DBLP:conf/icde/ChuIP13,DBLP:conf/sigmod/ChuMIOP0Y15,DBLP:journals/pvldb/RezigOAEMS21},
but in most cases, users need to wait until the data is clean before being able to query the database. 



\begin{table*}
\caption{A summary of systems for consistent query answering}
\vspace{-2ex}
\label{tbl:all-systems}
\begin{tabular}{c | c c c c}
System & Target class of queries & Intermediate output & Backend \\
\hline
EQUIP \cite{10.14778/2536336.2536341} & all SPJ Queries &  Big Integer Program (BIP)  & DBMS \& BIP solver \\
CAvSAT \cite{DBLP:journals/corr/abs-2103-03314,DBLP:conf/sat/DixitK19} & all SPJ Queries &  SAT formula & DBMS \& SAT solver \\
\hdashline
Conquer \cite{fuxman2005conquer} & $\cforest$ &  SQL rewriting & DBMS \\
Conquesto \cite{Conquesto} & self-join-free SPJ Queries in \FO & Datalog rewriting & Datalog engine \\
\system\ (this paper) & PPJT &  SQL rewriting / Datalog rewriting & DBMS or Datalog engine
\end{tabular}
\vspace{-1ex}
\end{table*}

{\em Consistent query answering} (CQA) is an alternative approach to data cleaning for managing inconsistent data~\cite{10.1145/303976.303983} that has recently received more attention~\cite{10.1145/3377391.3377393,Bertossi19}.  Instead of singling out the ``best'' repair, CQA considers \emph{all} possible repairs of the inconsistent database, returning the intersection of the query answers over all repairs, called the \emph{consistent answers}. 
CQA serves as a viable complementary procedure to data cleaning for multiple reasons. First, it deals with inconsistent data at query time without needing an expensive offline cleaning process during which the users cannot query the database. Thus, users can quickly perform preliminary data analysis to obtain the consistent answers while waiting for the cleaned version of the database.
Second, consistent answers can also be returned alongside the answers obtained after data cleaning, by marking which answers are certainly/reliably correct and which are not. This information may provide further guidance in critical decision-making data analysis tasks. Third, CQA can be used to design more efficient data cleaning algorithms \cite{DBLP:journals/pvldb/KarlasLWGC0020}.


In this paper, we will focus on CQA for the most common kind of integrity constraint: {\em primary keys}. A primary key constraint enforces that no two distinct tuples in the same table agree on all primary key attributes. CQA under primary key constraints has been extensively studied over the last two decades. 

\revb{From a theoretical perspective, CQA for select-project-join (SPJ) queries is computationally hard as it potentially requires inspecting an exponential number of repairs.} However, for some SPJ queries the consistent answers can be computed in polynomial time, and for some other SPJ queries CQA is \emph{first-order rewritable} (\FO-rewritable): we can construct another query such that executing it directly on the inconsistent database will return the consistent answers of the original query. 
After a long line of research \cite{KOLAITIS201277,KoutrisS14,KoutrisW15,DBLP:conf/icdt/KoutrisW19,KoutrisWTOCS20}, it was proven that given any self-join-free SPJ query, the problem is either \FO-rewritable, polynomial-time solvable but not \FO-rewritable, or \coNP-complete \cite{KoutrisW17}. 

From a systems standpoint, most CQA systems fall into two categories (summarized in Table~\ref{tbl:all-systems}):
(1)~systems that can compute the consistent answers of join queries with arbitrary denial constraints but require solvers for computationally hard problems (e.g., EQUIP~\cite{10.14778/2536336.2536341} relies on Integer Programming solvers, and CAvSAT~\cite{DBLP:journals/corr/abs-2103-03314,DBLP:conf/sat/DixitK19} requires SAT solvers), and  (2)~systems that output the \FO-rewriting of the input query, but only target a specific class of queries that occurs frequently in practice. Fuxman and Miller~\cite{fuxman2005conquer} identified a  class of \FO-rewritable  queries called $\cforest$ and implemented their rewriting in ConQuer, which outputs a single SQL query. Conquesto~\cite{Conquesto} is the most recent system targeting \FO-rewritable join queries by producing the rewriting in Datalog. 

We identify several drawbacks with all systems above. Both EQUIP and CAvSAT rely on solvers for \NP-complete problems, which does not guarantee efficient termination, even if the input query is \FO-rewritable. Even though $\cforest$ captures many join queries seen in practice, it excludes queries that involve \emph{(i)}~joining with only part of a composite primary key, often appearing in snowflake schemas, and \emph{(ii)}~joining two tables on both primary-key and non-primary-key attributes, 
which commonly occur in settings such as entity matching and cross-comparison scenarios. 
Conquesto, on the other hand, implements the generic \FO-rewriting algorithm without strong performance guarantees. Moreover, neither ConQuer nor Conquesto has theoretical guarantees on the running time of their produced rewritings. 

\introparagraph{Contributions.} To address the above observed issues, we make the following contributions: 

\smallskip
\noindent {\em Theory \& Algorithms.} We identify a subclass of acyclic Boolean join queries that captures a wide range of queries commonly seen in practice for which we can produce \FO-rewritings with a {\em linear running time guarantee} (Section~\ref{sec:rewriting}). This class subsumes all acyclic Boolean queries in $\cforest$. For consistent databases, Yannakakis's algorithm~\cite{DBLP:journals/jacm/BeeriFMY83} evaluates acyclic Boolean join queries in linear time in the size of the database. Our algorithm shows that even when inconsistency is introduced w.r.t. primary key constraints, the consistent answers of many acyclic Boolean join queries can still be computed in linear time, exhibiting no overhead to Yannakakis's algorithm. 
Our technical treatment follows Yannakakis's algorithm by considering a rooted join tree with an additional annotation of the \FO-rewritability property, called a \emph{pair-pruning join tree (PPJT)}. Our algorithm follows the pair-pruning join tree to compute the consistent answers and degenerates to Yannakakis's algorithm if the database has no inconsistencies. 

\smallskip
\noindent {\em Implementation.} We implement our algorithm in \system\ (\underline{Lin}ear \underline{C}onsistent \underline{Q}uery \underline{A}nswering) \footnote{\url{https://github.com/xiatingouyang/LinCQA/}}, a system prototype that produces an efficient and optimized rewriting in both SQL and non-recursive Datalog rules with negation (Section~\ref{sec:implementation}).

\smallskip
\noindent {\em Evaluation.} We perform an extensive experimental evaluation comparing \system\ to the other state-of-the-art CQA systems. Our findings show that \emph{(i)}~a properly implemented rewriting can significantly outperform a generic CQA system (e.g., CAvSAT); \emph{(ii)}~\system\ achieves the best overall performance throughout all our experiments under different inconsistency scenarios; and \emph{(iii)}~the strong theoretical guarantees of \system\ translate to a significant performance gap for worst-case database instances. \system\ often outperforms other CQA systems, in several cases by orders of magnitude on both synthetic and real-world workloads. We also demonstrate that CQA can be an effective approach even for real-world datasets of very large scale ($\sim$$400$GB), which, to the best of our knowledge, have not been tested before.

\section{Related Work}
\label{sec:related-work}

Inconsistency in databases has been studied in different contexts~\cite{10.5555/1709465.1709573,kahale2020meta,katsis2010inconsistency,DBLP:journals/jcss/BarceloF17,DBLP:conf/icdt/BarceloF15,DBLP:conf/icdt/LopatenkoB07,DBLP:journals/is/RodriguezBM13,DBLP:conf/pods/CalauttiCP21}.  The notion of  Consistent Query Answering (CQA) was introduced in the seminal work by Arenas, Bertossi, and Chomicki \cite{10.1145/303976.303983}. After twenty years, their contribution was acknowledged in a \emph{Gems of PODS session}~\cite{Bertossi19}. An overview of complexity classification results in CQA appeared recently in the \emph{Database Principles} column of SIGMOD Record \cite{10.1145/3377391.3377393}.

The term $\cqa{q}$ was coined in~\cite{wijsen2010first} to refer to CQA for Boolean queries $q$ on databases that violate primary keys, one per relation, which are fixed by $q$'s schema. The complexity classification of $\cqa{q}$ for the class of self-join-free
Boolean conjunctive queries started with the work by Fuxman and Miller~\cite{FUXMAN2007610}, and was further pursued in~\cite{KOLAITIS201277,KoutrisS14,KoutrisW15,KoutrisW17,DBLP:conf/icdt/KoutrisW19,KoutrisWTOCS20},
which eventually revealed that the complexity of $\cqa{q}$ for self-join-free conjunctive queries displays a trichotomy between \FO,  \LSPACE-complete, and \coNP-complete. A recent result also extends the complexity classification of $\cqa{q}$ to path queries that may contain self-joins \cite{DBLP:conf/pods/KoutrisOW21}. 
The complexity of $\cqa{q}$ for self-join-free Boolean conjunctive queries with negated atoms was studied in~\cite{KoutrisW18}. 
For self-join-free Boolean conjunctive queries w.r.t.\ multiple keys, it remains decidable whether or not $\cqa{q}$ is in \FO~\cite{KoutrisW20}.

Several systems for CQA that are used for comparison in our study have already been described in the introduction: ConQuer~\cite{fuxman2005conquer}, Conquesto~\cite{Conquesto}, CAvSAT~\cite{DBLP:journals/corr/abs-2103-03314,DBLP:conf/sat/DixitK19}, and EQUIP~\cite{10.14778/2536336.2536341}. Most early systems for CQA used efficient solvers for Disjunctive Logic Programming and Answer Set Programming (ASP) \cite{chomicki2004hippo,GrecoGZ03,manna_ricca_terracina_2015,DBLP:journals/tplp/ArenasBC03,DBLP:conf/sccc/MarileoB05,DBLP:conf/icdt/LopatenkoB07}. 
  
Similar notions to CQA are also emerging in machine learning with the goal of computing the consistent classification result of certain machine learning models over inconsistent training data~\cite{DBLP:journals/pvldb/KarlasLWGC0020}.
\setlength{\abovedisplayskip}{3pt}
\setlength{\belowdisplayskip}{3pt}

\section{Background}
\label{sec:prelim}

In this section, we define some notations used in our paper. We use the example \textbf{Company} database shown in Figure~\ref{fig:incon_db_example} to illustrate our constructs, where the primary key attribute of each table is highlighted in bold.

\definecolor{Gray}{gray}{0.9}

\begin{figure*}
\begin{tabular}{c c c}
\begin{tabular}{|l l l}
 & \textsf{Employee} & \\
\primarykey{\textsf{employee\_id}} & \textsf{office\_city} & \textsf{wfh\_city} \\
\hline
0011 & Boston &  Boston \\
\rowcolor{Gray}
0011 & Chicago & New York \\
0011 & Chicago & Chicago \\
\hdashline
\rowcolor{Gray}
 0022 & New York & New York \\
 0022 & Chicago &   Chicago  \\
\hdashline
\rowcolor{Gray}
 0034 & Boston & New York
\end{tabular}
&
\begin{tabular}{|l c c}
 & \textsf{Manager} & \\
\primarykey{\textsf{office\_city}} & \textsf{manager\_id} & \textsf{start\_year} \\
\hline
 Boston & 0011 & 2020 \\
\rowcolor{Gray} Boston & 0011 & 2021 \\
\hdashline
\rowcolor{Gray}  Chicago & 0022 & 2020 \\
\hdashline
\rowcolor{Gray} LA & 0034 & 2020 \\
 LA & 0037 & 2020 \\
\hdashline
\rowcolor{Gray}New York & 0022 & 2020 
\end{tabular}
&
\begin{tabular}{|l c}
\textsf{Contact} & \\
\primarykey{\textsf{office\_city}} & \textsf{contact\_id} \\
\hline
\rowcolor{Gray}Boston & 0011\\
 Boston & 0022 \\
\hdashline
\rowcolor{Gray} Chicago &  0022 \\
\hdashline
\rowcolor{Gray} LA & 0034  \\
 LA & 0037 \\
\hdashline     
\rowcolor{Gray}New York & 0022 
\end{tabular}
\end{tabular}
\caption{An inconsistent database (\database{Company}). \revb{A repair of this database is highlighted with the gray color.}}
\label{fig:incon_db_example}
\end{figure*}

\introparagraph{Database instances, blocks, and repairs.}
A \emph{database schema} is a finite set of table names.
Each table name is associated with a finite sequence of \emph{attributes}, \reva{and the length of that sequence is called the \emph{arity} of that table}. 
Some of these attributes are declared as \emph{primary-key attributes}, forming together the \emph{primary key}.
A \emph{database instance} $\db$ associates to each table name a finite set of \emph{tuples} \reva{that agree on the arity of the table}, called a \emph{relation}.
A relation is \emph{consistent} if it does not contain two distinct tuples that agree on all primary-key attributes. 
A \emph{block} of a relation is a maximal set of tuples that agree on all primary-key attributes.
Thus, a relation is consistent if and only if it has no block with two or more tuples.
A \emph{repair} of a (possibly inconsistent) relation is obtained by selecting exactly one tuple from each block.
Clearly, a relation with $n$ blocks of size~$2$ each has $2^{n}$ repairs, an exponential number. 
A database instance $\db$ is consistent if all relations in it are consistent.
A repair of a (possibly inconsistent) database instance is obtained by selecting one repair for each relation. 
In the technical treatment, it will be convenient to view a database instance $\db$ as a set of facts: if the relation associated with table name $R$ contains tuple $\vec{t}$, then we say that $R(\vec{t})$ is a fact of $\db$. 

\begin{example}
The \textbf{Company} database in Figure~\ref{fig:incon_db_example} is inconsistent with respect to primary key constraints. For example, in the \relation{Employee} table there are 3 distinct tuples sharing the same primary key \relation{employee\_id} $0011$. The blocks in the \textbf{Company} database are highlighted using dashed lines. An example repair of the $\textbf{Company}$ database can be obtained by {\em choosing exactly one tuple from each block}, and there are in total $96=3\times 2^{5}$ distinct repairs.  \qed
\end{example}

\introparagraph{Atoms and key-equal facts.}
Let $\vec{x}$ be a sequence of variables and constants. We write $\var{\vec{x}}$ for the set of variables that appear in $\vec{x}$. 
An \emph{atom} $F$ with relation name~$R$ takes the form $R(\underline{\vec{x}},\vec{y})$, where the primary key is underlined; we denote $\key{F}=\var{\vec{x}}$.
Whenever a database instance $\db$ is understood, we write $R(\underline{\vec{c}},*)$ for the block containing all tuples with primary-key value~$\vec{c}$ in relation $R$.
\begin{example}
For the \textbf{Company} database, we can have atoms $\relation{Employee}(\underline{x}, y, y)$, $\relation{Manager}(\underline{u}, v, \texttt{2020})$, and $\relation{Contact}(\underline{\texttt{LA}}, \texttt{2020})$. The block $\relation{Manager}
(\underline{\relation{Boston}}, *)$ contains two facts: \relation{Manager}(\underline{\relation{Boston}}, \relation{0011}, \relation{2020}) and \relation{Manager}(\underline{\relation{Boston}}, \relation{0011}, \relation{2021}). \qed
\end{example}

\introparagraph{Conjunctive Queries.}
For select-project-join (SPJ) queries, we will also use the term \emph{conjunctive queries (CQ)}.
Each CQ $q$ can be represented as a succinct rule of the following form:
%
\begin{align}\label{eq:rule}
q(\vec{u}) \obtainedfrom R_{1}(\underline{\vec{{x}_{1}}},\vec{{y}_{1}}), \dots, R_{n}(\underline{\vec{{x}_{n}}},\vec{{y}_{n}})
\end{align}
where each $R_{i}(\underline{\vec{{x}_{i}}},\vec{{y}_{i}})$ is an atom for $1 \leq i \leq n$. We denote by $\queryvars{q}$ the set of variables that occur in $q$ and $\vec{u}$ is said to be the \emph{free variables} of $q$. The atom $q(\vec{u})$ is the \emph{head} of the rule, and the remaining atoms are called the \emph{body} of the rule, $\mathsf{body}(q)$.

A CQ $q$ is Boolean (BCQ) if it has no free variables, and it is {\em full} if all its variables are free.
We say that $q$ has a {\em self-join\/} if some relation name occurs more than once in $q$.
A CQ without self-joins is called {\em self-join-free\/}. 
If a self-join-free query $q$ is understood, an atom $R(\underline{\vec{x}},\vec{y})$ in $q$ can be denoted by~$R$.
If the body of a CQ of the form~\eqref{eq:rule} can be partitioned into two nonempty parts that have no variable in common, then we say that the query is \emph{disconnected}; otherwise it is \emph{connected}.

For a CQ $q$, let $\vec{x} = \langle x_1, \dots , x_{\ell} \rangle$ be a sequence of distinct variables that occur in $q$ and
$\vec{a} = \langle a_1, \dots , a_{\ell}\rangle$ be a sequence of constants, then $q_{[\vec{x} \rightarrow \vec{a}]}$ denotes the query obtained from $q$ by replacing all occurrences of $x_i$ with $a_i$ for all $1 \leq i \leq \ell$.

\vspace{-1ex}
\begin{example}
\label{ex:dbexample}
Consider the query over the \textbf{Company} database that {\em returns the id's of all employees who work in some office city with a manager who started in year 2020.} It can be expressed by the following SQL query:
\begin{lstlisting}
SELECT E.employee_id FROM Employee E, Manager M
WHERE E.office_city=M.office_city AND M.start_year=2020
\end{lstlisting} 
\vspace{-1ex}
and the following CQ: 
$$q(x) \obtainedfrom \relation{Employee}(\underline{x}, y, z), \relation{Manager}(\underline{y}, w, \texttt{2020}).$$

The following CQ $q'$ is a BCQ, since it merely asks whether $0011$ is such an employee\_id satisfying the conditions in $q$:
$$q'() \obtainedfrom \relation{Employee}(\underline{0011}, y, z), \relation{Manager}(\underline{y}, w, \texttt{2020}).$$ It is easy to see that $q'$ is equivalent to $q_{[x \rightarrow 0011]}$. \qed
\end{example}



\introparagraph{Datalog.} A Datalog program $P$ is a finite set of rules of the form~\eqref{eq:rule},
with the extension that negated atoms can be used in rule bodies. A rule can be interpreted as a logical implication: if the body is true, then so is the head of the rule. We assume that rules are always safe, meaning that every variable occurring in the rule must also occur in a non-negated atom of the rule body.
\revb{A relation belongs to the intensional database (IDB) if it is defined by rules, i.e., if it appears as the head of some rule; otherwise it belongs to the extensional database (EBD), i.e., it is a stored table.}
Our rewriting uses non-recursive Datalog with negation \cite{DBLP:journals/jcss/AjtaiG94}. This means that the rules of a Datalog program $P$ can be partitioned into $(P_{1},P_{2},\dots,P_{n})$ such that the rule body of a rule in $P_{i}$ uses only IDB predicates defined by rules in some $P_{j}$ with $j<i$.
Here, it is understood that all rules with the same head predicate belong in the same partition.  


\introparagraph{Consistent query answering.}
For every CQ $q$, given an input database instance $\db$, \revb{the problem $\cqa{q}$ asks for the intersection of query outputs over all repairs of $\db$}. 
If $q$ is Boolean, the problem $\cqa{q}$ then asks whether $q$ is satisfied by every repair of the input database instance~$\db$. \revb{In this work, we study the {\em data complexity} of $\cqa{q}$, i.e., the size of the query $q$ is assumed to be a fixed constant.}

The problem $\cqa{q}$ has a first-order rewriting (\FO-rewriting) if there is another first-order query~$q'$ (which, in most cases, uses the difference operator and hence is not a SPJ query) 
such that evaluating $q'$ on the input database $\db$ would return the answers of $\cqa{q}$. In other words, executing $q'$ directly on the inconsistent database \emph{simulates} computing the original query $q$ over all possible repairs.
\begin{example}
\label{ex:cqa-prelim}
Recall that in Example~\ref{ex:dbexample}, the query $q$ returns $\{0011, 0022, 0034\}$ on the inconsistent database \textbf{Company}. For $\cqa{q}$ however, the only output is $0022$: for any repair that contains the tuples \relation{Employee}(\underline{0011}, \relation{Boston}, \relation{Boston}) and \relation{Manager}(\underline{\relation{Boston}}, 0011, 2021), neither $0011$ nor $0034$ would be returned by $q$; and in any repair, $0022$ is returned by $q$ with the following crucial observation:
\emph{
Regardless of which tuple in $\relation{Employee}(\underline{0022}, *)$ the repair contains, both offices are present in the \relation{Manager} table and both managers in Chicago and New York offices started in 2020.
}

Based on the observation, it is sufficient to solve $\cqa{q}$ 
by running the following single SQL query, called an \FO-rewriting of $\cqa{q}$. 
\lstset{aboveskip=1.5pt,belowskip=1.5pt}
 \begin{lstlisting}
		SELECT E.employee_id  FROM Employee E EXCEPT
		SELECT E.employee_id  FROM Employee E 
		WHERE E.office_city NOT IN (
				SELECT M.office_city FROM Manager EXCEPT
				SELECT M.office_city FROM Manager 
				WHERE M.start_year <> 2020)
\end{lstlisting}
\vspace{-3ex}
\qed 
\end{example}

\revm{

\introparagraph{Acyclic queries and join trees}
Let $q$ be a CQ. A {\em join tree} of $q$ is an undirected tree whose nodes are the atoms of~$q$ such that for every two distinct atoms $R$ and $S$, their common variables occur in all atoms on the unique path from $R$ to $S$ in the tree. 
A CQ $q$ is {\em acyclic}\footnote{Throughout this paper, whenever we say that a CQ is acyclic, we mean acyclicity as defined in~\cite{DBLP:journals/jacm/BeeriFMY83}, a notion that today is also known as $\alpha$-acyclicity, to distinguish it from other notions of acyclicity.} if it has a join tree. If $\tau$ is a subtree of a join tree of a query~$q$, we will denote by $q_{\tau}$ the query whose atoms are the nodes of $\tau$.
Whenever $R$ is a node in an undirected tree $\tau$, then $(\tau,R)$ denotes the rooted tree obtained by choosing $R$ as the root of the tree.

\begin{example}
The join tree of the query $q$ in Example~\ref{ex:dbexample} has a single edge between $\relation{Employee}(\underline{x}, y, z)$ and $\relation{Manager}(\underline{y}, w, \texttt{2020})$.
\end{example}

\introparagraph{Attack graphs.}
Let $q$ be an acyclic, self-join-free BCQ with join tree~$\tau$. 
%
For every atom $F \in q$, we define $\keyclosure{F}{q}$ as the set of all variables in $q$ that are functionally determined by $\key{F}$ with respect to all functional dependencies of the form $\fd{\key{G}}{\var{G}}$ with $G\in q\setminus\{F\}$. 
Following~\cite{10.1145/2188349.2188351}, the \emph{attack graph} of $q$ is a directed graph whose vertices are the atoms of~$q$. There is a directed edge, called \emph{attack}, from $F$ to $G$ ($F\neq G$), if on the unique path between $F$ and $G$ in~$\tau$, every two adjacent atoms share a variable not in $\keyclosure{F}{q}$.
An atom without incoming edges in the attack graph is called \emph{unattacked}.
The attack graph of $q$ is used to determine the data complexity of $\cqa{q}$: the attack graph of $q$ is acyclic if and only if $\cqa{q}$ is in \FO~\cite{KoutrisW17}.
\begin{example}
For the query $q$ in Example~\ref{ex:dbexample}, 
$\keyclosure{\relation{Employee}}{q} = \{x\}$ and $\keyclosure{\relation{Manager}}{q} = \{y\}$. It follows that $\relation{Employee}$ attacks $\relation{Manager}$ because the variable $y$ is shared between atoms $\relation{Employee}$ and $\relation{Manager}$ and $y \notin \keyclosure{\relation{Employee}}{q}$. However, $\relation{Manager}$ does not attack $\relation{Employee}$ since the only shared variable $y$ is in $\keyclosure{\relation{Manager}}{q}$.

The attack graph of $q$ is acyclic since it only contains one attack from $\relation{Employee}$ to $\relation{Manager}$. It follows that $\cqa{q}$ is in \FO, as witnessed by the \FO-rewriting in Example~\ref{ex:cqa-prelim}.
\end{example}
}

\section{A Linear-Time Rewriting}\label{sec:rewriting}

Before presenting our linear-time rewriting for $\cqa{q}$, we first provide a motivating example.
Consider the following query on the \database{Company} database shown in Figure~\ref{fig:incon_db_example}:

\renewcommand{\quote}{\list{}{\rightmargin=\leftmargin\topsep=0pt}\item\relax}
\vspace{0.5ex}
\begin{quote}
\emph{
\revm{
Is there an office whose contact person works for the office and, moreover, manages the office since 2020?
}
}
\end{quote}
\vspace{0.5ex}
%

\noindent
This query can be expressed by the following CQ:
%
\revm{
\begin{align*}
\examplequery() \obtainedfrom \relation{Employee}(\underline{x}, y, z), \relation{Manager}(\underline{y}, x, \texttt{2020}), \relation{Contact}(\underline{y}, x).
\end{align*}
}
To the best of our knowledge, the most efficient running time for $\cqa{\examplequery}$ guaranteed by existing systems  is quadratic in the input database size, denoted~$N$.
The problem $\cqa{\examplequery}$ admits an \FO-rewriting by the classification theorem in \cite{DBLP:conf/pods/KoutrisOW21}.
However, the non-recursive Datalog rewriting of $\cqa{\examplequery}$ produced by Conquesto contains cartesian products between two tables, which means that it runs in $\Omega(N^2)$ time in the worst case. 
Also, since $\examplequery$ is not in $\mathcal{C}_{\mathsf{forest}}$, ConQuer cannot produce an \FO-rewriting. 
Both EQUIP and CAvSAT solve the problem through Integer Programming or SAT solvers, which can take exponential time.
One key observation is that $\examplequery$ requires a primary-key to primary-key join and a non-key to non-key join at the same time. As will become apparent in our technical treatment in Section~\ref{sec:datalog}, this property allows us to solve $\cqa{\examplequery}$ in $O(N)$ time, while existing CQA systems will run in more than linear time.

The remainder of this section is organized as follows. In Section~\ref{sec:ppjt}, we introduce the pair-pruning join tree (PPJT). In Section~\ref{sec:datalog}, we consider every Boolean query $q$ having a PPJT and present a novel linear-time non-recursive Datalog program for $\cqa{q}$ (Theorem~\ref{thm:main:boolean}). Finally, we extend our result to all acyclic self-join-free CQs in Section~\ref{sec:nonboolean} (Theorem~\ref{thm:main:full}) .

\subsection{Pair-pruning Join Tree}
\label{sec:ppjt}

Here we introduce the notion of a {\em pair-pruning join tree} (PPJT). We first assume that the query $q$ is connected, and then discuss how to handle disconnected queries at the end of the section.

Recall that an atom in a self-join-free query can be uniquely denoted by its relation name.
For example, we may use $\relation{Employee}$ as a shorthand for the atom $\relation{Employee}(\underline{x},y,z)$ in $\examplequery$.


\begin{definition}[PPJT]
\label{defn:ppjt}
Let $q$ be an acyclic self-join-free BCQ. Let $\tau$ be a join tree of $q$ and $R$ a node in $\tau$. The rooted tree $(\tau, R)$ is a \textit{pair-pruning join tree (PPJT)} of $q$ if for any rooted subtree $(\tau', R')$ of $(\tau, R)$, the atom~$R'$ is unattacked in $q_{\tau'}$.
\end{definition}


\begin{example}
For the join tree $\tau$ in Figure~\ref{fig:join-tree},
the rooted tree $(\tau, \relation{Employee})$ is a PPJT for $\examplequery$. The atom $\relation{Employee}(\underline{x}, y, z)$ is unattacked in $q$. For the child subtree $(\tau_M, \relation{Manager})$ of $(\tau, \relation{Employee})$, the atom $\relation{Manager}(\underline{y}, x, \texttt{2020})$ is also unattacked in the following subquery
\begin{align*}
\examplequery_{\tau_M}() \obtainedfrom \relation{Manager}(\underline{y}, x, \texttt{2020}), \relation{Contact}(\underline{y}, x).
\end{align*}
Finally, for the subtree $(\tau_C, \relation{Contact})$, the atom $\relation{Contact}(\underline{y}, x)$ is also unattacked in the corresponding subquery $\examplequery_{\tau_C}() \obtainedfrom \relation{Contact}(\underline{y}, x).
$
%
%
Hence $(\tau, \relation{Employee})$ is a PPJT of $\examplequery$. \qed
\end{example}

\begin{figure}[t]
\begin{tikzpicture}[level distance=10em,
every node/.style = {line width=1pt, shape=rectangle, text centered, anchor=center, rounded corners, draw, align=center}, grow=right]
\tikzset{blue dotted/.style={draw=blue!50!white, line width=1pt,
                               dash pattern=on 1pt off 4pt on 6pt off 4pt,
                                inner sep=5mm, rectangle, rounded corners}};

\node (E) {$\textsf{Employee}(\underline{x}, y, z)$};
\node (M) [right of=E, node distance=3cm] {$\textsf{Manager}(\underline{y}, x, \texttt{2020})$};
\node (C) [right of=M, node distance=2.8cm] {$\textsf{Contact}(\underline{y}, x)$};

\path[->] (E) edge [line width=1pt] (M);
\path[->] (M) edge [line width=1pt] (C);

\node (box1) [draw=blue, line width=1pt,
                               dash pattern=on 1pt off 4pt on 6pt off 4pt,
                                inner xsep=3mm, inner ysep=4mm, rectangle, rounded corners,  fit = (E) (C)] {};
\node [above left = 0mm and -2mm of box1, draw=none,above] {{\color{blue} $\tau$}};                                
 \node (box2) [draw=red, line width=1pt,
                               dash pattern=on 1pt off 4pt on 6pt off 4pt,
                                inner xsep=2.2mm, inner ysep=2.6mm, rectangle, rounded corners,  fit = (M) (C)] {};    
\node [above left = 0.9mm and -2mm of box2, draw=none,above] {{\color{red} $\tau_M$}};                                   
\node (box3) [draw=brown, line width=1pt,
                               dash pattern=on 1pt off 4pt on 6pt off 4pt,
                                inner xsep=1.3mm, inner ysep=1.2mm, rectangle, rounded corners,  fit = (C) (C)] {};                                      
\node [above left = 2.2mm and -2mm of box3, draw=none,above] {{\color{brown} $\tau_C$}}; 

\end{tikzpicture}
\caption{A pair-pruning join tree (PPJT) of the query $\examplequery$.
}
\label{fig:join-tree}
\end{figure}

\introparagraph{Which queries admit a PPJT?}
As we show next, having a PPJT is a sufficient condition for the existence of an \FO-rewriting. 

\begin{proposition}\label{prop:ppjt}
Let $q$ be an acyclic self-join-free BCQ. If $q$ has a PPJT, then $\cqa{q}$ admits an \FO-rewriting.
\end{proposition}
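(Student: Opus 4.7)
My strategy is to show that the existence of a PPJT implies that the attack graph of $q$ is acyclic, and then invoke the Koutris--Wijsen dichotomy~\cite{KoutrisW17}, which states that $\cqa{q}$ is in \FO precisely when the attack graph of $q$ is acyclic.

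I first plan to record two structural facts about attack graphs that are not stated in the preliminaries but follow directly from the definitions. \emph{(a) Prefix closure.} If $F$ attacks $G$ in $q$ along the unique $\tau$-path $F = H_0, H_1, \ldots, H_m = G$, then for every $0 < \ell < m$ the prefix sub-path witnesses that $F$ also attacks $H_{\ell}$, since the witness condition (each adjacent pair of atoms shares a variable outside $\keyclosure{F}{q}$) holds along every prefix. \emph{(b) Subquery monotonicity.} If $F$ and $G$ both lie in a subtree $\tau'$ of $\tau$ and $F$ attacks $G$ in $q$, then $F$ attacks $G$ in $q_{\tau'}$: the $\tau$-path from $F$ to $G$ stays inside $\tau'$, and since fewer FDs are available in the subquery, $\keyclosure{F}{q_{\tau'}} \subseteq \keyclosure{F}{q}$, so any variable avoiding the larger closure also avoids the smaller one.

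I would then proceed by induction on the number of atoms of $q$. The base case of a single atom is immediate. For the inductive step, let $(\tau, R)$ be the given PPJT with root children $R_1, \ldots, R_k$ rooting subtrees $\tau_1, \ldots, \tau_k$. By Definition~\ref{defn:ppjt}, each $(\tau_i, R_i)$ is itself a PPJT of $q_{\tau_i}$, so by induction every attack graph of $q_{\tau_i}$ is acyclic. The PPJT condition applied to $(\tau, R)$ itself guarantees that $R$ is unattacked in $q$, so any putative cycle in the attack graph of $q$ must avoid $R$. Suppose for contradiction such a cycle exists. If it contains a cross-subtree attack $F \to G$ with $F \in \tau_i$, $G \in \tau_j$, and $i \ne j$, then the $\tau$-path from $F$ to $G$ must traverse $R$; prefix closure~(a) then yields that $F$ attacks $R$ in $q$, contradicting that $R$ is unattacked. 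Hence every attack in the cycle stays within a single subtree, so the whole cycle lies in some $\tau_i$; by monotonicity~(b) it transfers to a cycle in the attack graph of $q_{\tau_i}$, contradicting the inductive hypothesis.

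The main obstacles I anticipate are the boundary cases in the prefix argument, most notably when the attacker $F$ coincides with a subtree root $R_i$, where one must verify that the cross-subtree path from $F$ to $G$ really does pass through $R$ so that the prefix truncation is nondegenerate. I also expect a small bit of care to route around the base case and to confirm that "unattacked in $q_{\tau'}$" in the PPJT definition is compatible with our monotonicity lemma (which works in the ``attack-in-$q$ implies attack-in-$q_{\tau'}$'' direction used above). Once these points are dispatched, the attack graph of $q$ is acyclic, and the Koutris--Wijsen theorem immediately supplies an \FO-rewriting of $\cqa{q}$.
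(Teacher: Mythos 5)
Your proposal is correct, and its core strategy coincides with the paper's: both arguments establish that a PPJT forces the attack graph of $q$ to be acyclic and then invoke the classification theorem to obtain the \FO-rewriting, and both rest on the same two structural facts — that attacks transfer from $q$ to subqueries $q_{\tau'}$ (because $\keyclosure{F}{q_{\tau'}} \subseteq \keyclosure{F}{q}$ and the witnessing path stays inside the subtree), and that an attacker attacks every intermediate atom on the witnessing path. The decomposition differs, though. The paper first reduces an arbitrary cycle to a $2$-cycle $R \leftrightarrow S$ (citing Lemma~3.6 of \cite{KoutrisW17}), takes the \emph{smallest} rooted subtree $(\tau', U)$ containing both atoms, observes that the $R$--$S$ path must pass through $U$, and derives a contradiction with $U$ being unattacked in $q_{\tau'}$ in a single case split ($U = R$ versus $U \neq R$); no induction is needed. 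You instead induct on the number of atoms and handle arbitrary cycles directly, splitting attacks into cross-subtree ones (which would attack the root, contradiction) and within-subtree ones (which push the cycle into a child subtree, contradicting the inductive hypothesis). Your route avoids the external $2$-cycle lemma at the cost of an induction; the paper's route is shorter but leans on two cited lemmas. Both are sound, and the boundary cases you flag (e.g., $F$ equal to a subtree root $R_i$) are indeed harmless, since the root $R$ is still a strictly interior vertex of any cross-subtree path.
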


\reva{
Proposition~\ref{prop:ppjt} is proved in Appendix~\ref{proof:prop:ppjt}, in which we show that the if $q$ has a PPJT, then the attack graph of $q$ must be acyclic.}
We note that not all acyclic self-join-free BCQs with an acyclic attack graph have a PPJT, as demonstrated in the next example.

\begin{example}
\label{ex:noppjt}
Let $q() \obtainedfrom R(\underline{x, w}, y), S(\underline{y, w}, z), T(\underline{w}, z)$. The attack graph of $q$ is acyclic. The only join tree $\tau$ of $q$ is the path $R - S - T$. However, neither $(\tau, R)$ nor $(\tau, S)$ is a PPJT for $q$ since $R$ and $S$ are attacked in $q$; and $(\tau, T)$ is not a PPJT since in its subtree $(\tau', S)$, $S$ is attacked in the subquery that contains $R$ and $S$. \qed
\end{example}

\revb{Fuxman and Miller~\cite{FUXMAN2007610} identified a large class of self-join-free CQs, called $\cforest$, that includes most queries with primary-key-foreign-key joins, path queries, and queries on a star schema, such as found in SSB and TPC-H \cite{DBLP:conf/tpctc/ONeilOCR09, poess2000new}. This class covers most of the SPJ queries seen in practical settings. 
In view of this, the following proposition is of practical significance.

\begin{proposition}
\label{prop:cforest}
Every acyclic BCQ in $\cforest$ has a PPJT.
\end{proposition}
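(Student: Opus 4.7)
The plan is to exploit the structural characterization of $\cforest$: the atoms of any $q \in \cforest$ admit an enumeration $R_1, R_2, \ldots, R_n$ such that for each $i \geq 2$, the atom $R_i$ shares variables with exactly one earlier atom $R_{p(i)}$, and every such shared variable lies in $\key{R_i}$. This enumeration directly induces a rooted tree with root $R_1$ and parent function $p$, in which every edge is ``child-key directed.''

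Given an acyclic, connected BCQ $q \in \cforest$, I would take this rooted tree as the candidate PPJT $(\tau, R_1)$ and first verify it is a join tree of $q$. The running intersection property follows because every shared variable between adjacent atoms lies in the child's primary key; combined with acyclicity of $q$, this forces the occurrences of any variable to be confined along an ancestor chain in $\tau$.

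Next, I would verify the PPJT condition: for every rooted subtree $(\tau', R')$ of $(\tau, R_1)$, the atom $R'$ is unattacked in the subquery $q_{\tau'}$. For any atom $H \in \tau'$ with $H \neq R'$, consider the edge between $H$ and its parent $P$ on the unique path from $H$ to $R'$. By construction, every variable shared between $H$ and $P$ lies in $\key{H}$, and hence trivially in $\keyclosure{H}{q_{\tau'}}$. This means the adjacent pair $(H, P)$ shares no variable outside $\keyclosure{H}{q_{\tau'}}$, which directly breaks the condition for $H$ to attack $R'$. Therefore no atom attacks $R'$ in $q_{\tau'}$, and $R'$ is unattacked as required.

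The main obstacle is extracting the ordered, child-to-parent enumeration above from the more combinatorial ``join forest'' definition of $\cforest$ given by Fuxman and Miller, and showing that the induced rooted tree coincides with a genuine join tree of $q$ when $q$ is acyclic. The disconnected case reduces to applying the construction to each connected component and combining the resulting PPJTs, following the extension discussed at the end of Section~\ref{sec:ppjt}.
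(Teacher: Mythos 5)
Your second half—the verification that key-containment along every child-to-parent edge kills all attacks—is correct and is essentially the paper's own closing argument: for any rooted subtree $(\tau',R')$ and any $H\neq R'$ in it, the first edge of the path from $H$ to $R'$ goes to $H$'s parent, the shared variables there lie in $\key{H}\subseteq\keyclosure{H}{q_{\tau'}}$, so the attack condition already fails at that edge. The problem is everything before that. You never establish that your candidate rooted tree is a join tree of $q$, and this is where the paper's proof does almost all of its work. Your justification ("every shared variable between adjacent atoms lies in the child's primary key; combined with acyclicity of $q$, this forces the occurrences of any variable to be confined along an ancestor chain") is not an argument: acyclicity of $q$ guarantees that \emph{some} join tree exists, not that \emph{this} tree satisfies the running-intersection property. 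The paper proves this in two steps, each using specific $\cforest$ properties you never invoke: Claim~1 shows each component of the Fuxman--Miller join graph is a join tree (using that no atom receives arcs from two distinct atoms and that all nonkey-to-key joins are full), and Claim~2 shows the resulting forest extends to a join tree of all of $q$ via GYO ear removal, using Lemma~2 of Fuxman--Miller that variables shared between distinct components occur only in the component roots. You explicitly flag this as "the main obstacle" and then defer it, which leaves the proof incomplete—the PPJT definition, and the attack relation itself, are only meaningful relative to a join tree.

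Two further inaccuracies in your setup. First, the claimed enumeration in which each $R_i$ "shares variables with exactly one earlier atom" is false for $\cforest$: in $q() \obtainedfrom R(\underline{x},y), S(\underline{y},z), T(\underline{y},w)$, which is in $\cforest$, the atom $T$ shares $y$ with both $R$ and $S$ under any enumeration. (This does not sink the approach, since only adjacency in the tree matters, but the characterization as stated cannot be "extracted" from the $\cforest$ definition.) Second, a join graph with several components does not mean the query is disconnected—two component roots may share key variables, as in $R(\underline{x},y), S(\underline{x},z)$—so the multi-component situation cannot be dispatched by the disconnected-query reduction at the end of Section~\ref{sec:ppjt}. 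The paper instead glues the components into one join tree and handles root-to-root edges in the final case analysis, observing that variables shared by two component roots lie in both keys (otherwise the join graph would have an arc between the components).
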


Furthermore, it is easy to verify that, unlike $\cforest$, PPJT captures \emph{all} \FO-rewritable self-join-free SPJ queries on two tables, a.k.a. binary joins.
For example, the binary join $q_5$ in Section~\ref{sec:synthetic} admits a PPJT but is not in $\cforest$.

Proposition~\ref{prop:cforest} is proved in Appendix~\ref{proof:prop:cforest}.
}


\introparagraph{How to find a PPJT.}
For any acyclic self-join-free BCQ $q$, we can check whether $q$ admits a PPJT via a brute-force search over all possible join trees and roots. If $q$ involves $n$ relations, then there are at most $n^{n-1}$ candidate rooted join trees for PPJT ($n^{n-2}$ join trees and for each join tree, $n$ choices for the root). \revb{For the data complexity of $\cqa{q}$, this exhaustive search runs in constant time since we assume $n$ is a constant. In practice, the search cost is acceptable for most join queries that do not involve too many tables.}

%
Appendix~\ref{sec:no-key-containment} shows that the foregoing brute-force search for $q$ can be optimized to run in polynomial time when~$q$ has an acyclic attack graph and, \revb{when expressed as a rule}, does not contain two distinct body atoms $R(\underline{\vec{x}},\vec{y})$ and $S(\underline{\vec{u}},\vec{w})$ such that every variable occurring in $\vec{x}$ also occurs in~$\vec{u}$. Most queries we observe and used in our experiments fall under this category.

\introparagraph{Main Result.} 
We previously showed that the existence of a PPJT implies an \FO-rewriting that computes the consistent answers. Our main result shows that it also leads to an efficient algorithm that runs in linear time.

\begin{theorem}
\label{thm:main:boolean}
Let $q$ be an acyclic self-join-free BCQ that admits a PPJT, and $\db$ be a database instance of size $N$. Then, there exists an algorithm for $\cqa{q}$ that runs in time $O(N)$.
\end{theorem}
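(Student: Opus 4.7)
The plan is to prove Theorem~\ref{thm:main:boolean} by induction on the PPJT structure, generalising Yannakakis's bottom-up semi-join evaluation so that it filters entire \emph{blocks} of a subtree's root atom atomically. Let $(\tau, R)$ be a PPJT of $q$. For each subtree $\tau'$ rooted at an atom $R'$, let $V_{\tau'}$ denote the variables of $R'$ that also appear in $q\setminus q_{\tau'}$ (the variables through which $\tau'$ communicates with its parent). I would show inductively that one can compute, in $O(N)$ time, a relation $\mathsf{Cert}(\tau')$ over $V_{\tau'}$ consisting of exactly those tuples $\vec{a}$ for which $\cqa{(q_{\tau'})_{[V_{\tau'}\to\vec{a}]}}$ evaluates to true. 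Taking $\tau' = \tau$ yields the Boolean answer to $\cqa{q}$, since $V_{\tau}=\emptyset$.

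For the inductive step, suppose $R'$ has child subtrees $\tau_1,\dots,\tau_k$ in the PPJT. By the induction hypothesis each $\mathsf{Cert}(\tau_i)$ is computed in linear time and stored in a hash index keyed by $V_{\tau_i}$. Because $(\tau',R')$ is a PPJT, $R'$ is unattacked in $q_{\tau'}$, which justifies the following rule: $\vec{a}\in\mathsf{Cert}(\tau')$ iff there is a primary-key value $\vec{c}$ of $R'$ compatible with $\vec{a}$ on $\key{R'}\cap V_{\tau'}$ such that every fact $R'(\underline{\vec{c}},\vec{t})$ in the block $R'(\underline{\vec{c}},*)$ \emph{(i)}~agrees with $\vec{a}$ on the non-key variables in $V_{\tau'}$ and \emph{(ii)}~projects onto a tuple of $\mathsf{Cert}(\tau_i)$ for each $i$. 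Each per-fact check is constant time, so the work at $R'$ is $O(|R'|)$; summed over the constantly many nodes of $\tau$ this yields $O(N)$.

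The main obstacle is justifying the block-wise ``every tuple must succeed'' criterion, which is precisely where unattackedness enters. Intuitively, when $R'$ is unattacked in $q_{\tau'}$, the key $\key{R'}$ already functionally determines, via the keys of the descendant atoms, every variable of $\tau'$ that can reach outside the block $R'(\underline{\vec{c}},*)$; consequently the repair choices in the descendant subtrees can be made independently of the choice made inside that block, and $\cqa{q_{\tau'}}$ fails for $\vec{a}$ at key value $\vec{c}$ iff some individual fact of the block fails to extend. Formalising this proceeds by a secondary induction on $\tau'$: the base case is a singleton PPJT, and the inductive step composes counter-witnesses across children using the join-tree property, which forces two distinct children of $R'$ to share variables only inside $\key{R'}$. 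This rule is exactly the pair-pruning pattern illustrated by the nested \texttt{EXCEPT} rewriting of Example~\ref{ex:cqa-prelim}, applied once per level of the PPJT. Finally, disconnected $q$ is handled by running the procedure independently on each connected component and conjoining the Boolean results.
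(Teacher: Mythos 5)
Your overall strategy is the same as the paper's: a bottom-up induction over the PPJT in which each subtree $(\tau',R')$ contributes a relation over its interface variables (your $\mathsf{Cert}(\tau')$ plays the role of the paper's $R'_{\mathsf{join}}$ predicate, and your block-wise ``every fact of the block must succeed'' criterion is exactly the characterization in Lemma~\ref{lemma:pair-pruning}), with unattackedness of $R'$ in $q_{\tau'}$ invoked precisely where the paper invokes Lemma~4.4 of~\cite{KoutrisW17} to reify the choice of a single $R'$-block as a uniform witness across all repairs. The linear-time accounting via per-block hashing is also the paper's.

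There is one genuine gap in the recursive rule as you state it: a block $R'(\underline{\vec{c}},*)$ must additionally be rejected if it contains a fact that violates the \emph{local} conditions of the atom $R'(\underline{\vec{x}},\vec{y})$ itself, namely a constant occurring in the atom or a repeated variable within the atom (condition~\ref{it:self-pruning} of Lemma~\ref{lemma:pair-pruning}, realized by \textbf{Rule~\ref{rule:self-pruning}}). Your conditions (i) and (ii) inspect only the variables in $V_{\tau'}$ and $V_{\tau_i}$, so for $q()\obtainedfrom R(\underline{x},\constant{c})$ on the instance $\{R(1,\constant{c}),R(1,\constant{d})\}$ they are vacuous: your rule accepts block $1$ and answers ``yes,'' while the repair $\{R(1,\constant{d})\}$ falsifies $q$. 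The fix is routine (add the per-fact local check to the universal quantification over the block) and does not affect the $O(N)$ bound, but without it the claimed equivalence with the semantic $\mathsf{Cert}(\tau')$ is false. A smaller imprecision: the join-tree property only guarantees that distinct children of $R'$ share variables inside $\var{R'}$, not inside $\key{R'}$; the former is what the soundness composition actually needs, so the argument survives, but the statement as written is not what you can derive.
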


It is worth contrasting our result with Yannakakis' algorithm, which computes the result of any acyclic BCQ also in linear time $O(N)$~\cite{DBLP:conf/vldb/Yannakakis81}. Hence, the existence of a PPJT implies that computing $\cqa{q}$ will have the same asymptotic complexity. 

\introparagraph{Disconnected CQs.}
Every disconnected BCQ $q$ can be written as $q = q_1, q_2, \dots, q_n$ where $\var{q_i} \cap \var{q_j} = \emptyset$ for $1 \leq i < j \leq n$ and each $q_i$ is connected. If each $q_i$ has a PPJT, then $\cqa{q}$ can be solved by checking whether the input database is a ``yes''-instance for each $\cqa{q_i}$, by Lemma~B.1 of \cite{DBLP:conf/pods/KoutrisOW21}.

\subsection{The Rewriting Rules}
\label{sec:datalog}

We now show how to produce an efficient rewriting in Datalog \reva{and prove Theorem~\ref{thm:main:boolean}}. In Section~\ref{sec:implementation}, we will discuss how to translate the Datalog program to SQL. Let $q$ be an acyclic self-join-free BCQ with a PPJT $(\tau, R)$ and $\db$ an instance for the problem $\cqa{q}$: does the query $q$ evaluate to true on every repair of $\db$?

\revb{
Let us first revisit Yannakakis' algorithm for evaluating~$q$ on a database~$\db$ in linear time. 
Given a rooted join tree $(\tau, R)$ of $q$, Yannakakis' algorithm visits all nodes in a bottom-up fashion. For every internal node $S$ of $(\tau, R)$, it keeps the tuples in table $S$ that join with every child of $S$ in $(\tau, R)$, where each such child has been visited recursively. In the end, the algorithm returns whether the root table $R$ is empty or not.
Equivalently, Yannakakis' algorithm evaluates~$q$ on~$\db$ by removing tuples from each table that cannot contribute to an answer in~$\db$ at each recursive step.

Our algorithm for CQA proceeds like Yannakakis' algorithm in a bottom-up fashion. At each step, we remove tuples from each table that cannot contribute to an answer to $q$ in at least one repair of $\db$. 
Informally, if a tuple cannot contribute to an answer in at least one repair of $\db$ containing it, then it cannot contribute to a consistent answer to $q$ on $\db$.
Specifically, given a PPJT $(\tau, R)$ of $q$, to compute all tuples of each internal node~$S$ of $(\tau, R)$ that may contribute to a consistent answer, we need to ``prune'' the blocks of $S$ in which there is some tuple that violates either the local selection condition on table $S$, or the joining condition with some child table of $S$ in $(\tau, R)$. 
The term ``pair-pruning'' is motivated by the latter process, where we consider only one pair of tables at a time. This idea is formalized in Algorithm~\ref{fig:ppjt}, where the procedures \textsc{Self-Pruning} and \textsc{Pair-Pruning} prune, respectively, the blocks that violate the local selection condition and the joining condition.

}


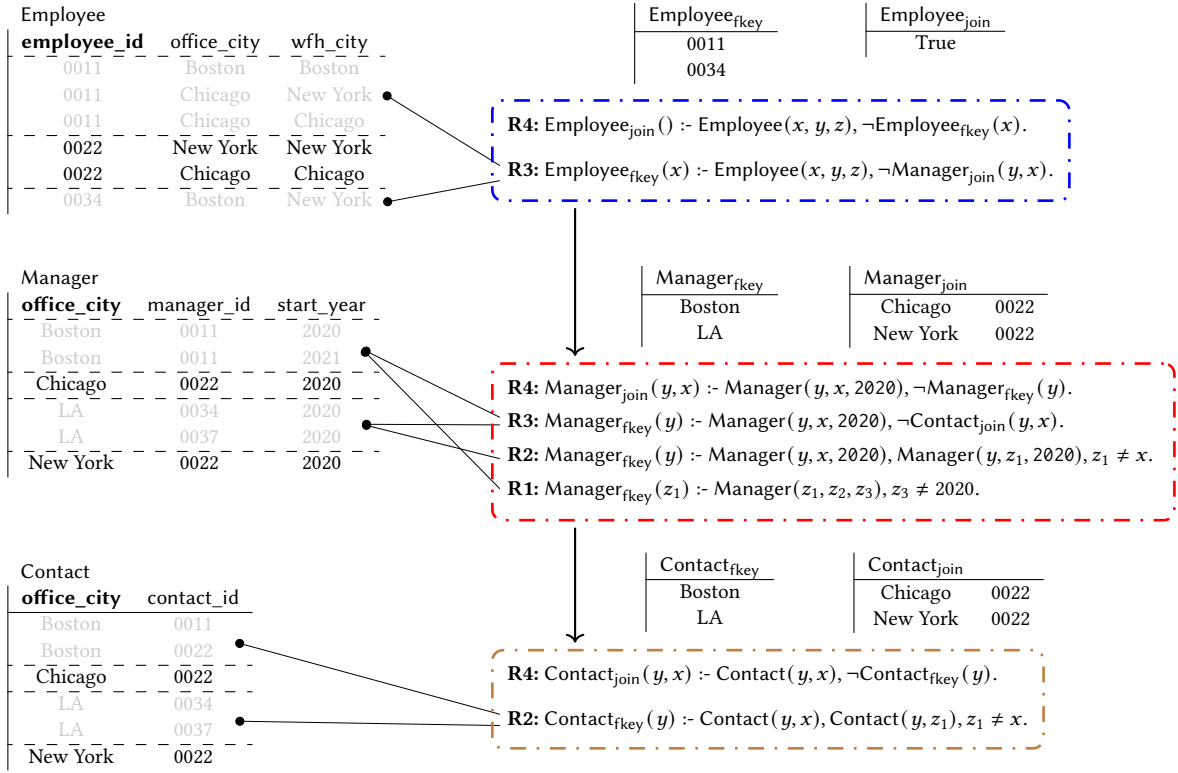
\begin{figure*}[t]
\small

     \begin{tikzpicture}[level distance=10em,
every node/.style = {line width=0.5pt, shape=rectangle, text centered, anchor=center, rounded corners, draw, align=center}, grow=right,
brace/.style = {black,decorate, decoration = {brace, raise=10pt,amplitude=3pt}}]
\tikzset{blue dotted/.style={draw=blue!50!white, line width=1pt,
                               dash pattern=on 1pt off 4pt on 6pt off 4pt,
                                inner sep=5mm, rectangle, rounded corners}};

\node (E-exit) [draw=none] {
     \textbf{R4: }$\relation{Employee}_{\mathsf{join}}() \obtainedfrom \relation{Employee}(x, y, z), \lnot \fk{\relation{Employee}}(x).$};
\node (E-self) [below = 0.6cm of E-exit.west, anchor=west, draw=none] {    
     \textbf{R3: }$\fk{\relation{Employee}}(x)  \obtainedfrom \relation{Employee}(x, y, z), \lnot \relation{Manager}_{\mathsf{join}}(y, x).$};

\node (E-table) [above left=1.5cm and 4cm of E-exit.north west, anchor=north, draw=none] {
\begin{tabular}{|c c c}
\multicolumn{3}{l}{\textsf{Employee}} \\
\primarykey{\textsf{employee\_id}} & \textsf{office\_city} & \textsf{wfh\_city}  \\
\cdashline{1-3}
{\transparent{0.2}0011} & {\transparent{0.2}Boston} &{\transparent{0.2}Boston}  \\
{\transparent{0.2}0011} & {\transparent{0.2}Chicago} & \transparent{0.2}New York  \\
{\transparent{0.2}0011} & \transparent{0.2}Chicago & \transparent{0.2}Chicago  \\
\cdashline{1-3}
0022 & New York & New York  \\
0022 & Chicago & Chicago  \\
\cdashline{1-3}
{\transparent{0.2}0034} & \transparent{0.2}Boston & \transparent{0.2}New York 
\end{tabular}
};

\path[-Circle] (E-self.west)+(0,0.1) edge [line width=0.3pt] node (o3) [draw=none,anchor=west] {} ($(-0.2,0.2)+(E-table.east)$) ;
\path[-Circle] (E-self.west)+(0,-0.1) edge [line width=0.3pt] node (o4) [draw=none,anchor=west] {} ($(-0.2,-1.25)+(E-table.east)$) ;

\node (E-jkey)  [above right=1.5cm and 6cm of E-exit.north west, anchor=north, draw=none] {
\begin{tabular}{|c c}
$\textsf{Employee}_{\mathsf{join}}$ & \\
\cline{1-1}
{True} &
\end{tabular}
};

\node (E-fkey)  [above right=1.5cm and 3cm of E-exit.north west, anchor=north, draw=none] {
\begin{tabular}{|c c}
$\fk{\textsf{Employee}}$ & \\
\cline{1-1}
0011 & \multirow{2}{0.1cm}{}\\
0034 &  \\
\end{tabular}
};

\node (M-exit) [below = 2.9cm of E-self.west, anchor=west, draw=none]{     
      \textbf{R4: }$\relation{Manager}_{\mathsf{join}}(y, x)  \obtainedfrom \relation{Manager}(y, x, \texttt{2020}), \lnot \fk{\relation{Manager}}(y).$}; 
     
\node (M-self) [below =0.9cm of M-exit.west, anchor=west, draw=none]{ 
     $\begin{aligned}
     \textbf{R3: }& \fk{\relation{Manager}}(y)  \obtainedfrom \relation{Manager}(y, x, \texttt{2020}), \lnot \relation{Contact}_{\mathsf{join}}(y, x). \\
     \textbf{R2: }&\fk{\relation{Manager}}(y)  \obtainedfrom \relation{Manager}(y, x, \texttt{2020}), \relation{Manager}(y, z_1, \texttt{2020}), z_1 \neq x. \\
     \textbf{R1: }&\fk{\relation{Manager}}(z_1)  \obtainedfrom \relation{Manager}(z_1, z_2, z_3), z_3 \neq 2020.
     \end{aligned}$
     };

\node (M-table) [below =3.5cm of E-table.west, anchor=west, draw=none] {
\begin{tabular}{|c c c}
\multicolumn{3}{l}{\textsf{Manager}} \\
\primarykey{\textsf{office\_city}} & \textsf{manager\_id} & \textsf{start\_year}  \\
\cdashline{1-3}
{\transparent{0.2}Boston} & {\transparent{0.2}0011} & {\transparent{0.2}2020}  \\
{\transparent{0.2}Boston} & {\transparent{0.2}0011} & {\transparent{0.2}2021} \\
\cdashline{1-3}
 Chicago &  0022 & 2020  \\
\cdashline{1-3}
{\transparent{0.2}LA} & {\transparent{0.2}0034} & {\transparent{0.2}2020} \\
{\transparent{0.2}LA} & {\transparent{0.2}0037} & {\transparent{0.2}2020}  \\
\cdashline{1-3}
{New York} & {0022} & 2020
\end{tabular}
};

\path[-Circle] (M-self.west)+(0,-0.4) edge [line width=0.3pt] node (o1) [draw=none,anchor=west] {} ($(-0.4,0.3)+(M-table.east)$) ;
\path[-Circle] (M-self.west)+(0,0) edge [line width=0.3pt] node (o1) [draw=none,anchor=west] {} ($(-0.4,-0.7)+(M-table.east)$) ;
\path[-Circle] (M-self.west)+(0,0.45) edge [line width=0.3pt] node (o1) [draw=none,anchor=west] {} ($(-0.4,-0.7)+(M-table.east)$) ;
\path[-Circle] (M-self.west)+(0,0.55) edge [line width=0.3pt] node (o1) [draw=none,anchor=west] {} ($(-0.4,0.3)+(M-table.east)$) ;

\node (M-jkey) [above right=1.5cm and 6cm of M-exit.north west, anchor=north, draw=none] {
\begin{tabular}{|c c}
$\textsf{Manager}_{\mathsf{join}}$ & \\
\hline
Chicago &  0022 \\
New York & 0022 
\end{tabular}
};

\node (M-fkey)   [above right=1.5cm and 3cm of M-exit.north west, anchor=north, draw=none] {
\begin{tabular}{|c c}
$\fk{\textsf{Manager}}$ & \\
\cline{1-1}
{Boston} & \\
{LA} &  \\
\end{tabular}
};

\node (C-exit) [below =3.8 cm of M-exit.west, anchor=west, draw=none] {
     \textbf{R4: }$\relation{Contact}_{\mathsf{join}}(y, x)  \obtainedfrom \relation{Contact}(y, x), \lnot \fk{\relation{Contact}}(y).$};

\node (C-self) [below = 0.6cm of C-exit.west, anchor = west, draw=none] {
     \textbf{R2: }$\fk{\relation{Contact}}(y)  \obtainedfrom \relation{Contact}(y, x), \relation{Contact}(y, z_1), z_1 \neq x.$};

\node (C-table) [below =3.9cm of M-table.west, anchor=west, draw=none] {
\begin{tabular}{|c c }
\multicolumn{2}{l}{\textsf{Contact}} \\
\primarykey{\textsf{office\_city}} & \textsf{contact\_id}  \\
\cline{1-2}
\transparent{0.2}Boston & {\transparent{0.2}0011}  \\
{\transparent{0.2}Boston} & {\transparent{0.2}0022}  \\
\cdashline{1-2}
{Chicago} &  {0022}  \\
\cdashline{1-2}
\transparent{0.2}LA & \transparent{0.2}0034 \\
{\transparent{0.2}LA} & \transparent{0.2}0037  \\
\cdashline{1-2}    
{New York} & {0022}  
\end{tabular}
};

\path[-Circle] (C-self.west)+(0,0.1) edge [line width=0.3pt] node (o1) [draw=none,anchor=west] {} ($(-0.4,0.3)+(C-table.east)$) ;
\path[-Circle] (C-self.west)+(0,-0.05) edge [line width=0.3pt] node (o2) [draw=none,anchor=west] {} ($(-0.4,-0.75)+(C-table.east)$) ;

\node (C-jkey)  [above right=1.5cm and 6cm of C-exit.north west, anchor=north, draw=none] {
\begin{tabular}{|c c}
$\textsf{Contact}_{\mathsf{join}}$ & \\
\hline
Chicago &  0022 \\
New York & 0022 
\end{tabular}
};

\node (C-fkey) [above right=1.5cm and 3cm of C-exit.north west, anchor=north, draw=none] {
\begin{tabular}{|c c}
$\fk{\relation{Contact}}$ \\
\cline{1-1}
{Boston}& \multirow{2}{0cm}{} \\
{LA}
\end{tabular}
};

\path[->] (E-self.south west)+(1,-0.2) edge [line width=0.8pt] node (PP) [draw=none,anchor=west] 
{} ($(1,0.2)+(M-exit.north west)$) ;

\path[->] (M-self.south west)+(1,-0.2) edge [line width=0.8pt] node (PP2) [draw=none,anchor=west] 
{} ($(C-exit.north west)+(1,0.2)$) ;

\node (box1) [draw=blue, line width=1pt,
                               dash pattern=on 1pt off 4pt on 6pt off 4pt,
                                inner sep=1mm, rectangle, rounded corners,  fit = (E-exit) (E-self)] {};

 \node (box2) [draw=red, line width=1pt,
                               dash pattern=on 1pt off 4pt on 6pt off 4pt,
                                inner sep=1mm, rectangle, rounded corners,  fit = (M-exit) (M-self)] {};    

\node (box3) [draw=brown, line width=1pt,
                               dash pattern=on 1pt off 4pt on 6pt off 4pt,
                                inner sep=1mm, rectangle, rounded corners,  fit = (C-exit) (C-self)] {};                                      
                                                        
\end{tikzpicture}
\caption{\revm{The non-recursive Datalog program for evaluating $\cqa{\examplequery}$ together with an example execution on the \textbf{Company} database in Figure~\ref{fig:incon_db_example}. The faded-out rows denote blocks that are removed since they do not contribute to any consistent answer. The arrows denote which rules remove which blocks (some blocks can be removed by multiple rules).}}
\label{fig:example-rewriting}
\end{figure*}

\revb{
%
To ease the exposition of the rewriting, we now present both procedures in Datalog syntax. We will use two predicates for every atom $S$ in the tree (let $T$ be the unique parent of $S$ in $\tau$):
}
\begin{itemize}
\item the predicate $\fk{S}$ has arity equal to $|\key{S}|$ and collects the primary-key values of the $S$-table that cannot contribute to a consistent answer for $q$~\footnote{The \textsf{f} in $\textsf{fkey}$ is for ``\underline{f}alse key''. 
}; and
\item the predicate $\jk{S}$ has arity equal to $|\var{S} \cap \var{T}|$ and collects the values for these variables in the $S$-table that may contribute to a consistent answer.
\end{itemize}

%
\SetKwInput{KwInput}{Input}                
\SetKwInput{KwOutput}{Output}              
\begin{algorithm}[!ht]
\DontPrintSemicolon
  
  \KwInput{PPJT $(\tau, R)$ of $q$}
  \KwOutput{a Datalog program $P$ deciding $\cqa{q}$}

  $P:=\emptyset$ \; 
  $P:=P\cup\textsc{Self-Pruning}(R)$ \;
  \ForEach{child node $S$ of $R$ in $\tau$}{
  $P:=P\cup\textsc{PPJT-Rewriting}(\tau, S)$ \;
  $P:=P\cup\textsc{Pair-Pruning}(R, S)$ \;
  }
  $P:=P\cup\textsc{Exit-Rule}(R)$ \;
  \Return $P$ \;
\caption{$\textsc{PPJT-Rewriting}(\tau, R)$}
    \label{fig:ppjt}
\end{algorithm}

Figure~\ref{fig:example-rewriting} depicts how each step generates the rewriting rules for  $\examplequery$.
We now describe how each step is implemented in detail. 

\smallskip
\noindent \underline{\textsc{Self-Pruning}($R$)}: Let $R(\underline{x_1, \dots x_k}, x_{k+1}, \dots, x_n)$, where $x_i$ can be a variable or a constant. \revb{The first rule finds the primary-key values of the $R$-table that can be pruned because some tuple with that primary-key violates the local selection conditions imposed on $R$.}


\begin{rules}
\label{rule:self-pruning}
If $x_i = c$ for some constant $c$, we add the rule 
\begin{align*}
\fk{R}(z_1, \dots, z_k) \obtainedfrom R(z_1, \dots, z_n), z_i \neq c.
\end{align*}
If for some variable $x_i$ there exists $j<i$ with $x_i = x_j$, we add the rule
\begin{align*}
\fk{R}(z_1, \dots, z_k) \obtainedfrom R(z_1, \dots, z_n), z_i \neq z_j.
\end{align*}
Here, $z_1, \dots, z_n$ are fresh distinct variables.
\end{rules}

The second rule finds the primary-key values of the $R$-table that can be pruned because $R$ joins with its parent $T$ in the tree. The underlying intuition is that if some $R$-block of the input database contains two tuples that disagree on a non-key position that is used in an equality-join with~$T$, then for every given $T$-tuple~$t$, we can pick an $R$-tuple in that block that does not join with $t$. \revb{Therefore, that $R$-block cannot contribute to a consistent answer.}


\begin{rules}
\label{rule:good-join}
 For each variable $x_i$ with $i>k$ (so in a non-key position) such that $x_i \in \var{T}$, we produce a rule
\begin{align*}
     \fk{R}(x_1, \dots, x_k)  \obtainedfrom & R(x_1, \dots, x_k, x_{k+1}, \dots, x_n), \\
     & R(x_1, \dots, x_k, z_{k+1}, \dots,  z_k), z_i \neq x_i.
\end{align*}
where $z_{k+1}, \dots, z_n$ are fresh variables.
\end{rules}

\begin{example}
The self-pruning phase on$(\tau_M, \relation{Manager})$ produces one rule using Rule~\ref{rule:self-pruning}. When executed on the \database{Company} database, the key Boston is added to $\fk{\relation{Manager}}$, \revb{since the tuple (Boston, 0011, 2021) has $\textsf{start\_year}  \neq 2020$.}

Finally, the self-pruning phase on the PPJT $(\tau_C, \relation{Contact})$ produces one rule using Rule~\ref{rule:good-join} (here $x$ is the non-key join variable). Hence, the keys Boston and LA will be added to $\fk{\relation{Contact}}$. \qed
\end{example}

\smallskip
\noindent \underline{\textsc{Pair-Pruning}($R,S$)}: Suppose that $q$ contains the atoms $R(\underline{\vec{x}}, \vec{y})$ and $S(\underline{\vec{u}}, \vec{v})$, where the $S$-atom is a child of the $R$-atom in the PPJT. Let $\vec{w}$ be a sequence of distinct variables containing all (and only) variables in $\var{R} \cap \var{S}$.
\revb{The third rule prunes all $R$-blocks containing some tuple that cannot join with any $S$-tuple to contribute to a consistent answer.}




\begin{rules}
\label{rule:bad-key-simplified}
Add the rule
\begin{align*}
     \fk{R}(\vec{x}) & \obtainedfrom R(\vec{x}, \vec{y}), \lnot S_{\mathsf{join}}(\vec{w}),
\end{align*}
where the rules for $S_{\mathsf{join}}$ will be defined in Rule~\ref{rule:return-simplified}.
\end{rules}
The rule is safe because every variable in $\vec{w}$ occurs in $R(\vec{x}, \vec{y})$.


\begin{example}
Figure~\ref{fig:example-rewriting} shows the two pair-pruning rules generated (in general, there will be one pair-pruning rule for each parent-child edge in the PPJT. In both cases, the join variables are $\{y,x\}$. \revb{For the table  \relation{Employee}, the rule prunes the two blocks with keys $0011, 0034$ and adds them to $\fk{\relation{Employee}}$.  } \qed
\end{example}

\smallskip
\noindent \underline{\textsc{Exit-Rule}($R$)}: Suppose that $q$ contains $R(\underline{\vec{x}}, \vec{y})$. 
If $R$ is an internal node, let $\vec{w}$ be a sequence of distinct variables containing all (and only) the join variables of $R$ and its parent node in $\tau$. If $R$ is the root node, let~$\vec{w}$ be the empty vector.
\revb{The exit rule removes the pruned blocks of~$R$ and projects on the variables in $\vec{w}$. If~$R$ is an internal node, the resulting tuples in the projection could contribute to a consistent answer, and will be later used for pair pruning; if $R$ is the root, the projection returns the final result.}

\begin{rules}
\label{rule:return-simplified}
 If $\fk{R}$ exists in the head of a rule, we produce the rule
\begin{align*}
     {R}_{\mathsf{join}}(\vec{w}) & \obtainedfrom R(\vec{x}, \vec{y}), \lnot \fk{R}(\vec{x}).
\end{align*}
Otherwise, we produce the rule
\begin{align*}
     {R}_{\mathsf{join}}(\vec{w}) & \obtainedfrom R(\vec{x}, \vec{y}).
\end{align*}
\end{rules}

%

\begin{example}
Figure~\ref{fig:example-rewriting} shows the three exit rules for $\examplequery$---one rule for each node in the PPJT. 
The boolean predicate $\jk{\relation{Employee}}$ determines whether \texttt{True} is the consistent answer to the query. \qed
\end{example}

\introparagraph{Runtime Analysis} It is easy to see that \textbf{Rule 1, 3, and 4} can be evaluated in linear time. We now argue how to evaluate \textbf{Rule 2} in linear time as well. Indeed, instead of performing the self-join on the key, it suffices to create a hash table using the primary key as the hash key (which can be constructed in linear time). Then, for every value of the key, we can easily check whether all tuples in the block have the same value at the $i$-th attribute.



\revb{
\introparagraph{Sketch of Correctness}
Let $q$ be an acyclic self-join-free BCQ with a PPJT $(\tau, R)$ and $\db$ an instance for $\cqa{q}$. The easier property to show is the \emph{soundness} of our rewriting \textbf{Rules~1, 2, 3, 4}: if the predicate $R_{\mathsf{join}}$ is nonempty when our rewriting is executed on $\db$, then every repair of $\db$ must necessarily satisfy~$q$. 
The argumentation uses a straightforward bottom-up induction on the PPJT: for every rooted subtree $(\tau',S)$ of $(\tau,R)$, the tuples in $S_{\mathsf{join}}$ are consistent answers to the corresponding subquery $q_{\tau'}$ projected on the join variables with the parent of~$S$ (i.e., on the variables~$\vec{w}$ in \textbf{Rule 4}).

The more difficult property to show is the \emph{completeness} of our rewriting rules: if every repair of $\db$ satisfies $q$, then the predicate $R_{\mathsf{join}}$ must be nonempty after executing the rules on $\db$.
The crux here is a known result (see, for example, Lemma~4.4 in~\cite{KoutrisW17}) which states that for every unattacked atom $R$ in a self-join-free BCQ $q$, the following holds true: 
\begin{quote}
\emph{if every repair of $\db$ satisfies $q$, then there is a nonempty block $\block$ of $R$ such that in each repair of $\db$, the query~$q$ can be made true by using the (unique) tuple of $\block$ in that repair.}
\end{quote}
Our recursive construction of a PPJT $(\tau, R)$ ensures that for each rooted subtree $(\tau', S)$ of $(\tau, R)$, $S$ is unattacked in $q_{\tau'}$. 
Therefore, it suffices to compute the blocks in $S$ that could contribute to a consistent answer to~$q_{\tau'}$ at each recursive step in a bottom-up fashion, eventually returning the consistent answer to $q$ in $\db$. 

The soundness and completeness arguments taken together imply that our rewriting rules return only and all consistent answers.
The formal correctness proof is in Appendix~\ref{sec:correctness-boolean}.
}

\subsection{Extension to Non-Boolean Queries}
\label{sec:nonboolean}

Let $q(\vec{u})$ be an acyclic self-join-free CQ with free variables $\vec{u}$, and $\db$ be a database instance. If $\vec{c}$ is a sequence of constants of the same length as $\vec{u}$, we say that $\vec{c}$ is a {\em consistent answer} to $q$ on $\db$ if $\vec{c} \in q(I)$ in every repair $I$ of $\db$.
Furthermore, we say that $\vec{c}$ is a {\em possible answer} to $q$ on $\db$ if $\vec{c} \in q(\db)$. It can be easily seen that for CQs every consistent answer is a possible answer.

 Lemma~\ref{lemma:parameterization} reduces computing the consistent answers of non-Boolean queries to that of Boolean queries.

\begin{lemma}
\label{lemma:parameterization}
Let $q$ be a CQ with free variables $\vec{u}$, and let $\vec{c}$ be a sequence of constants of the same length as $\vec{u}$. Let $\db$ be an database instance. Then $\vec{c}$ is a consistent answer to $q$ on $\db$ 
if and only if $\db$ is a ``yes''-instance for $\cqa{q_{[\vec{u} \rightarrow \vec{c}]}}$. 
\end{lemma}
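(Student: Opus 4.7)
The plan is to unpack both sides of the stated equivalence in terms of the standard CQ semantics and reduce them to a single underlying fact about substitution. Since both sides are universal statements quantifying over all repairs of $\db$, the proof reduces to establishing, for a single (arbitrary) database instance $I$, the equivalence $\vec{c} \in q(I) \Longleftrightarrow I \models q_{[\vec{u} \rightarrow \vec{c}]}$.

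First, I would recall the semantics of a CQ with free variables $\vec{u}$: $\vec{c} \in q(I)$ means that there is a homomorphism $h$ from $\mathsf{body}(q)$ into $I$ such that $h(\vec{u}) = \vec{c}$. On the Boolean side, $I \models q_{[\vec{u} \rightarrow \vec{c}]}$ means that there is a homomorphism $h'$ from the body of $q_{[\vec{u} \rightarrow \vec{c}]}$ into $I$; since substitution has hard-coded the constants $\vec{c}$ into the positions of $\vec{u}$, such an $h'$ must map those positions to exactly $\vec{c}$. The two conditions are therefore in bijective correspondence: given $h$ from the first side, its restriction to the variables of $q_{[\vec{u} \rightarrow \vec{c}]}$ (i.e., the non-free variables of $q$) is a valid $h'$, and conversely any $h'$ can be extended by setting the substituted positions to $\vec{c}$ to obtain a valid $h$.

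Having established the single-instance equivalence, the lemma follows by quantifying universally over all repairs $I$ of $\db$: $\vec{c}$ is a consistent answer iff $\vec{c} \in q(I)$ for every repair $I$, iff $I \models q_{[\vec{u} \rightarrow \vec{c}]}$ for every repair $I$, which is precisely the statement that $\db$ is a ``yes''-instance for $\cqa{q_{[\vec{u} \rightarrow \vec{c}]}}$.

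I do not anticipate any real obstacle here; the lemma is essentially definitional, serving as the bridge that licenses using the Boolean machinery of Sections~\ref{sec:ppjt} and~\ref{sec:datalog} (and in particular Theorem~\ref{thm:main:boolean}) to handle non-Boolean queries by parameterizing over candidate answer tuples. The only point worth stating carefully in the write-up is that the substitution $q_{[\vec{u} \rightarrow \vec{c}]}$ preserves the primary-key structure and the set of atoms of $q$ (only changing variables to constants), so that repairs of $\db$ are evaluated under the same CQ body on both sides of the equivalence.
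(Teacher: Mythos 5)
Your proposal is correct and follows essentially the same route as the paper's proof in Appendix~\ref{sec:lemma:parameterization}: both arguments fix an arbitrary repair $r$ and establish the per-instance equivalence $\vec{c}\in q(r)\Leftrightarrow r\models q_{[\vec{u}\rightarrow\vec{c}]}$ via the correspondence between valuations (restricting a witnessing valuation in one direction and extending it by $\theta(\vec{u})=\vec{c}$ in the other), then conclude by quantifying over all repairs.
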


 If $q$ has free variables $\vec{u} = (u_1, u_2, \dots, u_n)$, we say that $q$ admits a PPJT if the Boolean query $q_{[\vec{u} \rightarrow \vec{c}]}$ admits a PPJT, where $\vec{c} = (c_1, c_2, \dots, c_n)$ is a sequence of distinct constants.
 We can now state our main result for non-Boolean CQs.

\begin{theorem}\label{thm:main:full}
Let $q$ be an acyclic self-join-free Conjunctive Query that admits a PPJT, and $\db$ be a database instance of size $N$.
Let $\mathsf{OUT}_p$ be the set of possible answers to $q$ on $\db$,
and $\mathsf{OUT}_c$ the set of consistent answers to $q$ on $\db$. Then:
\begin{packed_enum}
\item the set of consistent answers can be computed in time $O(N \cdot |\mathsf{OUT}_p|)$; and
\item moreover, if $q$ is full, the set of consistent answers can be computed in time $O(N + |\mathsf{OUT}_c|)$.
\end{packed_enum}
\end{theorem}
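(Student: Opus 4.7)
The plan is to derive both parts from results already in hand: Part~1 follows from Lemma~\ref{lemma:parameterization} together with the Boolean Theorem~\ref{thm:main:boolean}, while Part~2 is obtained via a direct structural argument exploiting fullness.

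For Part~1, I would first compute the set $\mathsf{OUT}_p$ of possible answers to $q$ on $\db$ by Yannakakis' algorithm for acyclic CQs, which runs in time $O(N\cdot|\mathsf{OUT}_p|)$. Then, for each $\vec{c}\in\mathsf{OUT}_p$, I would invoke the Boolean procedure of Theorem~\ref{thm:main:boolean} on $q_{[\vec{u}\rightarrow\vec{c}]}$ and $\db$ to decide whether $\vec{c}$ is a consistent answer, justified by Lemma~\ref{lemma:parameterization}. Each such call costs $O(N)$, so the total is $O(N\cdot|\mathsf{OUT}_p|)$. A small preliminary lemma is needed to justify the use of Theorem~\ref{thm:main:boolean}: the notion of a PPJT is defined only via a substitution by distinct constants, so I would show that if $q$ admits a PPJT then $q_{[\vec{u}\rightarrow\vec{c}]}$ admits a PPJT for every sequence $\vec{c}$ of constants. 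The intuition is that replacing free variables by constants cannot create new variable-sharing among body atoms, so the underlying join tree is preserved, and can only \emph{remove} attacks in the attack graph of every rooted subtree, so the PPJT remains valid; any extra equalities introduced by repeated constants show up only as local selections and are absorbed by the Self-Pruning rule.

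For Part~2, I would exploit the fact that when $q$ is full, the head $\vec{u}$ lists every variable of $q$, so a possible answer $\vec{c}$ uniquely determines the witnessing homomorphism $h$ that maps each variable $u_i$ to $c_i$. Consequently, $\vec{c}$ is consistent if and only if, for every atom $F$ of $q$, the fact $h(F)$ belongs to every repair of $\db$, which happens precisely when $h(F)$ is the unique tuple in its block. The algorithm is therefore to first restrict each relation of $\db$ to the tuples residing in singleton blocks (linear in $N$), and then enumerate the answers of $q$ on the restricted instance $\db'$ by Yannakakis' algorithm. Since every variable is free, $q$ is free-connex acyclic, and Yannakakis enumerates its output in time $O(|\db'|+|\mathsf{OUT}_c|)=O(N+|\mathsf{OUT}_c|)$, yielding the claimed total bound.

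The step requiring most care is the invariance of the PPJT property under arbitrary, possibly non-distinct, constant substitutions for Part~1; once this is established, everything else reduces cleanly to Theorem~\ref{thm:main:boolean} and standard Yannakakis-style enumeration, and the full-query case of Part~2 is handled by the straightforward singleton-block characterization above.
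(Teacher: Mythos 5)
Your algorithms coincide exactly with the paper's: for Part~1, materialize the possible answers and filter them with $|\mathsf{OUT}_p|$ calls to the Boolean procedure of Theorem~\ref{thm:main:boolean}, justified by Lemma~\ref{lemma:parameterization}; for Part~2, delete every block of size at least two and evaluate $q$ on the residue $\db^c$ with Yannakakis. Where you genuinely diverge is in the justification of Part~2. The paper argues that every consistent answer survives on $\db^c$ by viewing the fully grounded query $q_{[\vec{u}\rightarrow\vec{c}]}$ as a disconnected BCQ whose \FO-rewriting degenerates to the self-pruning rules (Rule~\ref{rule:self-pruning}) followed by an evaluation on $\db^c$, invoking Lemma~B.1 of \cite{DBLP:conf/pods/KoutrisOW21}; you instead observe that for a full self-join-free query the witnessing homomorphism $h$ is uniquely determined by the answer tuple, so $\vec{c}$ is consistent if and only if every $h(F)$ is the sole tuple of its block. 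Your argument is more elementary and self-contained, and it delivers both inclusions of $q(\db^c)=\mathsf{OUT}_c$ in one stroke, whereas the paper spells out only one direction and leaves the converse implicit. You also flag, and sketch a proof of, a point the paper silently assumes in Part~1: that a PPJT, which is defined for non-Boolean $q$ via grounding with \emph{distinct} fresh constants, persists when $\vec{u}$ is instantiated with an arbitrary, possibly repeating, answer tuple. This does hold --- the join-tree condition and the attack graphs of all subqueries $q_{\tau'}$ depend only on which variables are grounded, not on the identities of the substituted constants, so the two groundings induce identical attack graphs --- and making it explicit tightens an argument the published proof glosses over.
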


To contrast this with Yannakakis result, for acyclic full CQs we have a running time of $O(N + |\mathsf{OUT}|)$, and a running time of $O(N \cdot |\mathsf{OUT}|)$ for general CQs. 

%

\begin{proof}[Proof Sketch]

\reva{Our algorithm first evaluates $q$ on $\db$ to yield a set $S$ of size $|\mathsf{OUT}_{p}|$ in time $O(N \cdot |\mathsf{OUT}_p|)$.  
We then return all answers $\vec{c} \in S$ such that $\db$ is a ``yes''-instance for $\cqa{q_{[\vec{u} \rightarrow \vec{c}]}}$, which runs in $O(N)$ by Theorem~\ref{thm:main:boolean}. This approach gives an algorithm with running time $O(N \cdot |\mathsf{OUT}_p|)$.

If $q$ is full, we proceed by (i) removing all blocks with at least two tuples from $\db$ to yield $\db^c$ and (ii) evaluating $q$ on $\db^c$. 
In our algorithm, step (i) runs in $O(N)$ and since $q$ is full, step (ii) runs in time $O(N + |\mathsf{OUT}_c|)$. The correctness proof of these algorithms are in Appendix~\ref{proof:thm:main:full}.
}
\end{proof}


\introparagraph{Rewriting for non-Boolean Queries}
Let $\vec{c} = (c_1, c_2, \dots, c_n)$ be a sequence of fresh, distinct constants.
If $q_{[\vec{u} \rightarrow \vec{c}]}$ has a PPJT, the Datalog rewriting for $\cqa{q}$ can be obtained as follows:
\begin{packed_enum}
\item Produce the program $P$ for $\cqa{q_{[\vec{u} \rightarrow \vec{c}]}}$ using the rewriting algorithm for Boolean queries (Subsection~\ref{sec:datalog}).
\item Replace each occurrence of the constant $c_i$ in $P$ with the free variable $u_i$.
\item Add the rule: $\mathsf{ground}(\vec{u}) \obtainedfrom \mathsf{body}(q)$.
\item For a relation $T$, let $\vec{u}_T$ be a sequence of all free variables that occur in the subtree rooted at $T$. Then, append $\vec{u}_T$ to every occurrence of $\jk{T}$ and $\fk{T}$.
\item For any rule of $P$ that has a free variable $u_i$ that is unsafe, add the atom $\mathsf{ground}(\vec{u})$ to the rule.
\end{packed_enum}

\begin{example}
Consider the non-Boolean query 
$$\examplequerynb(w) \obtainedfrom \mathsf{Employee}(\underline{x}, y, z), \mathsf{Manager}(\underline{y}, x, w), \mathsf{Contact}(\underline{y}, x).$$ 

Note that the constant \texttt{2020} in $\examplequery$ is replaced by the free variable $w$ in $\examplequerynb$. Hence, the program $P$ for $\cqa{\examplequerynb_{[w \rightarrow c]}}$ is the same as Figure~\ref{fig:example-rewriting}, with the only difference that \texttt{2020} is replaced by the constant $c$.
The ground rule produced is:
$$\mathsf{ground}(w) \obtainedfrom \mathsf{Employee}({x}, y, z), \mathsf{Manager}({y}, x, w), \mathsf{Contact}({y}, x),$$
and Figure~\ref{fig:nonboolean-yannakakis} shows how Yannakakis' algorithm evaluates $\examplequerynb$.  

To see how the rule of $P$ would change for the non-Boolean case, consider the self-pruning rule for $\mathsf{Contact}$. This rule would remain as is, because it contains no free variable and the predicate $\fk{\relation{Contact}}$ remains unchanged. In contrast, consider the first self-pruning rule for $\mathsf{Manager}$, which in $P$ would be:
$$ \fk{\relation{Manager}}(y_1)  \obtainedfrom \relation{Manager}(y_1, y_2, y_3), y_3 \neq w. $$

Here, $w$ is unsafe, so we need to add the atom $\mathsf{ground}(w)$. Additionally, $w$ is now a free variable in the subtree rooted at $\relation{Manager}$, so the predicate $\fk{\relation{Manager}}(y_1)$ becomes $\fk{\relation{Manager}}(y_1,w) $. The transformed rule will be:
$$ \fk{\relation{Manager}}(y_1,w)  \obtainedfrom \relation{Manager}(y_1, y_2, y_3), y_3 \neq w, \mathsf{ground}(w).$$
The full rewriting for $\examplequerynb$ can be seen in Figure~\ref{fig:example-nonboolean}. \qed
\end{example}

The above rewriting process may introduce cartesian products in the rules. In the next section, we will see how we can tweak the rules in order to avoid this inefficiency.

\begin{figure*}[t]
\small
\subfloat[Yannakakis' Algorithm]{
     \begin{tikzpicture}[level distance=10em,
     every node/.style = {line width=0.5pt, shape=rectangle, text centered, anchor=center, rounded corners, draw, align=center}, grow=right,
     brace/.style = {black,decorate, decoration = {brace, raise=10pt,amplitude=3pt}}]
     \tikzset{blue dotted/.style={draw=blue!50!white, line width=1pt,
                                    dash pattern=on 1pt off 4pt on 6pt off 4pt,
                                     inner sep=5mm, rectangle, rounded corners}};

     \node (E-exit) [draw=none] {$\relation{Employee}_{\mathsf{join}}(w) \obtainedfrom \relation{Employee}(x, y, z), \relation{Manager}_{\mathsf{join}}(y, x, w).$};

     \node (M-exit) [below = 2.1 cm of E-exit.west, anchor=west, draw=none]{       
           $\relation{Manager}_{\mathsf{join}}(y, x,w)  \obtainedfrom \relation{Manager}(y, x, w), \relation{Contact}_{\mathsf{join}}(y, x).$}; 
     
     \node (C-exit) [below =2.15 cm of M-exit.west, anchor=west, draw=none] {
          $\relation{Contact}_{\mathsf{join}}(y, x)  \obtainedfrom \relation{Contact}(y, x).$};

     \path[->] (E-exit.south west)+(3,-0.2) edge [line width=0.8pt] node (PP) [draw=none,anchor=west] 
     {} ($(3,0.2)+(M-exit.north west)$) ;

     \path[->] (M-exit.south west)+(3,-0.2) edge [line width=0.8pt] node (PP2) [draw=none,anchor=west] 
     {} ($(C-exit.north west)+(3,0.2)$) ;

     \node (box1) [draw=blue, line width=1pt,
                                    dash pattern=on 1pt off 4pt on 6pt off 4pt,
                                     inner sep=1mm, rectangle, rounded corners,  fit = (E-exit)] {};
     
      \node (box2) [draw=red, line width=1pt,
                                    dash pattern=on 1pt off 4pt on 6pt off 4pt,
                                     inner sep=1mm, rectangle, rounded corners,  fit = (M-exit)] {};    

     \node (box3) [draw=brown, line width=1pt,
                                    dash pattern=on 1pt off 4pt on 6pt off 4pt,
                                     inner sep=1mm, rectangle, rounded corners,  fit = (C-exit)] {};
     \node [below=0.4cm of box3, white] {};
     \end{tikzpicture}     
     \label{fig:nonboolean-yannakakis}
}
\quad
\subfloat[PPJT extended to non-Boolean queries]{
     \begin{tikzpicture}[level distance=10em,
     every node/.style = {line width=0.5pt, shape=rectangle, text centered, anchor=center, rounded corners, draw, align=center}, grow=right,
     brace/.style = {black,decorate, decoration = {brace, raise=10pt,amplitude=3pt}}]
     \tikzset{blue dotted/.style={draw=blue!50!white, line width=1pt,
                                    dash pattern=on 1pt off 4pt on 6pt off 4pt,
                                     inner sep=5mm, rectangle, rounded corners}};

     \node (ground) [draw=none] {
          $\mathsf{ground}(w) \obtainedfrom \mathsf{Employee}({x}, y, z), \mathsf{Manager}({y}, x, w), \mathsf{Contact}({y}, x).$};

     \node (E-exit) [below = 1cm of ground.west, anchor=west, draw=none] {
          \textbf{R4: }$\relation{Employee}_{\mathsf{join}}(w) \obtainedfrom \relation{Employee}(x, y, z), \lnot \fk{\relation{Employee}}(x,w), \mathsf{ground}(w).$};

     \node (E-self) [below = 0.6cm of E-exit.west, anchor=west, draw=none] {   
          \textbf{R3: }$\fk{\relation{Employee}}(x,w)  \obtainedfrom \relation{Employee}(x, y, z), \lnot \relation{Manager}_{\mathsf{join}}(y, x,w), \mathsf{ground}(w).$};

     \node (M-exit) [below = 1.5 cm of E-self.west, anchor=west, draw=none]{       
           \textbf{R4: }$\relation{Manager}_{\mathsf{join}}(y, x,w)  \obtainedfrom \relation{Manager}(y, x, w), \lnot \fk{\relation{Manager}}(y, w), \mathsf{ground}(w).$}; 
           
     \node (M-self) [below =0.9cm of M-exit.west, anchor=west, draw=none]{ 
           $\begin{aligned}
     \textbf{R3: }& \fk{\relation{Manager}}(y, w)  \obtainedfrom \relation{Manager}(y, x, w), \lnot \relation{Contact}_{\mathsf{join}}(y, x), \mathsf{ground}(w). \\
     \textbf{R2: }&\fk{\relation{Manager}}(y, w)  \obtainedfrom \relation{Manager}(y, x, w), \relation{Manager}(y, z_1, w), z_1 \neq x. \\
     \textbf{R1: }&\fk{\relation{Manager}}(z_1, w)  \obtainedfrom \relation{Manager}(z_1, z_2, z_3), z_3 \neq w, \mathsf{ground}(w). 
     \end{aligned}$};

     \node (C-exit) [below =1.9 cm of M-self.west, anchor=west, draw=none] {
          \textbf{R4: }$\relation{Contact}_{\mathsf{join}}(y, x)  \obtainedfrom \relation{Contact}(y, x), \lnot \fk{\relation{Contact}}(y).$};

     \node (C-self) [below = 0.6cm of C-exit.west, anchor = west, draw=none] {
          \textbf{R2: }$\fk{\relation{Contact}}(y)  \obtainedfrom \relation{Contact}(y, x), \relation{Contact}(y, z_1), z_1 \neq x.$};

     \path[->] (E-self.south west)+(3,-0.2) edge [line width=0.8pt] node (PP) [draw=none,anchor=west] 
     {} ($(3,0.2)+(M-exit.north west)$) ;


     \path[->] (M-self.south west)+(3,-0.2) edge [line width=0.8pt] node (PP2) [draw=none,anchor=west] 
     {} ($(C-exit.north west)+(3,0.2)$) ;


     \node (box1) [draw=blue, line width=1pt,
                                    dash pattern=on 1pt off 4pt on 6pt off 4pt,
                                     inner sep=1mm, rectangle, rounded corners,  fit = (E-exit) (E-self)] {};
     
      \node (box2) [draw=red, line width=1pt,
                                    dash pattern=on 1pt off 4pt on 6pt off 4pt,
                                     inner sep=1mm, rectangle, rounded corners,  fit = (M-exit) (M-self)] {};    

     \node (box3) [draw=brown, line width=1pt,
                                    dash pattern=on 1pt off 4pt on 6pt off 4pt,
                                     inner sep=1mm, rectangle, rounded corners,  fit = (C-exit) (C-self)] {};                                      
                                                        
     \end{tikzpicture}    
     \label{fig:example-nonboolean}
}
\caption{The non-recursive Datalog program for $\examplequerynb$ and $\cqa{\examplequerynb}$.}

\end{figure*}

\section{Implementation}
\label{sec:implementation}

In this section, we first present \system \footnote{\url{https://github.com/xiatingouyang/LinCQA/}}, a system that produces the consistent \FO-rewriting of a query $q$ in both Datalog and SQL formats if $q$ has a PPJT. Having a rewriting in both formats allows us to use both Datalog and SQL engines as a backend. 
We then briefly discuss how we address the flaws of Conquer and Conquesto that impair their actual runtime performance.
\vspace{-2ex}
\subsection{\system: Rewriting in Datalog/SQL}

Our implementation takes as input a self-join-free CQ $q$ written in either Datalog or SQL. \system\ first checks whether the query $q$ admits a PPJT, and if so, it proceeds to produce the consistent \FO-rewriting of $q$ in either Datalog or SQL, \reva{or it terminates otherwise}. 
\subsubsection{Datalog rewriting}
\label{sec:ground-atom}

\system\ implements all rules introduced in Subsection~\ref{sec:datalog}, with one modification to the ground rule atom. 
Let the input query be 
$$q(\vec{u}) \obtainedfrom R_1(\underline{\vec{x_1}}, \vec{y_1}), R_2(\underline{\vec{x_2}}, \vec{y_2}), \dots, R_k(\underline{\vec{x_k}}, \vec{y_k}).$$ 
In Subsection~\ref{sec:nonboolean}, the head of the ground rule is $\mathsf{ground}(\vec{u})$. 
In the implementation, we replace that rule with
$$\mathsf{ground}^*(\vec{x}_1, \vec{x}_2, \dots, \vec{x}_k, \vec{u}) \obtainedfrom \mathsf{body}(q),$$
keeping the key variables of all atoms. For each unsafe rule with head $R_{i,\mathsf{label}}$ where $\mathsf{label} \in \{\mathsf{fkey}, \mathsf{join}\}$, let $\vec{v}$ be the key in the occurrence of $R_i$ in the body of the rule (if the unsafe rule is produced by \textbf{Rule~\ref{rule:good-join}}, both occurrences of $R_i$ share the same key). Then, we add to the rule body the atom 
$$\mathsf{ground}^*(\vec{z}_1, \dots, \vec{z}_{i-1}, \vec{v}, \vec{z}_{i+1}, \dots, \vec{z}_k, \vec{u})$$ 
where $\vec{z}_i$ is a sequence of fresh variables of the same length as $\vec{x}_i$. 

The rationale is that appending $\mathsf{ground}(\vec{u})$ to all unsafe rules could potentially introduce a Cartesian product between $\mathsf{ground}(\vec{u})$ and some existing atom $R(\vec{v}, \vec{w})$ in the rule. The Cartesian product has size $O(N \cdot |\mathsf{OUT}_p|)$ and would take $\Omega(N \cdot |\mathsf{OUT}_p|)$ time to compute, often resulting in inefficient evaluations or even out-of-memory errors. On the other hand, adding $\mathsf{ground}^*$  guarantees a join with an existing atom in the rule. Hence the revised rules would take $O(N + |\mathsf{ground}^*|)$ time to compute. Note that the size of $\mathsf{ground}^*$ can be as large as $N^k \cdot |\mathsf{OUT}_p|$ in the worst case; but as we observe in the experiments, the size of $\mathsf{ground}^*$ is small in~practice. 

\lstset{aboveskip=1.5pt,belowskip=1.5pt}

\subsubsection{SQL rewriting}
We now describe how to translate the Datalog rules in Subsection~\ref{sec:datalog} to SQL queries. Given a query $q$, we first denote the following:

\begin{enumerate}
\item $\mathbf{KeyAttri(\relation{R})}$: the primary key attributes of relation $\relation{R}$;
\item $\mathbf{JoinAttri(\relation{R}, \relation{T})}$: the attributes of $\relation{R}$ that join with $\relation{T}$;
\item $\mathbf{Comp(\relation{R})}$: the conjunction of comparison predicates imposed entirely on $\relation{R}$, excluding all join predicates (e.g., $\relation{R}.A = 42$ and $\relation{R}.A = \relation{R}.B$); and
\item $\mathbf{NegComp(\relation{R})}$: the negation of $\mathbf{Comp(\relation{R})}$ (e.g., $\relation{R}.A  \neq 42$ or $\relation{R}.A \neq \relation{R}.B$).
\end{enumerate}

\introparagraph{Translation of \textbf{Rule}~\ref{rule:self-pruning}.} 
We translate \textbf{Rule~\ref{rule:self-pruning}} of Subsection~\ref{sec:datalog} into the following SQL query computing the keys of $\relation{R}$. 
\begin{lstlisting}
    SELECT $\mathbf{KeyAttri(\relation{{R}})}$ FROM $\relation{{R}}$ WHERE $\mathbf{NegComp(\relation{{R}})}$
\end{lstlisting}

\introparagraph{Translation of \textbf{Rule}~\ref{rule:good-join}.} 
We first produce the projection on all key attributes and the joining attributes of $R$ with its parent $T$ (if it exists), and then compute all blocks containing at least two facts that disagree on the joining attributes. This can be effectively implemented in SQL with \lstinline!GROUP BY! and \lstinline!HAVING!.
\begin{lstlisting}
 SELECT $\mathbf{KeyAttri(\relation{{R}})}$
 FROM (SELECT DISTINCT $\mathbf{KeyAttri(\relation{{R}})}$, $\mathbf{JoinAttri(\relation{{R}}, \relation{{T}})}$ FROM $\relation{R}$) t
 GROUP BY $\mathbf{KeyAttri(\relation{{R}})}$ HAVING COUNT(*) > 1
\end{lstlisting}

\introparagraph{Translation of \textbf{Rule}~\ref{rule:bad-key-simplified}.}
For \textbf{Rule~\ref{rule:bad-key-simplified}} in the pair-pruning phase, we need to compute all blocks of $R$ containing some fact that does not join with some fact in $S_{\mathsf{join}}$ for some child node $S$ of $R$. This can be achieved through a \emph{left outer join} between $R$ and each of its child node $\relation{S}^1_{\mathsf{join}}$, $\relation{S}^2_{\mathsf{join}}$, $\dots$, $\relation{S}^k_{\mathsf{join}}$, which are readily computed in the recursive steps. 
For each $1 \leq i \leq k$, let the attributes of $\relation{S}^i$ be $B^i_1$, $B^i_2$, $\dots$, $B^i_{m_i}$, joining with attributes $A_{\alpha^i_1}$, $A_{\alpha^i_2}$, $\dots$, $A_{\alpha^i_{m_i}}$ in $\relation{R}$ respectively. We produce the following rule:
\begin{lstlisting}
    SELECT $\mathbf{KeyAttri(\relation{R})}$ FROM $\relation{R}$
    LEFT OUTER JOIN $\relation{S}^1_{\mathsf{join}}$ ON 
      $\relation{R}.A_{\alpha^1_1} = \relation{S}^1_{\mathsf{join}}.B^1_{1}$ AND ... AND $\relation{R}.A_{\alpha^1_{m_1}} = \relation{S}^1_{\mathsf{join}}.B^1_{m_1}$
    ...                
    LEFT OUTER JOIN $\relation{S}^k_{\mathsf{join}}$ ON
      $\relation{R}.A_{\alpha^k_1} = \relation{S}^k_{\mathsf{join}}.B^k_{1}$ AND ... AND $\relation{R}.A_{\alpha^k_{m_k}} = \relation{S}^k_{\mathsf{join}}.B^k_{m_k}$
    WHERE  $\relation{S}^1_{\mathsf{join}}.B^1_{1}$ IS NULL OR ... $\relation{S}^1_{\mathsf{join}}.B^1_{m_1}$ IS NULL OR
           $\relation{S}^2_{\mathsf{join}}.B^2_{1}$ IS NULL OR ... $\relation{S}^2_{\mathsf{join}}.B^2_{m_2}$ IS NULL OR
           ...
           $\relation{S}^k_{\mathsf{join}}.B^k_{1}$ IS NULL OR ... $\relation{S}^k_{\mathsf{join}}.B^k_{m_k}$ IS NULL
\end{lstlisting}

The \textit{inconsistent blocks} represented by the keys found by the above three queries are \textit{combined} using \lstinline!UNION ALL!  (e.g., $\fk{R}$ in \textbf{Rule~\ref{rule:self-pruning},~\ref{rule:good-join},~\ref{rule:bad-key-simplified}}). 


\introparagraph{Translation of \textbf{Rule}~\ref{rule:return-simplified}.} Finally, we translate \textbf{Rule~\ref{rule:return-simplified}} computing the values on join attributes between \textit{good blocks} in $\relation{R}$ and its unique parent $\relation{T}$ if it exists. Let $A_1, A_2, \dots, A_k$ be the key attributes of $R$.
\begin{lstlisting}
    SELECT $\mathbf{JoinAttri(\relation{{R}}, \relation{{T}})}$ FROM $\relation{R}$ WHERE NOT EXISTS ( 
      SELECT * FROM  $\fk{\relation{R}}$
      WHERE $\relation{R}.A_1 = \fk{\relation{R}}.A_1$ AND ... AND $\relation{R}.A_k = \fk{\relation{R}}.A_k$)
\end{lstlisting}
If $\relation{R}$ is the root relation of the PPJT, we replace $\mathbf{JoinAttri(\relation{R}, \relation{T})}$ with \lstinline!DISTINCT 1! (i.e. a Boolean query). Otherwise, the results returned from the above query are stored as $\relation{R}_{\mathsf{join}}$ and the recursive process continues as described in Algorithm~\ref{fig:ppjt}. 

\introparagraph{Extension to non-Boolean queries.} Let $q$ be a non-Boolean query. We use $\mathbf{ProjAttri}(q)$ to denote a sequence of attributes of~$q$ to be projected and let $\mathbf{CompPredicate}(q)$ be the comparison expression in the \lstinline!WHERE! clause of $q$. We first produce the SQL query that computes the facts of $\mathsf{ground^{*}}$. 
\begin{lstlisting}
 SELECT $\mathbf{KeyAttri}(\relation{R}_1)$, $\dots$, $\mathbf{KeyAttri}(\relation{R}_k)$, $\mathbf{ProjAttri}(q)$
 FROM $\relation{R}_1$, $\relation{R}_2$, $\dots$, $\relation{R}_k$ WHERE $\mathbf{CompPredicate}(q)$
\end{lstlisting}

We then modify each SQL statement as follows. Consider a SQL statement whose corresponding Datalog rule is unsafe and let $T(\vec{v}, \vec{w})$ be an atom in the rule body. Let $\vec{u}_T$ be a sequence of free variables in $q_{\tau_T}$ and let $\mathbf{FreeAttri}(T)$ be a sequence of attributes in $q_{\tau_T}$ to be projected (i.e., corresponding to the variables in $\vec{u}_T$). Recall that $T_{\mathsf{join}}(\vec{v})$ and $\fk{T}(\vec{v})$ would be replaced with $T_{\mathsf{join}}(\vec{v}, \vec{u}_T)$ and $\fk{T}(\vec{v}, \vec{u}_T)$ respectively, we thus first append $\mathbf{FreeAttri}(T)$ to the \lstinline!SELECT! clause and then add a \lstinline!JOIN! between table $T$ and $\mathsf{ground}$ on all attributes in $\mathbf{KeyAttri}(T)$. Finally, for a rule that has some negated IDB containing a free variable corresponding to some attribute in $\mathsf{ground}$ (i.e., $\mathsf{ground}.A$), 
\begin{packed_item}
\item if the rule is produced by \textbf{Rule~\ref{rule:bad-key-simplified},} in each \lstinline!LEFT OUTER JOIN! with $S^i_{\mathsf{join}}$ we add 
the expression \lstinline!ground.A = $\relation{S}^i_{\mathsf{join}}.B$! connected by the \lstinline!AND! operator, where $B$ is an attribute to be projected  in $\relation{S}^i_{\mathsf{join}}$. 
In the \lstinline!WHERE! clause we also add an expression \lstinline!ground.A IS NULL!, connected by the \lstinline!OR! operator.
\item if the rule is produced by \textbf{Rule~\ref{rule:return-simplified}}, in the \lstinline!WHERE! clause of the subquery we add an expression \lstinline!ground.A = $\fk{\relation{R}}$.A!.
\end{packed_item} 


\subsection{Improvements upon existing CQA systems}

\label{sec:existing-cqa-systems}

ConQuer~\cite{fuxman2005conquer} and Conquesto~\cite{Conquesto} are two other CQA systems targeting their own subclasses of \FO-rewritable queries, both with noticeable performance issues. For a fair comparison with \system, we implemented our own optimized version of both systems. Specifically, we complement Conquer presented in \cite{fuxman2005conquer} which was only able to handle tree queries (a subclass of $\cforest$), allowing us to handle all queries in  $\cforest$. Additionally,  we optimized Conquesto\cite{Conquesto} to get rid of the unnecessarily repeated computation and the undesired cartesian products produced due to its original formulation. The optimized system has significant performance gains over the original implementation and is named FastFO. 

\section{Experiments}
\captionsetup{skip=1pt}
\label{sec:experiments}
Our experimental evaluation addresses the following questions:
\begin{enumerate}
   \item How do first-order rewriting techniques perform compared to generic state-of-the-art CQA systems (e.g., CAvSAT)?
   \item How does \system\ perform compared to other existing CQA techniques?
   \item How do different CQA techniques behave on inconsistent databases with different properties (e.g., varying inconsistent block sizes, inconsistency)?
   \item Are there instances where we can observe the worst-case guarantee of \system\ that other CQA techniques lack?  
\end{enumerate}

To answer these questions, we perform experiments using synthetic benchmarks used in prior works and a large real-world dataset of 400GB. We compare \system\ against several state-of-the-art CQA systems with improvements. To the best of our knowledge, this is the most comprehensive performance evaluation of existing CQA techniques and we are the first ones to evaluate different CQA techniques on a real-world dataset of this large scale.


\vspace{-1.5ex}
\subsection{Experimental Setup}
\label{sec:setup}
We next briefly describe the setup of our experiments.

\introparagraph{System configuration.} 
All of our experiments are conducted on a bare-metal server in Cloudlab \cite{cloudlab}, a large cloud infrastructure. The server runs Ubuntu 18.04.1 LTS and has
two Intel Xeon E5-2660 v3 2.60 GHz (Haswell EP) processors. Each processor has 10 cores, and 20 hyper-threading
hardware threads. The server has a SATA SSD with 440GB space being available, 160GB memory and each NUMA node is directly attached to 80GB of memory.  We run Microsoft SQL Server 2019 Developer Edition (64-bit) on Linux as the relational backend for all CQA systems. For CAvSAT,  MaxHS v3.2.1 \cite{DBLP:conf/cp/DaviesB11} is used as the solver for the output WPMaxSAT instances. 

\introparagraph{Other CQA systems.}  
We compare the performance of \system\ with several state-of-the-art CQA methods. 
\begin{description}
\item[ConQuer:] a CQA system that outputs a SQL rewriting for queries that are in $\cforest$~\cite{fuxman2005conquer}. 
\item[FastFO:] our own implementation of the general method that can handle any query for which CQA is \FO-rewritable. 
\item[CAvSAT:] a recent SAT-based system. It reduces the complement of CQA with arbitrary denial constraints to a SAT problem, which is solved with an efficient SAT solver~\cite{DBLP:conf/sat/DixitK19}.
\end{description}
For \system, ConQuer and FastFO, we only report execution time of \FO-rewritings, since the rewritings can be produced within $1$ms for all our queries. 
We report the performance of each \FO-rewriting using the best query plan. 
The preprocessing time required by CAvSAT \emph{prior} to computing the consistent answers is not reported. 
For each rewriting and database shown in the experimental results, we run the evaluation five times (unless timed out), discard the first run and report the average time of the last four runs.

\begin{figure*}[t]
\includegraphics[scale=0.15]{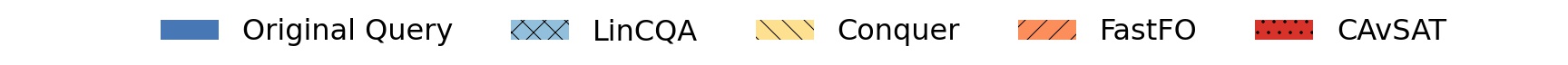}
\vspace{-6ex}	
\end{figure*}
\begin{figure*}[t]
\subfloat[$\mathsf{rSize} = 500\text{K}$]{
	\includegraphics[width=0.32\textwidth]{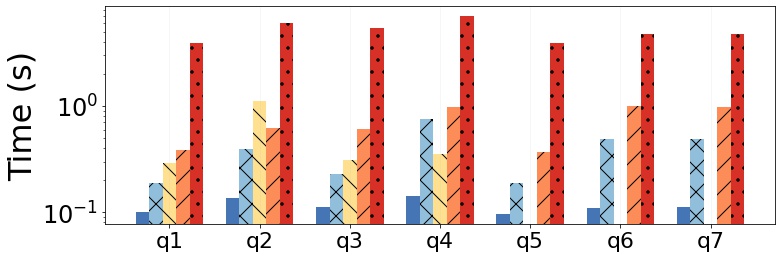}	
}
\subfloat[$\mathsf{rSize} = 1\text{M}$]{
	\includegraphics[width=0.32\textwidth]{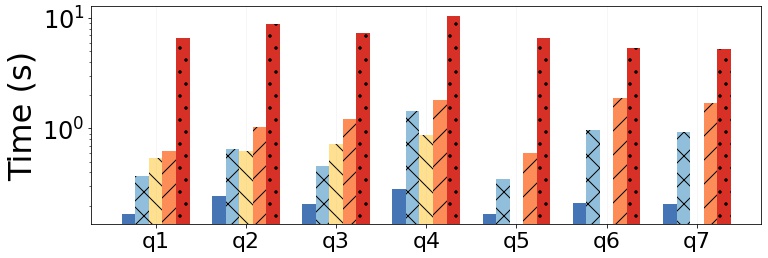}	
}
\subfloat[$\mathsf{rSize} = 5\text{M}$]{
	\includegraphics[width=0.32\textwidth]{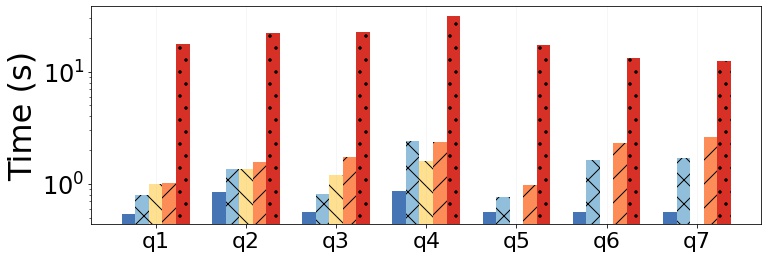}	
}
\caption{Performance comparison of different CQA systems on a synthetic workload with varying relation sizes.}
\label{fig:syn-size}
\end{figure*}

\begin{figure*}[t]
\vspace{-4ex}
\centering
\subfloat[$\mathsf{SF} = 1$]{
	\includegraphics[scale=0.24]{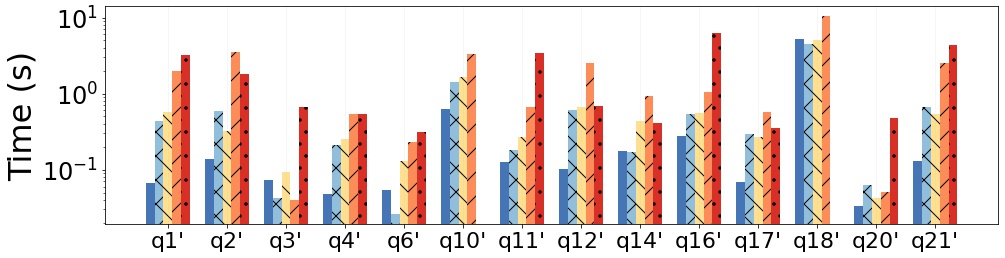}	
}
\subfloat[$\mathsf{SF} = 10$]{
	\includegraphics[scale=0.24]{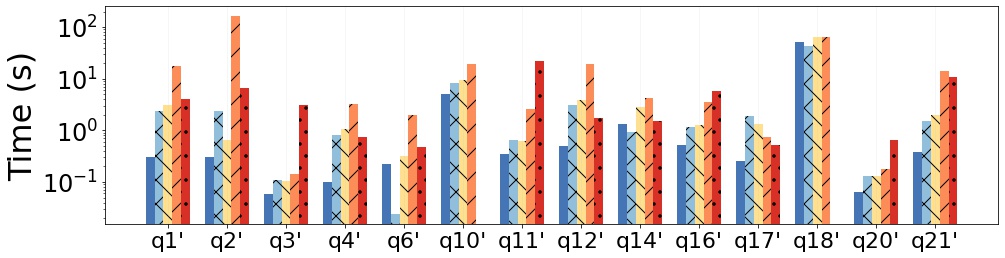}	
}
\caption{Performance comparison of different CQA systems on the TPC-H benchmark with varying scale factor ($\mathsf{SF}$).}
\label{fig:tpch}
\vspace{-2ex}
\end{figure*}

\vspace{-1.5ex}
\subsection{Databases and Queries}
\label{sec:synthetic}

\subsubsection{Synthetic workload}
We consider the synthetic workload used in previous works~\cite{DBLP:journals/pvldb/KolaitisPT13, DBLP:conf/sat/DixitK19, dixit2021answering}. Specifically, we take the seven queries that are consistent first-order rewritable in \cite{DBLP:conf/sat/DixitK19, DBLP:journals/pvldb/KolaitisPT13,dixit2021answering}. These queries feature joins between primary-key attributes and foreign-key attributes, as well as projections on non-key attributes:
\begin{align*}
q_{1}(z) \obtainedfrom  &\mathsf{R}_1(\underline{x},  y, z), \mathsf{R}_3(\underline{y}, v, w). \\
q_{2}(z, w) \obtainedfrom  &\mathsf{R}_1(\underline{x},  y, z), \mathsf{R}_2(\underline{y}, v, w). \\
q_{3}(z) \obtainedfrom  &\mathsf{R}_1(\underline{x},  y, z), \mathsf{R}_2(\underline{y}, v, w), \mathsf{R}_7(\underline{v}, u, d). \\
q_{4}(z, d) \obtainedfrom  &\mathsf{R}_1(\underline{x},  y, z), \mathsf{R}_2({\underline{y}, v, w}), \mathsf{R}_7(\underline{v}, u, d). \\
q_{5}(z) \obtainedfrom  &\mathsf{R}_1(\underline{x},  y, z), \mathsf{R}_8(\underline{y, v}, w). \\
q_{6}(z) \obtainedfrom  &\mathsf{R}_1(\underline{x},  y, z), \mathsf{R}_6(\underline{t}, y, w), \mathsf{R}_9(\underline{x}, y, d). \\
q_{7}(z) \obtainedfrom  &\mathsf{R}_3(\underline{x},  y, z), \mathsf{R}_4(\underline{y}, x, w), \mathsf{R}_{10}(\underline{x}, y, d).
\end{align*}
The synthetic instances are generated in two phases. In the first phase, we generate the consistent instance, while in the second phase we inject inconsistency. We use the following parameters for data generation:
$(i)$~\textsf{rSize}: the number of tuples per relation, $(ii)$~\textsf{inRatio}:  the ratio of the number of tuples that violate primary key constraints (i.e., number of tuples that are in inconsistent blocks) to the total number of tuples of the database, and $(iii)$~\textsf{bSize}: the number of inconsistent tuples in each inconsistent block.

\introparagraph{Consistent data generation.} Each relation in the consistent database has the same number of tuples, so that injecting inconsistency with specified $\mathsf{bSize}$ and $\mathsf{inRatio}$ makes the total number of tuples in the relation equal to $\mathsf{rSize}$. 
The data generation is \textit{query-specific}: for each of the seven queries, the data is generated in a way to ensure the output size of the original query 
on the consistent database is reasonably large. To achieve this purpose,  when generating the database instance for one of the seven queries, we ensure that for any two relations that join on some attributes, the number of matching tuples in each relation is approximately $25\%$;  for the third attribute in each ternary relation that does not participate in a join but is sometimes  \revjef{present in the final projection}, the values are chosen uniformly from the range $[1, \mathsf{rSize}/10]$. \vspace{0.5ex}

\introparagraph{Inconsistency injection.} In each relation, we first select a number of primary keys (or number of inconsistent blocks $\mathsf{inBlockNum}$) from the generated consistent instance.
Then, for each selected primary key, the inconsistency is injected by inserting the \textit{same number of additional tuples} ($\mathsf{bSize} - 1$) into each block. The parameter $\mathsf{inBlockNum}$
is calculated by the given $\mathsf{rSize}, \mathsf{inRatio}$ and $\mathsf{bSize}$:  $\mathsf{inBlockNum} = (\mathsf{inRatio} \cdot \mathsf{rSize}) / \mathsf{bSize}$.
\revb{We remark that there are alternative inconsistency injection methods available \cite{DBLP:journals/pvldb/ArocenaGMMPS15,DBLP:conf/icde/AntovaJKO08}.}

\subsubsection{TPC-H benchmark.}
\label{sec:tpch}
We also altered the 22 queries from the original TPC-H benchmark \cite{poess2000new} by removing aggregation, nested subqueries and selection predicates other than constant constraints, yielding 14 simplified conjunctive queries, namely queries $q_1'$, $q'_2$, $q'_3$, $q'_4$, $q'_6$, $q'_{10}$, $q'_{11}$, $q'_{12}$, $q'_{14}$, $q'_{16}$, $q'_{17}$, $q'_{18}$, $q'_{20}$, $q'_{21}$. All of the 14 queries are in $\cforest$ and hence each query has a PPJT, meaning that they can be handled by both ConQuer and \system. 

We generate the inconsistent instances by injecting inconsistency into the TPC-H databases of scale factor ($\mathsf{SF}$) 1 and 10 in the same way as described for the synthetic data. The only difference is that for a given consistent database instance, instead of fixing $\mathsf{rSize}$ for the inconsistent database, we determine the number of inconsistent tuples to be injected based on the size of the consistent database instance, the specified $\mathsf{inRatio}$ and $\mathsf{bSize}$.

\begin{table}
\footnotesize
\caption{A summary of the Stackoverflow Dataset}
\label{tbl:stackoverflow-schema}
\begin{tabular}{l c c c c}
Table & \# of rows ($\mathsf{rSize}$) & $\mathsf{inRatio}$ & max.\ $\mathsf{bSize}$ & \# of Attributes \\
\hline
Users & 14,839,627 & 0\% & 1 & 14 \\
Posts & 53,086,328 & 0\% & 1 & 20 \\
PostHistory & 141,277,451 & 0.001\% & 4 & 9 \\
Badges  & 40,338,942 & 0.58\% & 941 & 4 \\
Votes & 213,555,899 & 30.9\% & 1441 & 6
\end{tabular}
\end{table} 

\begin{table}
\caption{StackOverflow queries}
\label{tbl:stackoverflow-query}
\footnotesize	
\begin{tabular}{c p{7.5cm}}
$Q_1$ & \texttt{SELECT DISTINCT P.id, P.title FROM Posts P, Votes V WHERE P.Id = V.PostId AND P.OwnerUserId = V.UserId AND BountyAmount > 100} \\
$Q_2$ & \texttt{SELECT DISTINCT U.Id, U.DisplayName FROM Users U, Badges B WHERE U.Id = B.UserId AND B.name = "Illuminator"} \\
$Q_3$ & \texttt{SELECT DISTINCT U.DisplayName FROM Users U, Posts P WHERE U.Id = P.OwnerUserId AND P.Tags LIKE "<c++>"} \\
$Q_4$ & \texttt{SELECT DISTINCT U.Id, U.DisplayName FROM Users U, Posts P, Comments C WHERE C.UserId = U.Id AND C.PostId = P.Id AND P.Tags LIKE "\%SQL\%" AND C.Score > 5} \\
$Q_5$ & \texttt{SELECT DISTINCT P.Id, P.Title FROM Posts P, PostHistory PH, Votes V, Comments C WHERE P.id = V.PostId AND P.id = PH.PostId AND P.id = C.PostId AND P.Tags LIKE "\%SQL\%" AND V.BountyAmount > 100 AND PH.PostHistoryTypeId = 2 AND C.score = 0} 
\end{tabular}
\vspace{-4ex}
\end{table} 

\subsubsection{Stackoverflow Dataset.}
\label{sec:stackoverflow}
We obtained the \url{stackoverflow.com} metadata as of 02/2021 from the Stack Exchange Data Dump, with 551,271,294 rows taking up 400GB. \footnote{\url{https://archive.org/details/stackexchange}}\footnote{https://sedeschema.github.io/} The database tables used are summarized in Table~\ref{tbl:stackoverflow-schema}. 
We remark that the {\em Id} attributes in PostHistory, Comments, Badges, and Votes are surrogate keys and therefore not imposed as natural primary keys; instead, we properly choose composite keys as primary keys \reva{(or quasi-keys)}.
Table~\ref{tbl:stackoverflow-query} shows the five queries used in our CQA evaluation, where the number of tables joined together increases from~$2$ in $Q_{1}$ to~$4$ in $Q_{5}$. 

\begin{figure}[!h]
\vspace{-2ex}
\includegraphics[width=0.4\textwidth]{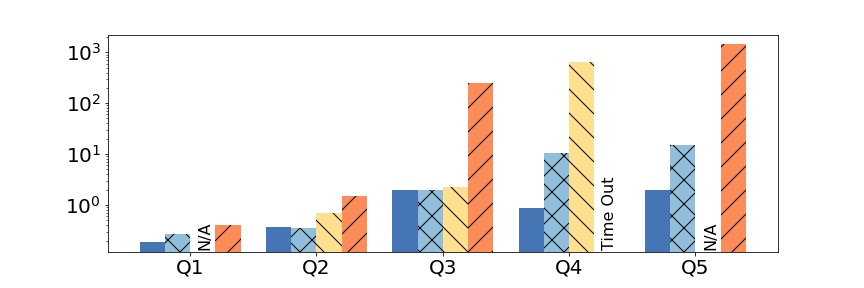}	
\caption{Runtime Comparison on StackOverflow}
\label{fig:stackoverflow_runtime}
\end{figure}

\subsection{Experimental Results}\label{sec:exp_results}
In this section, we report the evaluation of \system\ and the other CQA systems on synthetic workloads and the StackOverflow dataset. Table~\ref{tbl:every-dataset} summarizes the \revjef{number} of consistent and \revjef{possible} answers for each query in the selected datasets.

\introparagraph{Fixed inconsistency with varying relation sizes.}
To compare \system\ with other CQA systems, we evaluate all systems using both the synthetic workload and the altered TPC-H benchmark with fixed inconsistency ($\mathsf{inRatio} = 10\%$, $\mathsf{bSize} = 2$) as in previous works~\cite{DBLP:journals/pvldb/KolaitisPT13, DBLP:conf/sat/DixitK19, dixit2021answering}. We vary the size of each relation ($\mathsf{rSize} \in \{500K, 1M, 5M\}$) in the synthetic data (Figure~\ref{fig:syn-size}) and we evaluate on TPC-H database instances of scale factors 1 and 10 (Figure~\ref{fig:tpch}). Both figures include the time for running the original query on the inconsistent database (which returns the possible answers).

In the synthetic dataset, all three systems based on \FO-rewriting techniques outperform CAvSAT, often by an order of magnitude. This observation shows that if $\cqa{q}$ is \FO-rewritable, a properly implemented rewriting is more efficient than the generic algorithm in practice, refuting some observations in \cite{DBLP:conf/sat/DixitK19, DBLP:journals/pvldb/KolaitisPT13}.  Compared to ConQuer, \system\ performs better or comparably on $q_1$ through $q_4$. 
\revm{
\system\ is also more efficient than ConQuer for $q_1, q_2$ and $q_3$. As the database size increases, the relative performance gap between \system\ and ConQuer reduces for $q_4$. 
ConQuer cannot produce the SQL rewritings for queries $q_5, q_6$ and $q_7$ since they are not in $\cforest$. In summary, \system\ is more efficient and at worst competitive to ConQuer on relatively small databases with less than $5M$ tuples, and is applicable to a wider class of acyclic queries.
}

\begin{table*}
\tiny
\caption{The \revjef{number} of consistent and \revjef{possible} answers for each query in selected datasets.}
\label{tbl:every-dataset}
\small
\begin{tabular}{c | c c c c c c c | c c c c c c c}
& \multicolumn{7}{c|}{Synthetic ($\mathsf{rSize} = 5M$, $\mathsf{inRatio} = 10\%$, $\mathsf{bSize} = 2$)} & \multicolumn{5}{c}{StackOverflow} \\
 & $q_1$ & $q_2$ & $q_3$ & $q_4$ & $q_5$ & $q_6$ & $q_7$ & $Q_1$ & $Q_2$ & $Q_3$ & $Q_4$ & $Q_5$ \\
\hline
\revjef{\# cons.}  & 
311573 & 463459 & 290012 & 408230 & 311434 & 277287 & 277135 & 27578 & 145 & 38320 & 3925 & 1245 \\
\revjef{\# poss.} & 
571047 & 572244 & 534011 & 534953 & 574615 & 504907 & 474203 & 27578 & 145 & 38320 & 3925 & 1250 \\
\hline
& \multicolumn{14}{c}{TPC-H ($\mathsf{SF} = 10$)} \\
& $q_1'$ & $q_2'$ & $q_3'$ & $q_4'$ & $q_6'$ & $q_{10}'$ & \multicolumn{1}{c}{$q_{11}'$} & $q_{12}'$ & $q_{14}'$ & $q_{16}'$ & $q_{17}'$ & $q_{18}'$ & $q_{20}'$ & $q_{21}'$ \\
\hline
\revjef{\# cons.} & 4 & 28591 & 0 & 5 & 1 & 901514 & \multicolumn{1}{c}{289361} & 7 & 1 & 187489 & 1 & 13465732 & 3844 & 3776 \\
\revjef{\# poss.} & 4 & 35206 & 0 & 5 & 1 & 1089754 & \multicolumn{1}{c}{318015} & 7 & 1 & 187495 & 1 & 16617583 & 4054 & 4010
\end{tabular}
\end{table*}


In the TPC-H benchmark, the CQA systems are much closer in terms of performance. In this experiment, we observe that \system\ almost always produces the fastest rewriting, and even when it is not, its performance is comparable to the other baselines. It is also worth noting that for most queries in the TPC-H benchmark, the overhead over running the SQL query directly is much smaller when compared to the synthetic benchmark. 
Note that CAvSAT times out after 1 hour for queries $q'_{10}$ and $q'_{18}$ for both scale $1$ and $10$, while the systems based on \FO-rewriting techniques terminate. 
We also remark that for Boolean queries, CAvSAT will terminate \revjef{at an early stage} without processing the inconsistent part of the database using SAT solvers if the consistent part of the database already satisfies the query (e.g., $q_6'$, $q_{14}'$, $q_{17}'$ in TPC-H).
Overall, both \system\ and ConQuer perform better than FastFO, since they both are better at exploiting the structure of the join tree. 
\revm{We also note that ConQuer and LinCQA exhibit comparable performances on most queries in TPC-H.}
To compute the consistent answers for a certain query, we note that the actual runtime performance heavily depends on the query plan chosen by the query optimizer besides the SQL rewriting given, thus we focus on the overall performance of different CQA systems rather than a few cases in which the performance difference between different systems is relatively small.

\introparagraph{Fixed relation size with varying inconsistency.}
We perform experiments to observe how different CQA systems react when the inconsistency of the instance changes. Using synthetic data, we first fix $\mathsf{rSize} = 1$M, $\mathsf{bSize} = 2$ and run all CQA systems on databases instances of varying inconsistency ratio from $\mathsf{inRatio} = 10\%$ to $\mathsf{inRatio} = 100\%$. The results are depicted in Figure~\ref{fig:syn-incon}. We observe that the running time of CAvSAT increases when the inconsistency ratio of the database instance becomes larger. This happens because the SAT formula grows with larger inconsistency, and hence the SAT solver becomes slower. In contrast, the running time of all \FO-rewriting techniques is relatively stable across database instances of different inconsistency ratios. More interestingly, the running time of \system\ decreases when the inconsistency ratio becomes larger. This behavior occurs because of the early pruning on the relations at lower levels of the PPJT, which shrinks the size of candidate space being considered at higher levels of the PPJT and thus reduces the overall computation time. 
The overall performance trends of different systems are similar for all queries and thus we present only figures of $q_1, q_3, q_5, q_7$  here due to the space limit.  

In our next experiment, we fix the database instance size with $\mathsf{rSize} = 1$M and inconsistency ratio with $\mathsf{inRatio} = 10\%$, running  all CQA systems on databases of varying inconsistent block size $\mathsf{bSize}$ from $2$ to $10$. 
We observe that the performance of all CQA systems is not very sensitive to the change of inconsistent block sizes and thus we omit the results here due to the space limit. \reva{Figure~\ref{fig:syn-incon-rest} and~\ref{fig:syn-block} in the Appendix present the full results.}

\begin{figure*}[t]
\includegraphics[scale=0.15]{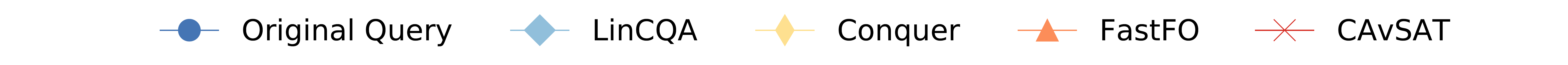}
\vspace{-6ex}	
\end{figure*}
\begin{figure*}[!h]
\subfloat[$q_1$]{
	\includegraphics[width=0.24\textwidth]{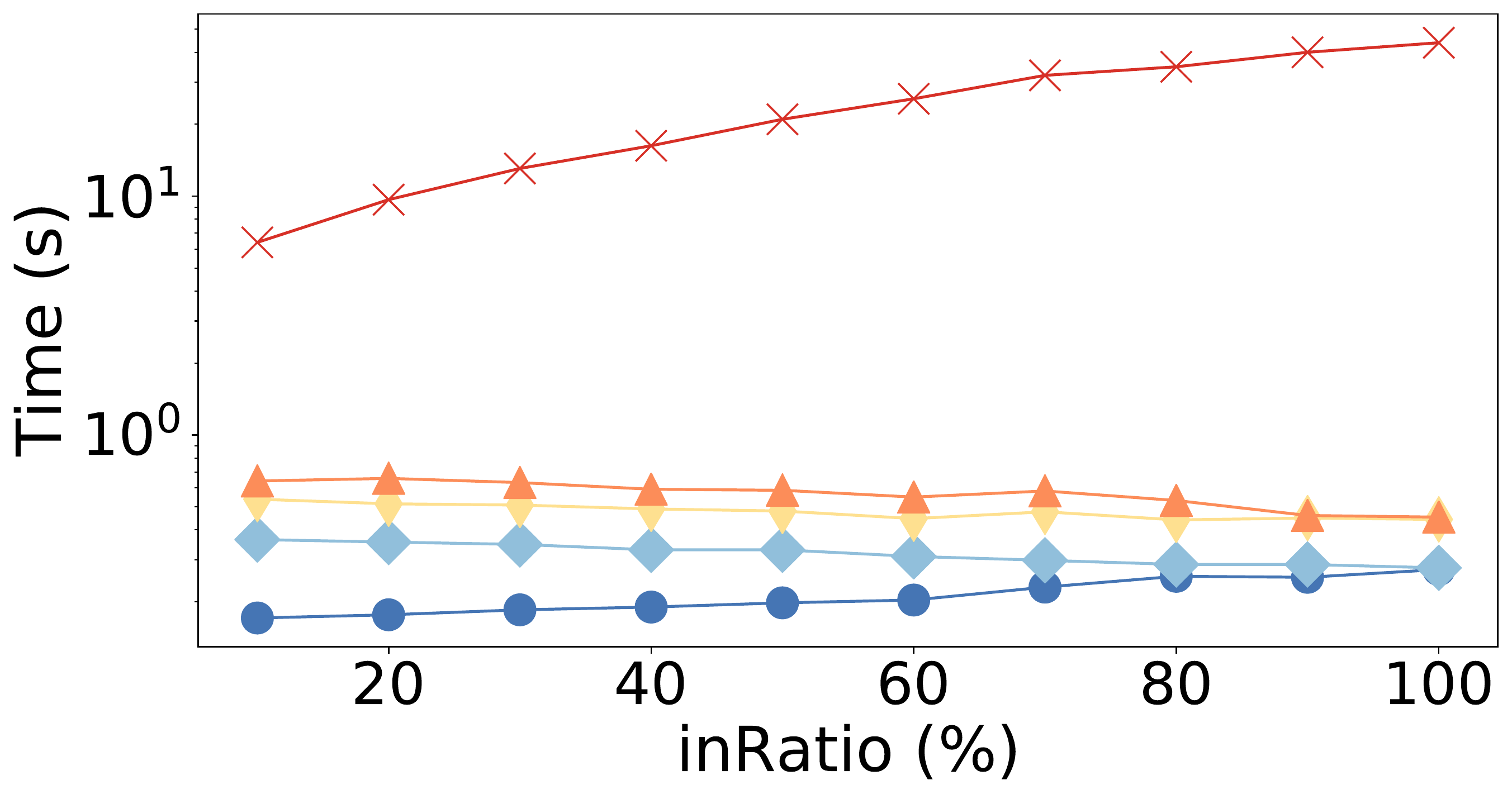}	
}
\subfloat[$q_3$]{
	\includegraphics[width=0.24\textwidth]{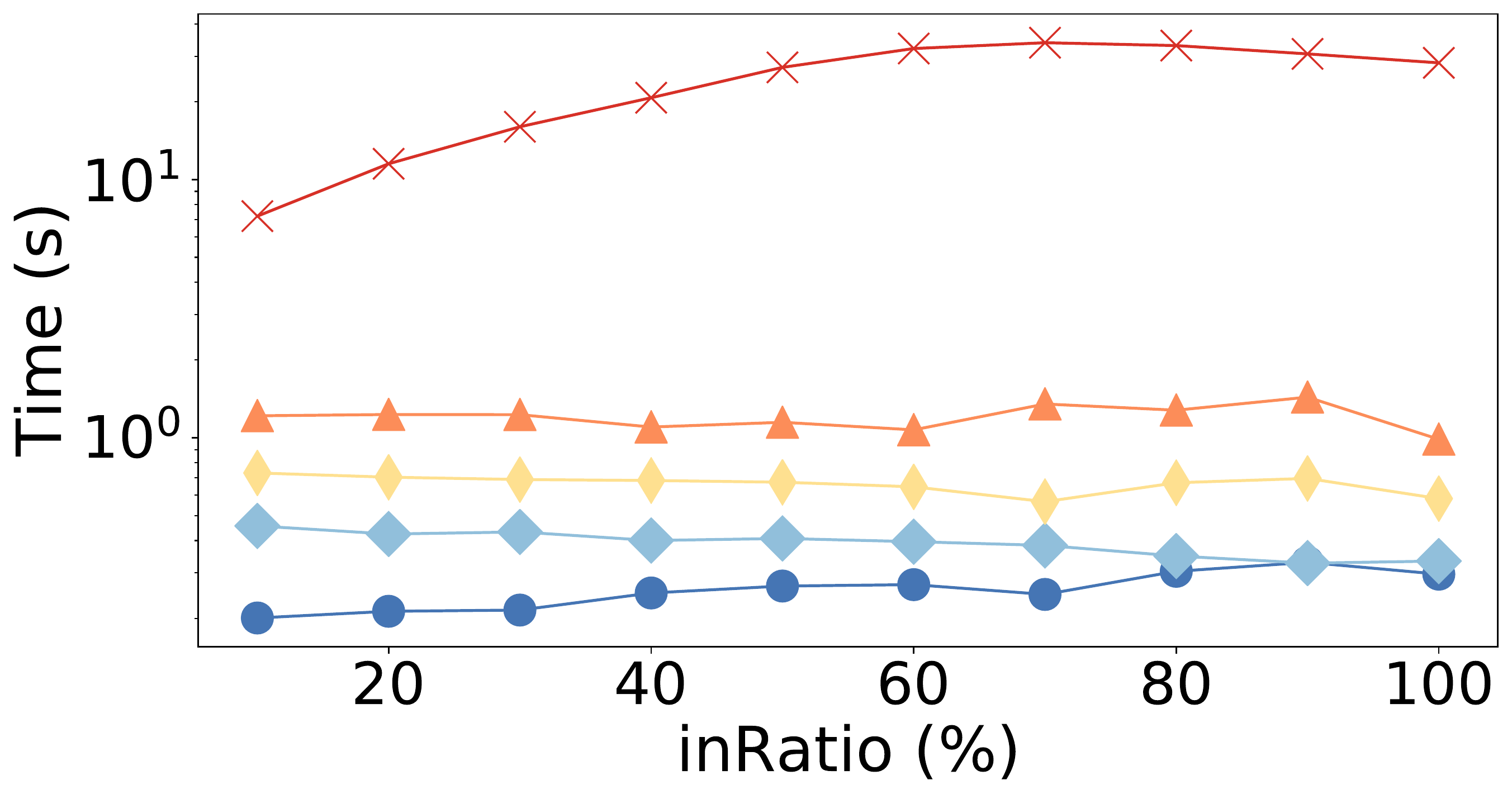}	
}
\subfloat[$q_5$]{
	\includegraphics[width=0.24\textwidth]{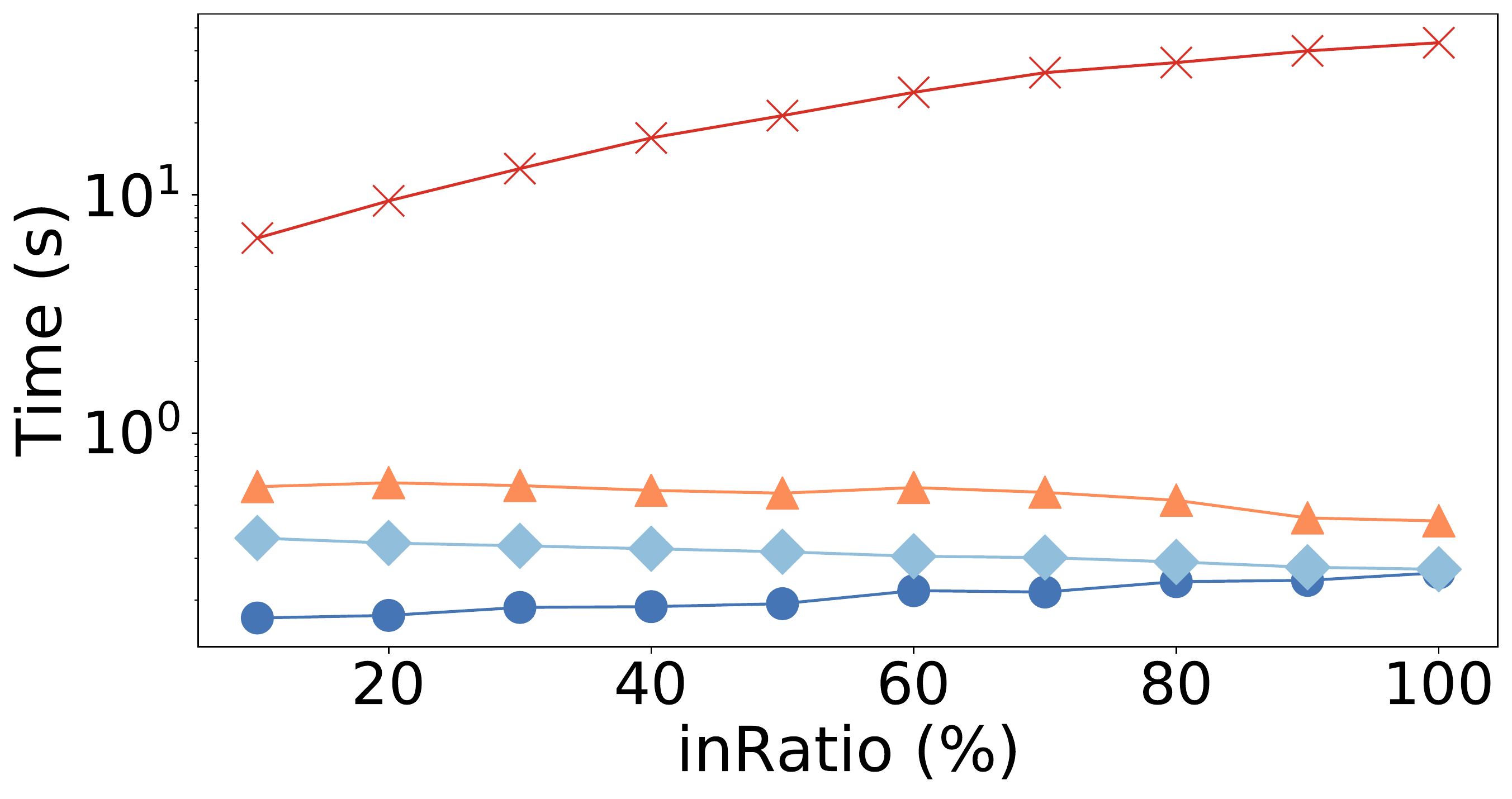}	
}
\subfloat[$q_7$]{
	\includegraphics[width=0.24\textwidth]{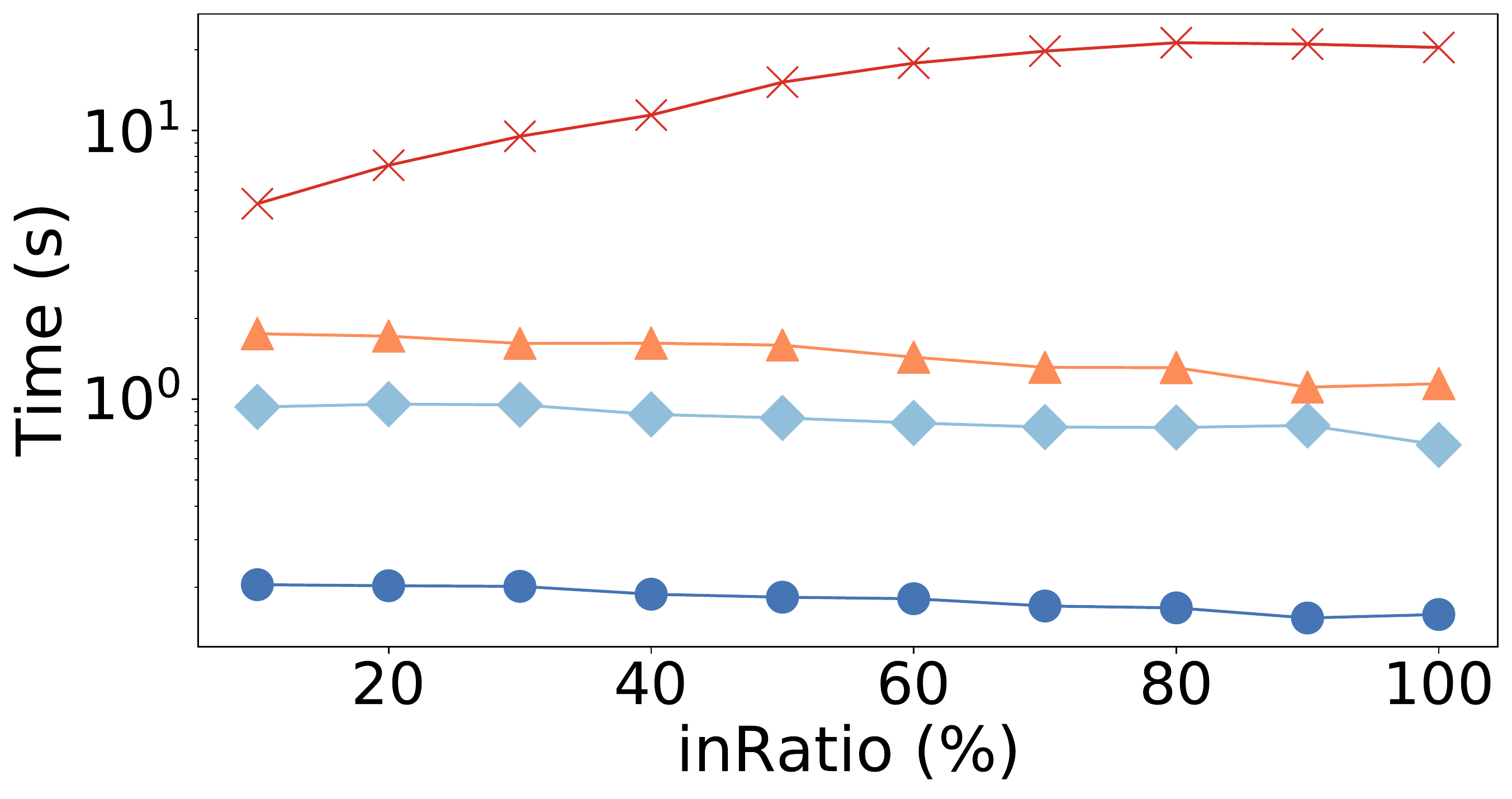}	
}
\caption{Performance of different systems on inconsistent databases with varying inconsistency ratio}
\label{fig:syn-incon}
\end{figure*}

\introparagraph{StackOverflow Dataset}
We use a 400GB StackOverflow dataset to evaluate the performance of different systems on large-scale real-world datasets. \revc{Another motivation to use such a large dataset is that \system\ and ConQuer exhibit comparable performance on the medium-sized synthetic and TPC-H datasets.} CAvSAT is excluded since it requires extra storage for preprocessing which is beyond the limit of the available disk space.
Since $Q_1$ and $Q_5$ are not in $\cforest$, ConQuer cannot handle them and their execution times are marked as ``N/A''. Query executions that do not finish within one hour are marked as ``Time Out''. We observe that on all five queries, \system\ significantly outperforms other competitors. 
\revm{In particular, when the database size is very large, \system\ is much more scalable than ConQuer due to its more efficient strategy. 
We intentionally select queries with small \revjef{possible} answer sizes for ease of experiments and presentation. Some queries with \revjef{possible} answer size up to $1\text{M}$ would require hours to be executed and it is prohibitive to measure the performances of our baseline systems.}
For queries that ConQuer ($Q_4$) and FastFO ($Q_3$, $Q_5$) take long to compute,  \system\ manages to finish execution quickly thanks to its efficient self-pruning and pair-pruning steps. 

\revc{ To see the performance change of different systems when executing in small available memory, we run the experiments on a SQL server with maximum allowed memory of $120$GB, $90$GB, $60$GB, $30$GB, and $10$GB  
respectively. Figure~\ref{fig:stackoverflow-memory} shows that, despite the memory reduction, \system\ is still the best performer on all five queries given different amounts of available memory. No obvious performance regression is observed on  $Q_1$ and $Q_2$ when reducing memory since both queries access only two tables.} 

\begin{figure*}[t]
\vspace{-5ex}
\subfloat[$Q_1$]{
     \includegraphics[width=0.19\textwidth]{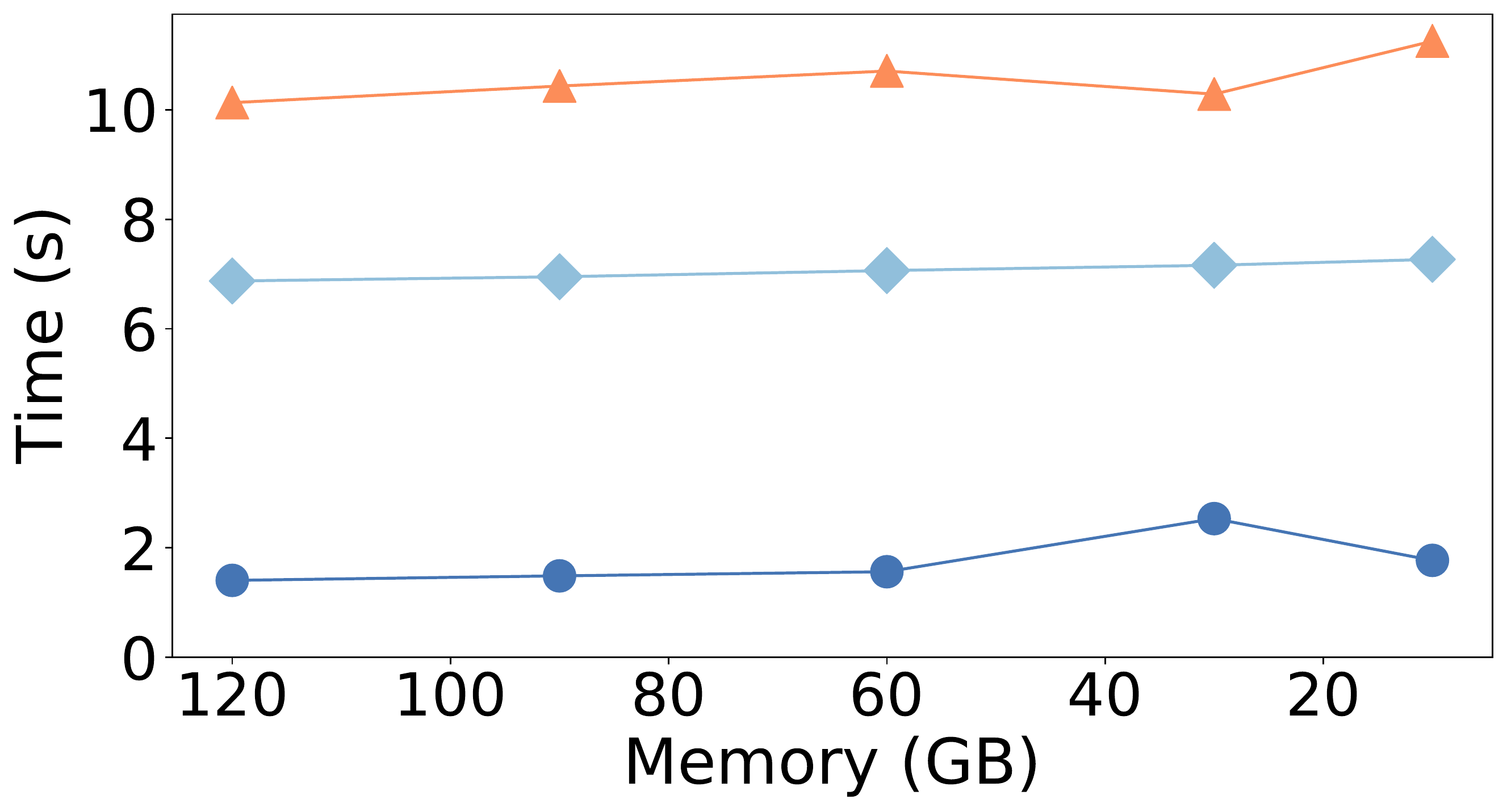} 
}
\subfloat[$Q_2$]{
     \includegraphics[width=0.19\textwidth]{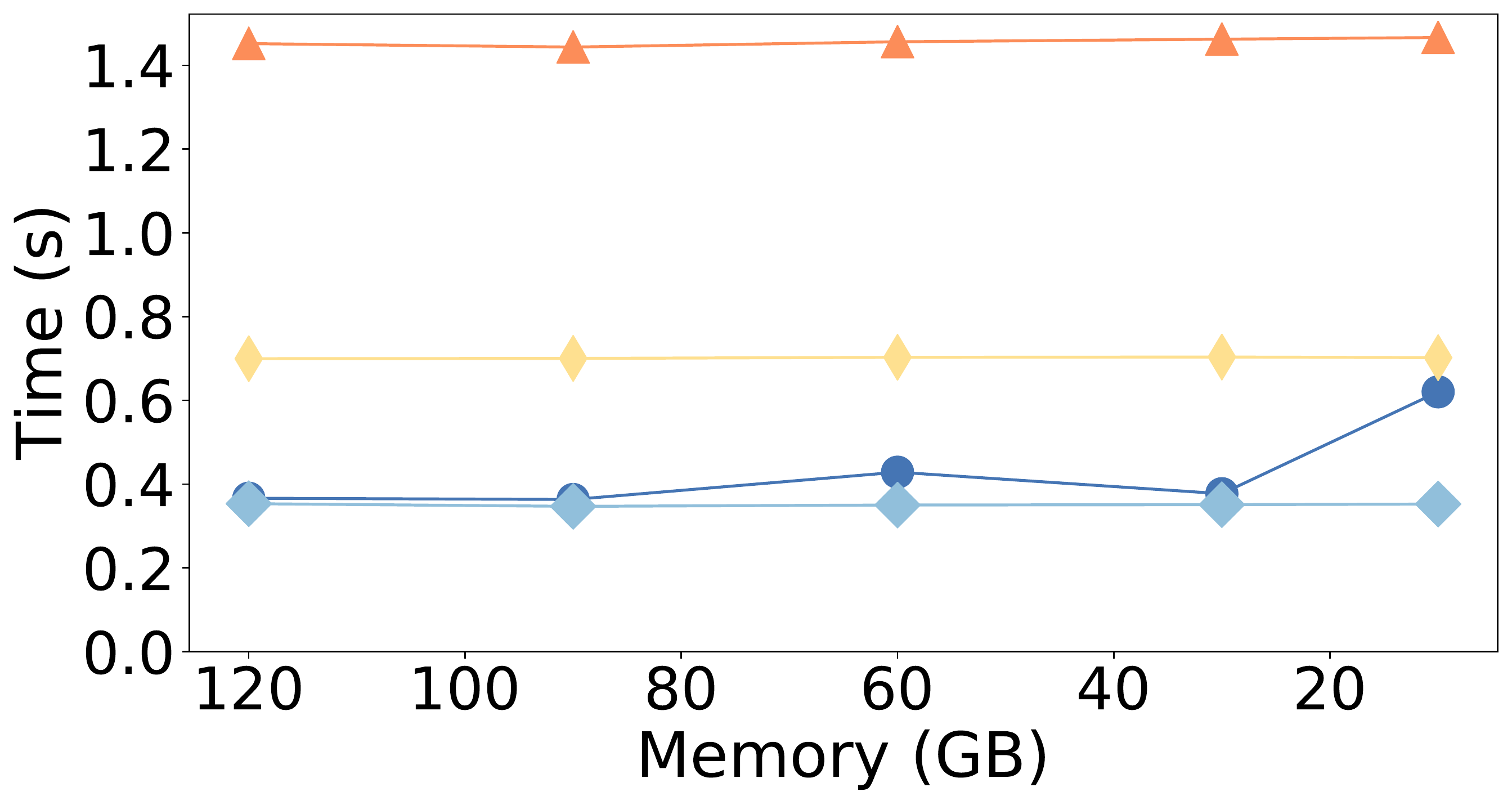} 
}
\subfloat[$Q_3$]{
     \includegraphics[width=0.19\textwidth]{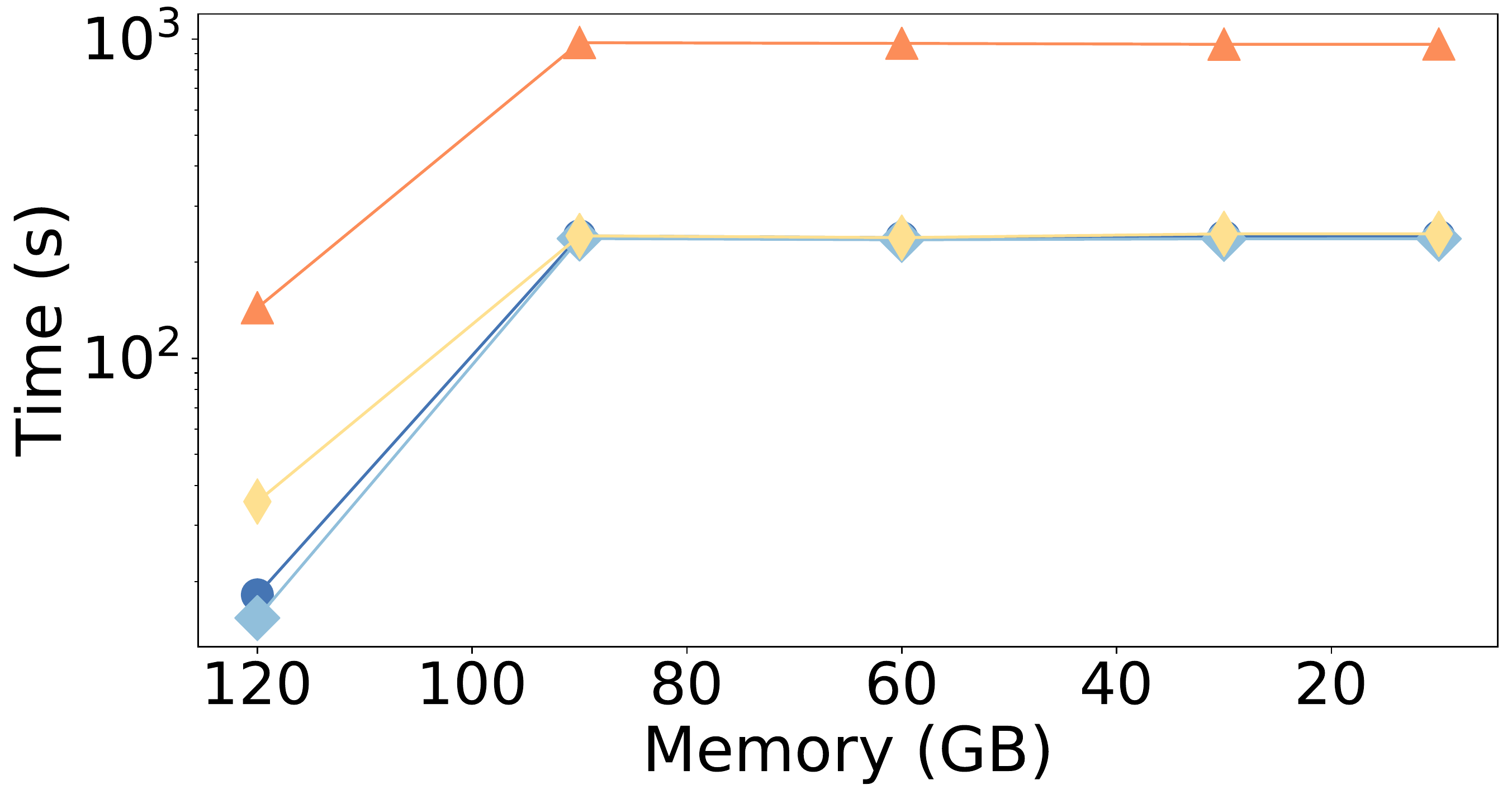} 
}
\subfloat[$Q_4$]{
     \includegraphics[width=0.19\textwidth]{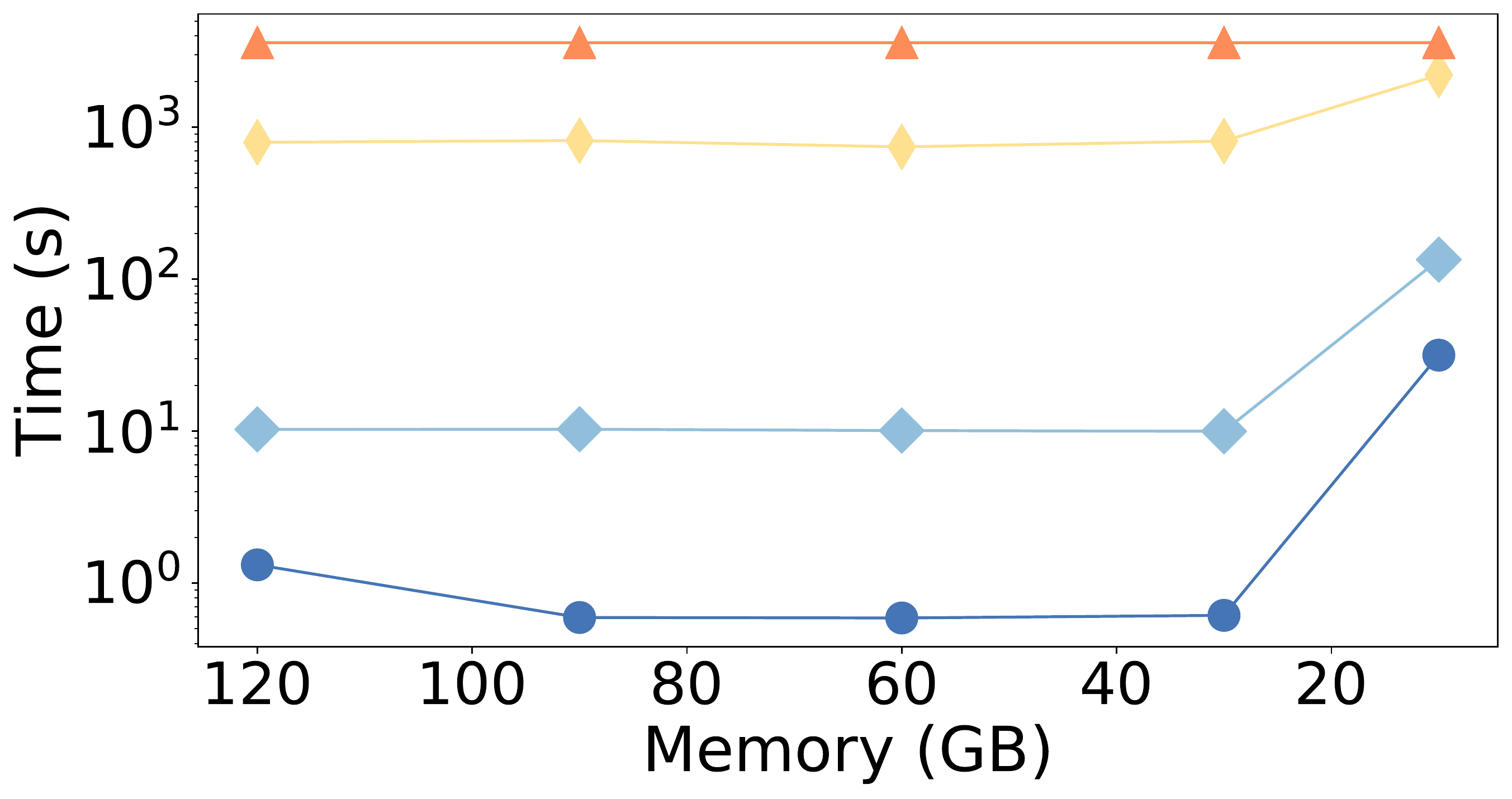} 
}
\subfloat[$Q_5$]{
     \includegraphics[width=0.19\textwidth]{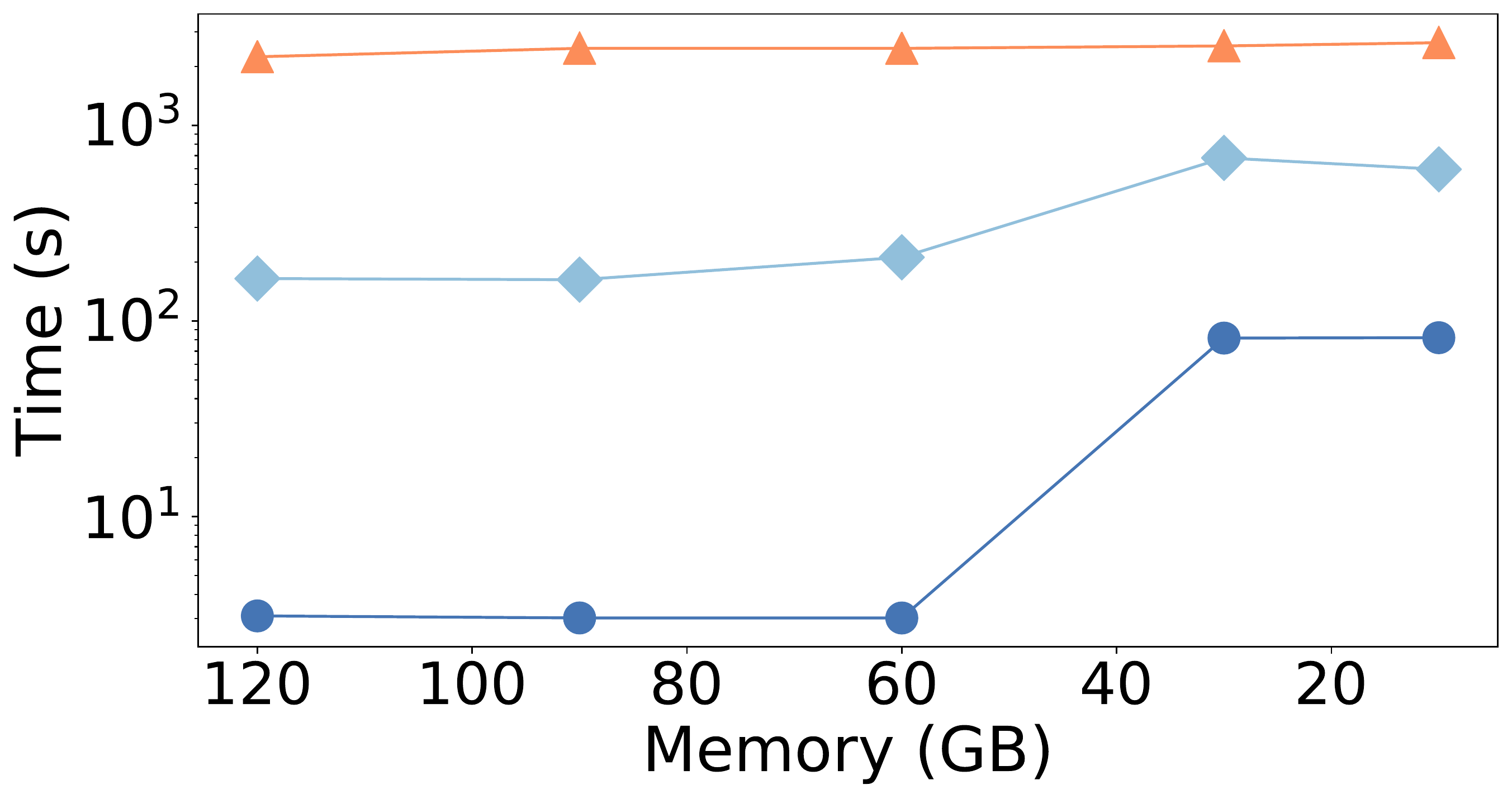} 
}
\caption{\revm{Performance of StackOverflow queries with varying amount of available memory}}
\label{fig:stackoverflow-memory}
\end{figure*}


\revm{
\introparagraph{Summary}
Our experiments show that both \system\ and ConQuer outperform FastFO and CAvSAT, systems that produce generic \FO-rewritings and reduce to SAT respectively. Despite \system\ and ConQuer showing a similar performance on most queries in our experiments, we observe that \system\ is (1) applicable to a wider class of acyclic queries than ConQuer and (2) more scalable than ConQuer when the database size increases significantly. 
}
\begin{figure*}[t]
\vspace{-5ex}	
\subfloat[$Q_{\mathsf{2-path}}$]{
	\includegraphics[width=0.24\linewidth]{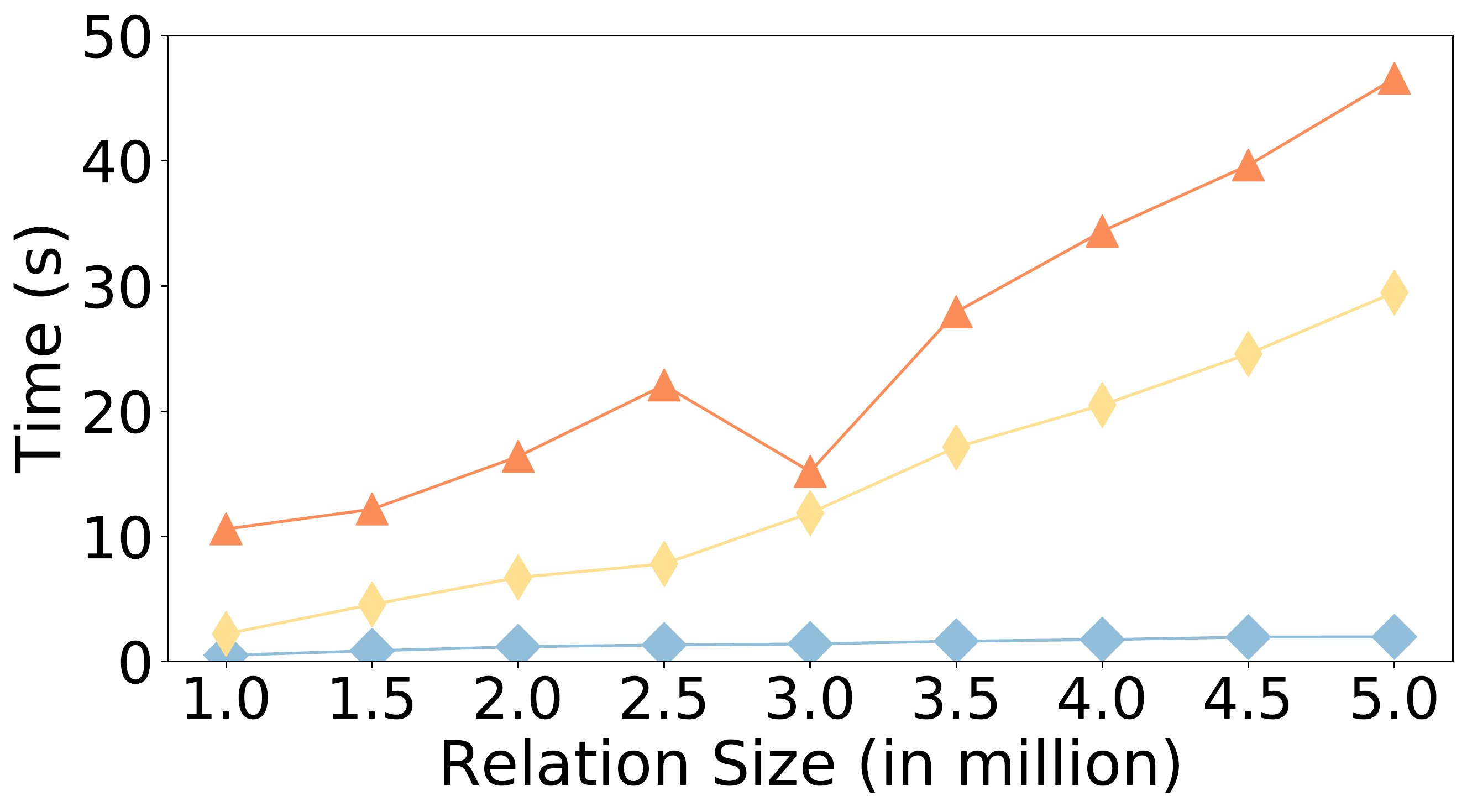} \label{fig:worst-case-size-2path}
}
\subfloat[$Q_{\mathsf{3-path}}$]{
	\includegraphics[width=0.24\linewidth]{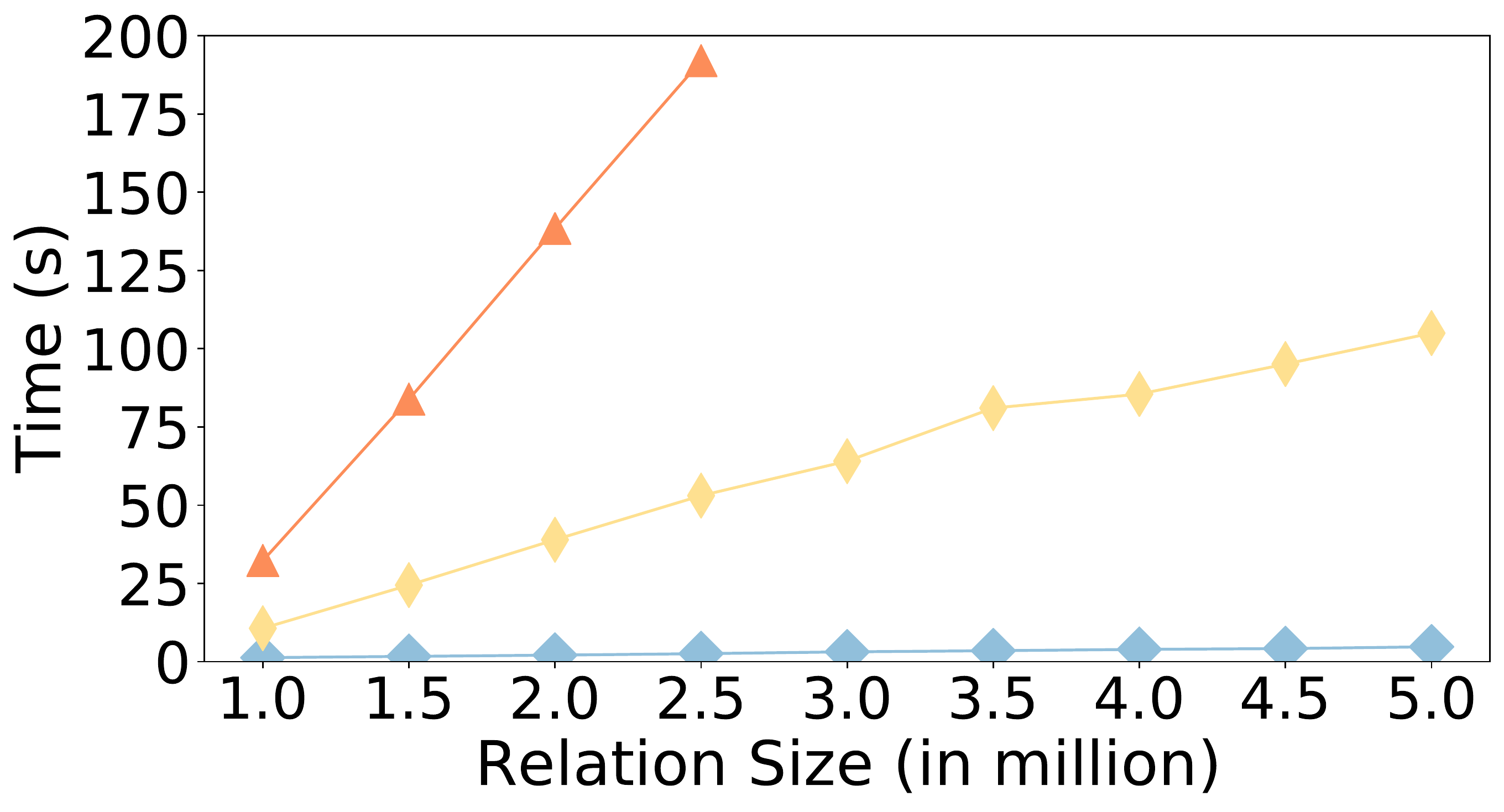}\label{fig:worst-case-size-3path}
}
\subfloat[$Q_{\mathsf{2-path}}$ ]{
	\includegraphics[width=0.24\linewidth]{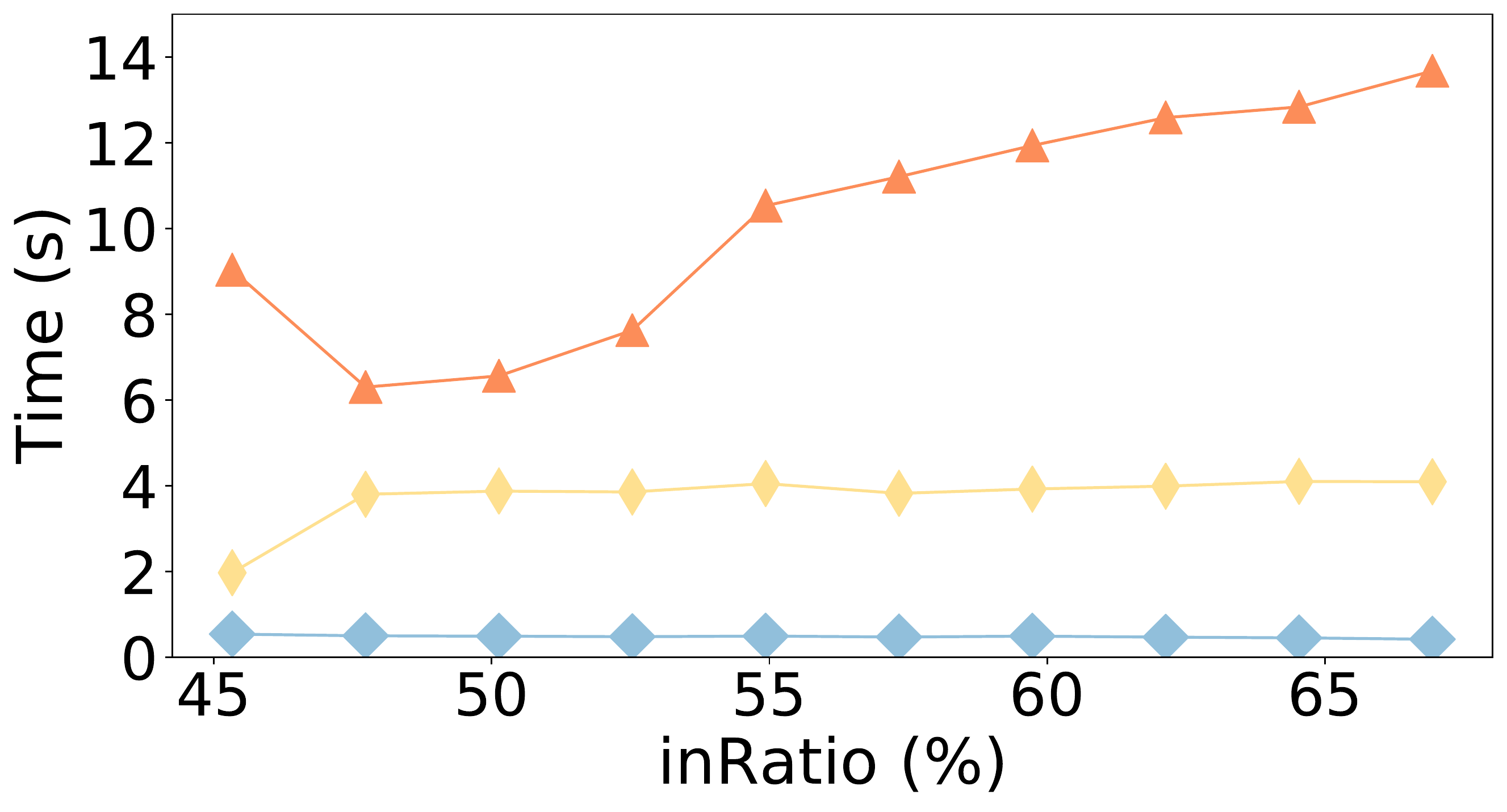} \label{fig:worst-case-ratio-2path}
}
\subfloat[$Q_{\mathsf{3-path}}$ ]{
	\includegraphics[width=0.24\linewidth]{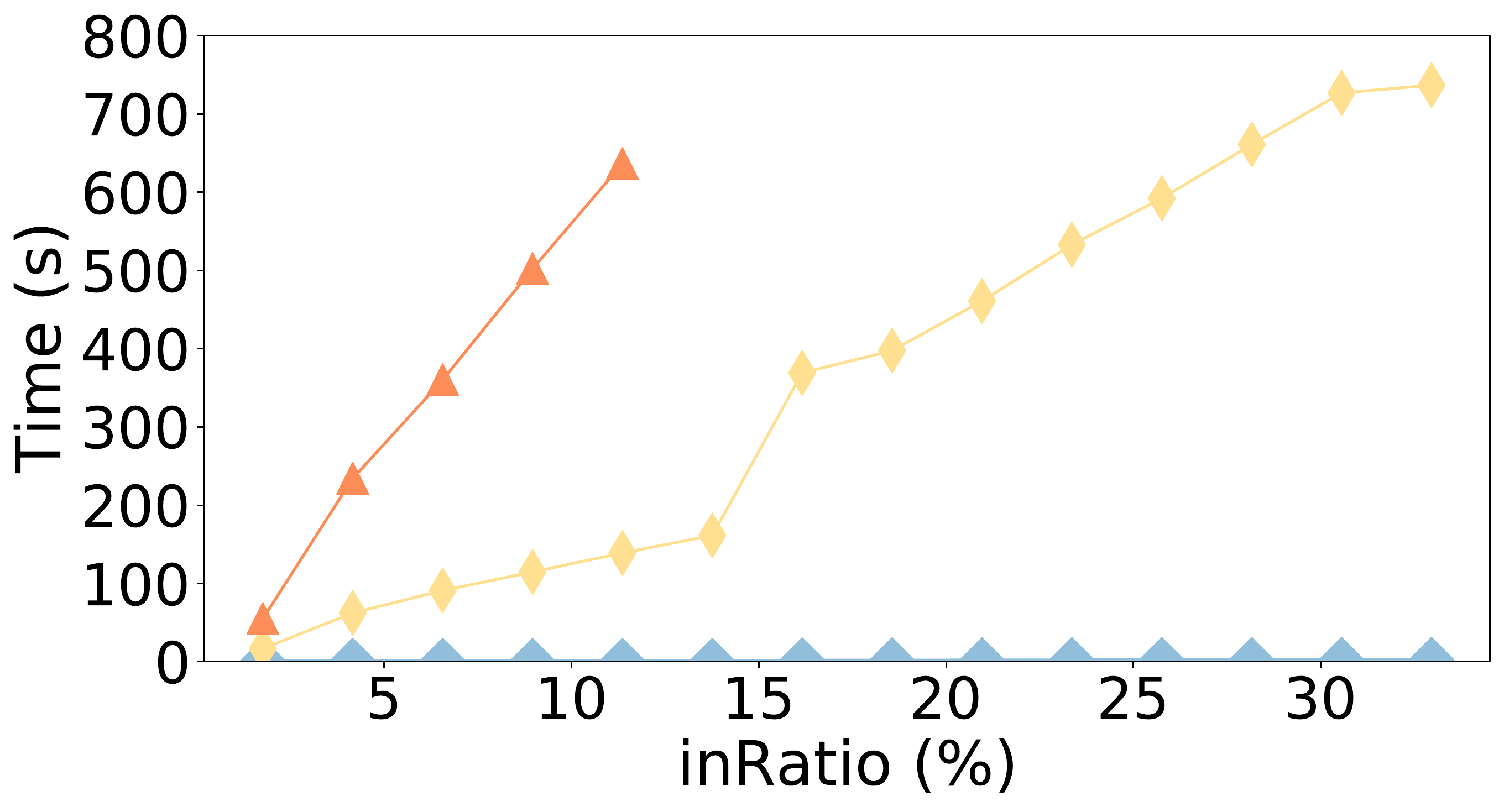} \label{fig:worst-case-ratio-3path}	
}
\caption{Performance comparison between different systems on varying relation sizes/inconsistency ratios}
\label{fig:path-size}
\vspace{-2.5ex}
\end{figure*}

\subsection{Worst-Case Study}
\label{sec:worst}
\revm{To demonstrate the robustness and efficiency of \system\ due to its theoretical guarantees}, we generate synthetic {\em worst-case\/} inconsistent database instances for the  2-path query $Q_{\mathsf{2-path}}$ and the 3-path query $Q_{\mathsf{3-path}}$:
\begin{align*}
Q_{\mathsf{2-path}}(x) & \obtainedfrom \relation{R}(\underline{x}, y), \relation{S}(\underline{y}, z). \\
Q_{\mathsf{3-path}}(x) & \obtainedfrom \relation{R}(\underline{x}, y), \relation{S}(\underline{y}, z), \relation{T}(\underline{z}, w).
\end{align*}  

We compare the performance of \system\ with ConQuer and FastFO on both queries. CAvSAT does not finish its execution on any instance within one hour, due to the long time it requires to solve the SAT formula. Thus, we do not report the time of CAvSAT.
 
%

We define a generic binary relation $\mathcal{D}(x, y, N)$ as
\begin{align*}
\mathcal{D}(x, y, N) &= ([x] \times [y]) \cup \{(u, u) \mid xy+1 \leq u \leq N, u \in \mathbb{Z}^+\},
\end{align*}
where $x, y, N \in \mathbb{Z}^+$, $[n] = \{1, 2, \dots, n\}$ and $[a] \times [b]$ denotes the cartesian product between $[a]$ and $[b]$.
To generate the input instances for $Q_{\mathsf{2-path}}$, we generate relations $\relation{R} = \mathcal{D}(a, b, N)$ and $\relation{S} = \mathcal{D}(b, c, N)$ with integer parameters $a$, $b$, $c$ and $N$. For $Q_{\mathsf{3-path}}$, we additionally generate the relation $\relation{T} = \mathcal{D}(c, d, N)$.
Intuitively, for $\relation{R}$, $[a] \times [b]$ is the set of inconsistent tuples and $\{(u, u) \mid ab+1 \leq u \leq N, u \in \mathbb{Z}^+\}$ is the set of consistent tuples. The values of $a$ and $b$ control both the number of inconsistent tuples (i.e. $ab$) and 
the size of inconsistent blocks (i.e. $b$). 
We note that $[a] \times [b]$ and $\{(u, u) \mid ab+1 \leq u \leq N, u \in \mathbb{Z}^+\}$  are disjoint. 


\introparagraph{Fixed database inconsistency with varying size.}
We perform experiments to see how robust different CQA systems are when running queries on an instance of increasing size. 
For $Q_{\mathsf{2-path}}$, we fix $b = c = 800$, and for each $k = 0, 1, \dots, 8$, we construct a database instance with $a = 120 + 460k$ and $N = (1 + k/2) \cdot 10^6$. By construction, each database instance has inconsistent block size $\mathsf{bSize} = b = c = 800$ in both relations $\relation{R}$ and $\relation{S}$, and $\mathsf{inRatio} = (ab + bc)/2N = 36.8\%$, with varying relation size $\mathsf{rSize} = N$ ranging from $1$M to $5$M. 
Similarly for $Q_{\mathsf{3-path}}$, we fix $b = c = d = 120$, and for each $k=0,1, \dots, 8$, we construct a database instance with $a = 120 + 180k$ and $N = (1 + k/2) \cdot 10^6$. 
Here the constructed database instances have $\mathsf{inRatio} = (ab + bc + cd)/3N = 1.44\%$. 
As shown in Figures~\ref{fig:worst-case-size-2path} and \ref{fig:worst-case-size-3path}, the performance of \system\ is much less sensitive to the changes of the relation sizes when compared to other CQA systems. 
We omit reporting the running time of FastFO for $Q_{\mathsf{3-path}}$ on relatively larger database instances in Figure~\ref{fig:worst-case-size-3path} for better contrast with ConQuer and \system.


\introparagraph{Fixed database sizes with varying inconsistency.}
Next, we experiment on instances of varying inconsistency ratio $\mathsf{inRatio}$ in which the joining mainly happens between inconsistent blocks of different relations. 
For $Q_{\mathsf{2-path}}$, we fix $b = c = 800$ and $N = 10^6$ and generate database instances for each $a = 100, 190, 280, \dots, 1000$. All generated database instances have inconsistent block size $\mathsf{bSize} = b = c = 800$ for both relations $\relation{R}$ and $\relation{S}$, and the size of each relation $\mathsf{rSize} = N = 10^6$ by construction. 
The inconsistency ratio $\mathsf{inRatio}$ varies from $36\%$ to $72\%$. 
For $Q_{\mathsf{3-path}}$, we fix $b = c = d = 120$ and $N = 10^6$ and generate database instances with $a = 200, 800, 1400, \dots, 8000$. The inconsistency ratio of the generated database instances varies from $1.76\%$ to $32.96\%$. 
Figures~\ref{fig:worst-case-ratio-2path} and~\ref{fig:worst-case-ratio-3path} show that \system\ is the only system whose performance is agnostic to the change of the inconsistency ratio. The running time of FastFO and Conquer increases when the input database inconsistency increases. Similar to the experiments varying relation sizes, the running times of FastFO for $Q_{\mathsf{3-path}}$ are omitted on relatively larger database instances in Figure~\ref{fig:worst-case-ratio-3path} for better contrast with ConQuer and \system.

\vspace{-2ex}
\section{Conclusion }

In this paper, we introduce the notion of a pair-pruning join tree (PPJT) and show that if a BCQ has a PPJT, then $\cqa{q}$ is in \FO\ and solvable in linear time in the size of the inconsistent database. We implement this idea in a system called \system\ that produces a SQL query to compute the consistent answers of $q$. Our experiments show that \system\ produces efficient rewritings, is scalable, and robust on worst case instances.

\reva{An interesting open question is whether CQA is in linear time for \emph{all} acyclic self-join-free SPJ queries with an acyclic attack graph, including those that do not admit a PPJT.
It would also be interesting to study the notion of PPJT for non-acyclic SPJ queries. }

\bibliographystyle{ACM-Reference-Format}
\bibliography{references}


\begin{thebibliography}{69}


\ifx \showCODEN    \undefined \def \showCODEN     #1{\unskip}     \fi
\ifx \showDOI      \undefined \def \showDOI       #1{#1}\fi
\ifx \showISBNx    \undefined \def \showISBNx     #1{\unskip}     \fi
\ifx \showISBNxiii \undefined \def \showISBNxiii  #1{\unskip}     \fi
\ifx \showISSN     \undefined \def \showISSN      #1{\unskip}     \fi
\ifx \showLCCN     \undefined \def \showLCCN      #1{\unskip}     \fi
\ifx \shownote     \undefined \def \shownote      #1{#1}          \fi
\ifx \showarticletitle \undefined \def \showarticletitle #1{#1}   \fi
\ifx \showURL      \undefined \def \showURL       {\relax}        \fi
\providecommand\bibfield[2]{#2}
\providecommand\bibinfo[2]{#2}
\providecommand\natexlab[1]{#1}
\providecommand\showeprint[2][]{arXiv:#2}

\bibitem[Ajtai and Gurevich(1994)]%
        {DBLP:journals/jcss/AjtaiG94}
\bibfield{author}{\bibinfo{person}{Mikl{\'{o}}s Ajtai} {and}
  \bibinfo{person}{Yuri Gurevich}.} \bibinfo{year}{1994}\natexlab{}.
\newblock \showarticletitle{Datalog vs First-Order Logic}.
\newblock \bibinfo{journal}{\emph{J. Comput. Syst. Sci.}} \bibinfo{volume}{49},
  \bibinfo{number}{3} (\bibinfo{year}{1994}), \bibinfo{pages}{562--588}.
\newblock


\bibitem[Antova et~al\mbox{.}(2008)]%
        {DBLP:conf/icde/AntovaJKO08}
\bibfield{author}{\bibinfo{person}{Lyublena Antova}, \bibinfo{person}{Thomas
  Jansen}, \bibinfo{person}{Christoph Koch}, {and} \bibinfo{person}{Dan
  Olteanu}.} \bibinfo{year}{2008}\natexlab{}.
\newblock \showarticletitle{Fast and Simple Relational Processing of Uncertain
  Data}. In \bibinfo{booktitle}{\emph{Proceedings of the 24th International
  Conference on Data Engineering, {ICDE} 2008, April 7-12, 2008, Canc{\'{u}}n,
  Mexico}}, \bibfield{editor}{\bibinfo{person}{Gustavo Alonso},
  \bibinfo{person}{Jos{\'{e}}~A. Blakeley}, {and} \bibinfo{person}{Arbee L.~P.
  Chen}} (Eds.). \bibinfo{publisher}{{IEEE} Computer Society},
  \bibinfo{pages}{983--992}.
\newblock
\urldef\tempurl%
\url{https://doi.org/10.1109/ICDE.2008.4497507}
\showDOI{\tempurl}


\bibitem[Arasu and Kaushik(2009)]%
        {DBLP:conf/sigmod/ArasuK09}
\bibfield{author}{\bibinfo{person}{Arvind Arasu} {and} \bibinfo{person}{Raghav
  Kaushik}.} \bibinfo{year}{2009}\natexlab{}.
\newblock \showarticletitle{A grammar-based entity representation framework for
  data cleaning}. In \bibinfo{booktitle}{\emph{{SIGMOD} Conference}}.
  \bibinfo{publisher}{{ACM}}, \bibinfo{pages}{233--244}.
\newblock


\bibitem[Arenas et~al\mbox{.}(1999)]%
        {10.1145/303976.303983}
\bibfield{author}{\bibinfo{person}{Marcelo Arenas},
  \bibinfo{person}{Leopoldo~E. Bertossi}, {and} \bibinfo{person}{Jan
  Chomicki}.} \bibinfo{year}{1999}\natexlab{}.
\newblock \showarticletitle{Consistent Query Answers in Inconsistent
  Databases}. In \bibinfo{booktitle}{\emph{{PODS}}}. \bibinfo{publisher}{{ACM}
  Press}, \bibinfo{pages}{68--79}.
\newblock


\bibitem[Arenas et~al\mbox{.}(2003)]%
        {DBLP:journals/tplp/ArenasBC03}
\bibfield{author}{\bibinfo{person}{Marcelo Arenas},
  \bibinfo{person}{Leopoldo~E. Bertossi}, {and} \bibinfo{person}{Jan
  Chomicki}.} \bibinfo{year}{2003}\natexlab{}.
\newblock \showarticletitle{Answer sets for consistent query answering in
  inconsistent databases}.
\newblock \bibinfo{journal}{\emph{Theory Pract. Log. Program.}}
  \bibinfo{volume}{3}, \bibinfo{number}{4-5} (\bibinfo{year}{2003}),
  \bibinfo{pages}{393--424}.
\newblock


\bibitem[Arocena et~al\mbox{.}(2015)]%
        {DBLP:journals/pvldb/ArocenaGMMPS15}
\bibfield{author}{\bibinfo{person}{Patricia~C. Arocena}, \bibinfo{person}{Boris
  Glavic}, \bibinfo{person}{Giansalvatore Mecca},
  \bibinfo{person}{Ren{\'{e}}e~J. Miller}, \bibinfo{person}{Paolo Papotti},
  {and} \bibinfo{person}{Donatello Santoro}.} \bibinfo{year}{2015}\natexlab{}.
\newblock \showarticletitle{Messing Up with {BART:} Error Generation for
  Evaluating Data-Cleaning Algorithms}.
\newblock \bibinfo{journal}{\emph{Proc. {VLDB} Endow.}} \bibinfo{volume}{9},
  \bibinfo{number}{2} (\bibinfo{year}{2015}), \bibinfo{pages}{36--47}.
\newblock
\urldef\tempurl%
\url{https://doi.org/10.14778/2850578.2850579}
\showDOI{\tempurl}


\bibitem[Barcel{\'{o}} and Fontaine(2015)]%
        {DBLP:conf/icdt/BarceloF15}
\bibfield{author}{\bibinfo{person}{Pablo Barcel{\'{o}}} {and}
  \bibinfo{person}{Ga{\"{e}}lle Fontaine}.} \bibinfo{year}{2015}\natexlab{}.
\newblock \showarticletitle{On the Data Complexity of Consistent Query
  Answering over Graph Databases}. In \bibinfo{booktitle}{\emph{{ICDT}}}
  \emph{(\bibinfo{series}{LIPIcs}, Vol.~\bibinfo{volume}{31})}.
  \bibinfo{publisher}{Schloss Dagstuhl - Leibniz-Zentrum f{\"{u}}r Informatik},
  \bibinfo{pages}{380--397}.
\newblock


\bibitem[Barcel{\'{o}} and Fontaine(2017)]%
        {DBLP:journals/jcss/BarceloF17}
\bibfield{author}{\bibinfo{person}{Pablo Barcel{\'{o}}} {and}
  \bibinfo{person}{Ga{\"{e}}lle Fontaine}.} \bibinfo{year}{2017}\natexlab{}.
\newblock \showarticletitle{On the data complexity of consistent query
  answering over graph databases}.
\newblock \bibinfo{journal}{\emph{J. Comput. Syst. Sci.}}  \bibinfo{volume}{88}
  (\bibinfo{year}{2017}), \bibinfo{pages}{164--194}.
\newblock


\bibitem[Beeri et~al\mbox{.}(1983)]%
        {DBLP:journals/jacm/BeeriFMY83}
\bibfield{author}{\bibinfo{person}{Catriel Beeri}, \bibinfo{person}{Ronald
  Fagin}, \bibinfo{person}{David Maier}, {and} \bibinfo{person}{Mihalis
  Yannakakis}.} \bibinfo{year}{1983}\natexlab{}.
\newblock \showarticletitle{On the Desirability of Acyclic Database Schemes}.
\newblock \bibinfo{journal}{\emph{J. {ACM}}} \bibinfo{volume}{30},
  \bibinfo{number}{3} (\bibinfo{year}{1983}), \bibinfo{pages}{479--513}.
\newblock
\urldef\tempurl%
\url{https://doi.org/10.1145/2402.322389}
\showDOI{\tempurl}


\bibitem[Bergman et~al\mbox{.}(2015)]%
        {DBLP:conf/sigmod/BergmanMNT15}
\bibfield{author}{\bibinfo{person}{Moria Bergman}, \bibinfo{person}{Tova Milo},
  \bibinfo{person}{Slava Novgorodov}, {and} \bibinfo{person}{Wang~Chiew Tan}.}
  \bibinfo{year}{2015}\natexlab{}.
\newblock \showarticletitle{Query-Oriented Data Cleaning with Oracles}. In
  \bibinfo{booktitle}{\emph{{SIGMOD} Conference}}. \bibinfo{publisher}{{ACM}},
  \bibinfo{pages}{1199--1214}.
\newblock


\bibitem[Bertossi et~al\mbox{.}(2013a)]%
        {bertossi2013data}
\bibfield{author}{\bibinfo{person}{Leopoldo Bertossi}, \bibinfo{person}{Solmaz
  Kolahi}, {and} \bibinfo{person}{Laks~VS Lakshmanan}.}
  \bibinfo{year}{2013}\natexlab{a}.
\newblock \showarticletitle{Data cleaning and query answering with matching
  dependencies and matching functions}.
\newblock \bibinfo{journal}{\emph{Theory of Computing Systems}}
  \bibinfo{volume}{52}, \bibinfo{number}{3} (\bibinfo{year}{2013}),
  \bibinfo{pages}{441--482}.
\newblock


\bibitem[Bertossi(2019)]%
        {Bertossi19}
\bibfield{author}{\bibinfo{person}{Leopoldo~E. Bertossi}.}
  \bibinfo{year}{2019}\natexlab{}.
\newblock \showarticletitle{Database Repairs and Consistent Query Answering:
  Origins and Further Developments}. In \bibinfo{booktitle}{\emph{{PODS}}}.
  \bibinfo{publisher}{{ACM}}, \bibinfo{pages}{48--58}.
\newblock


\bibitem[Bertossi et~al\mbox{.}(2013b)]%
        {DBLP:journals/mst/BertossiKL13}
\bibfield{author}{\bibinfo{person}{Leopoldo~E. Bertossi},
  \bibinfo{person}{Solmaz Kolahi}, {and} \bibinfo{person}{Laks V.~S.
  Lakshmanan}.} \bibinfo{year}{2013}\natexlab{b}.
\newblock \showarticletitle{Data Cleaning and Query Answering with Matching
  Dependencies and Matching Functions}.
\newblock \bibinfo{journal}{\emph{Theory Comput. Syst.}} \bibinfo{volume}{52},
  \bibinfo{number}{3} (\bibinfo{year}{2013}), \bibinfo{pages}{441--482}.
\newblock


\bibitem[Bohannon et~al\mbox{.}(2007)]%
        {DBLP:conf/icde/BohannonFGJK07}
\bibfield{author}{\bibinfo{person}{Philip Bohannon}, \bibinfo{person}{Wenfei
  Fan}, \bibinfo{person}{Floris Geerts}, \bibinfo{person}{Xibei Jia}, {and}
  \bibinfo{person}{Anastasios Kementsietsidis}.}
  \bibinfo{year}{2007}\natexlab{}.
\newblock \showarticletitle{Conditional Functional Dependencies for Data
  Cleaning}. In \bibinfo{booktitle}{\emph{{ICDE}}}. \bibinfo{publisher}{{IEEE}
  Computer Society}, \bibinfo{pages}{746--755}.
\newblock


\bibitem[Calautti et~al\mbox{.}(2021)]%
        {DBLP:conf/pods/CalauttiCP21}
\bibfield{author}{\bibinfo{person}{Marco Calautti}, \bibinfo{person}{Marco
  Console}, {and} \bibinfo{person}{Andreas Pieris}.}
  \bibinfo{year}{2021}\natexlab{}.
\newblock \showarticletitle{Benchmarking Approximate Consistent Query
  Answering}. In \bibinfo{booktitle}{\emph{{PODS}}}.
  \bibinfo{publisher}{{ACM}}, \bibinfo{pages}{233--246}.
\newblock


\bibitem[Cheng et~al\mbox{.}(2008)]%
        {DBLP:journals/pvldb/ChengCX08}
\bibfield{author}{\bibinfo{person}{Reynold Cheng}, \bibinfo{person}{Jinchuan
  Chen}, {and} \bibinfo{person}{Xike Xie}.} \bibinfo{year}{2008}\natexlab{}.
\newblock \showarticletitle{Cleaning uncertain data with quality guarantees}.
\newblock \bibinfo{journal}{\emph{Proc. {VLDB} Endow.}} \bibinfo{volume}{1},
  \bibinfo{number}{1} (\bibinfo{year}{2008}), \bibinfo{pages}{722--735}.
\newblock


\bibitem[Chomicki and Marcinkowski(2005)]%
        {10.5555/1709465.1709573}
\bibfield{author}{\bibinfo{person}{Jan Chomicki} {and} \bibinfo{person}{Jerzy
  Marcinkowski}.} \bibinfo{year}{2005}\natexlab{}.
\newblock \showarticletitle{Minimal-change integrity maintenance using tuple
  deletions}.
\newblock \bibinfo{journal}{\emph{Inf. Comput.}} \bibinfo{volume}{197},
  \bibinfo{number}{1-2} (\bibinfo{year}{2005}), \bibinfo{pages}{90--121}.
\newblock
\urldef\tempurl%
\url{https://doi.org/10.1016/j.ic.2004.04.007}
\showDOI{\tempurl}


\bibitem[Chomicki et~al\mbox{.}(2004)]%
        {chomicki2004hippo}
\bibfield{author}{\bibinfo{person}{Jan Chomicki}, \bibinfo{person}{Jerzy
  Marcinkowski}, {and} \bibinfo{person}{Slawomir Staworko}.}
  \bibinfo{year}{2004}\natexlab{}.
\newblock \showarticletitle{Hippo: {A} System for Computing Consistent Answers
  to a Class of {SQL} Queries}. In \bibinfo{booktitle}{\emph{{EDBT}}}
  \emph{(\bibinfo{series}{Lecture Notes in Computer Science},
  Vol.~\bibinfo{volume}{2992})}. \bibinfo{publisher}{Springer},
  \bibinfo{pages}{841--844}.
\newblock


\bibitem[Chu et~al\mbox{.}(2016)]%
        {DBLP:conf/sigmod/ChuIKW16}
\bibfield{author}{\bibinfo{person}{Xu Chu}, \bibinfo{person}{Ihab~F. Ilyas},
  \bibinfo{person}{Sanjay Krishnan}, {and} \bibinfo{person}{Jiannan Wang}.}
  \bibinfo{year}{2016}\natexlab{}.
\newblock \showarticletitle{Data Cleaning: Overview and Emerging Challenges}.
  In \bibinfo{booktitle}{\emph{{SIGMOD} Conference}}.
  \bibinfo{publisher}{{ACM}}, \bibinfo{pages}{2201--2206}.
\newblock


\bibitem[Chu et~al\mbox{.}(2013)]%
        {DBLP:conf/icde/ChuIP13}
\bibfield{author}{\bibinfo{person}{Xu Chu}, \bibinfo{person}{Ihab~F. Ilyas},
  {and} \bibinfo{person}{Paolo Papotti}.} \bibinfo{year}{2013}\natexlab{}.
\newblock \showarticletitle{Holistic data cleaning: Putting violations into
  context}. In \bibinfo{booktitle}{\emph{{ICDE}}}. \bibinfo{publisher}{{IEEE}
  Computer Society}, \bibinfo{pages}{458--469}.
\newblock


\bibitem[Chu et~al\mbox{.}(2015)]%
        {DBLP:conf/sigmod/ChuMIOP0Y15}
\bibfield{author}{\bibinfo{person}{Xu Chu}, \bibinfo{person}{John Morcos},
  \bibinfo{person}{Ihab~F. Ilyas}, \bibinfo{person}{Mourad Ouzzani},
  \bibinfo{person}{Paolo Papotti}, \bibinfo{person}{Nan Tang}, {and}
  \bibinfo{person}{Yin Ye}.} \bibinfo{year}{2015}\natexlab{}.
\newblock \showarticletitle{{KATARA:} {A} Data Cleaning System Powered by
  Knowledge Bases and Crowdsourcing}. In \bibinfo{booktitle}{\emph{{SIGMOD}
  Conference}}. \bibinfo{publisher}{{ACM}}, \bibinfo{pages}{1247--1261}.
\newblock


\bibitem[CloudLab(2018)]%
        {cloudlab}
CloudLab \bibinfo{year}{2018}\natexlab{}.
\newblock \bibinfo{howpublished}{\url{https://www.cloudlab.us/}}.
\newblock


\bibitem[Davies and Bacchus(2011)]%
        {DBLP:conf/cp/DaviesB11}
\bibfield{author}{\bibinfo{person}{Jessica Davies} {and}
  \bibinfo{person}{Fahiem Bacchus}.} \bibinfo{year}{2011}\natexlab{}.
\newblock \showarticletitle{Solving {MAXSAT} by Solving a Sequence of Simpler
  {SAT} Instances}. In \bibinfo{booktitle}{\emph{{CP}}}
  \emph{(\bibinfo{series}{Lecture Notes in Computer Science},
  Vol.~\bibinfo{volume}{6876})}. \bibinfo{publisher}{Springer},
  \bibinfo{pages}{225--239}.
\newblock


\bibitem[Deep et~al\mbox{.}(2020)]%
        {DBLP:conf/sigmod/DeepHK20}
\bibfield{author}{\bibinfo{person}{Shaleen Deep}, \bibinfo{person}{Xiao Hu},
  {and} \bibinfo{person}{Paraschos Koutris}.} \bibinfo{year}{2020}\natexlab{}.
\newblock \showarticletitle{Fast Join Project Query Evaluation using Matrix
  Multiplication}. In \bibinfo{booktitle}{\emph{{SIGMOD} Conference}}.
  \bibinfo{publisher}{{ACM}}, \bibinfo{pages}{1213--1223}.
\newblock


\bibitem[Deep et~al\mbox{.}(2021)]%
        {DBLP:conf/icdt/DeepHK21}
\bibfield{author}{\bibinfo{person}{Shaleen Deep}, \bibinfo{person}{Xiao Hu},
  {and} \bibinfo{person}{Paraschos Koutris}.} \bibinfo{year}{2021}\natexlab{}.
\newblock \showarticletitle{Enumeration Algorithms for Conjunctive Queries with
  Projection}. In \bibinfo{booktitle}{\emph{{ICDT}}}
  \emph{(\bibinfo{series}{LIPIcs}, Vol.~\bibinfo{volume}{186})}.
  \bibinfo{publisher}{Schloss Dagstuhl - Leibniz-Zentrum f{\"{u}}r Informatik},
  \bibinfo{pages}{14:1--14:17}.
\newblock


\bibitem[Dixit(2021)]%
        {dixit2021answering}
\bibfield{author}{\bibinfo{person}{Akhil~Anand Dixit}.}
  \bibinfo{year}{2021}\natexlab{}.
\newblock \emph{\bibinfo{title}{Answering Queries Over Inconsistent Databases
  Using SAT Solvers}}.
\newblock \bibinfo{thesistype}{Ph.\,D. Dissertation}. \bibinfo{school}{UC Santa
  Cruz}.
\newblock


\bibitem[Dixit and Kolaitis(2019)]%
        {DBLP:conf/sat/DixitK19}
\bibfield{author}{\bibinfo{person}{Akhil~A. Dixit} {and}
  \bibinfo{person}{Phokion~G. Kolaitis}.} \bibinfo{year}{2019}\natexlab{}.
\newblock \showarticletitle{A SAT-Based System for Consistent Query Answering}.
  In \bibinfo{booktitle}{\emph{{SAT}}} \emph{(\bibinfo{series}{Lecture Notes in
  Computer Science}, Vol.~\bibinfo{volume}{11628})}.
  \bibinfo{publisher}{Springer}, \bibinfo{pages}{117--135}.
\newblock


\bibitem[Dixit and Kolaitis(2021)]%
        {DBLP:journals/corr/abs-2103-03314}
\bibfield{author}{\bibinfo{person}{Akhil~A. Dixit} {and}
  \bibinfo{person}{Phokion~G. Kolaitis}.} \bibinfo{year}{2021}\natexlab{}.
\newblock \showarticletitle{Consistent Answers of Aggregation Queries using
  {SAT} Solvers}.
\newblock \bibinfo{journal}{\emph{CoRR}}  \bibinfo{volume}{abs/2103.03314}
  (\bibinfo{year}{2021}).
\newblock


\bibitem[Ebaid et~al\mbox{.}(2013)]%
        {DBLP:journals/pvldb/EbaidEIOQ0Y13}
\bibfield{author}{\bibinfo{person}{Amr Ebaid}, \bibinfo{person}{Ahmed~K.
  Elmagarmid}, \bibinfo{person}{Ihab~F. Ilyas}, \bibinfo{person}{Mourad
  Ouzzani}, \bibinfo{person}{Jorge{-}Arnulfo Quian{\'{e}}{-}Ruiz},
  \bibinfo{person}{Nan Tang}, {and} \bibinfo{person}{Si Yin}.}
  \bibinfo{year}{2013}\natexlab{}.
\newblock \showarticletitle{{NADEEF:} {A} Generalized Data Cleaning System}.
\newblock \bibinfo{journal}{\emph{Proc. {VLDB} Endow.}} \bibinfo{volume}{6},
  \bibinfo{number}{12} (\bibinfo{year}{2013}), \bibinfo{pages}{1218--1221}.
\newblock


\bibitem[Fuxman et~al\mbox{.}(2005)]%
        {fuxman2005conquer}
\bibfield{author}{\bibinfo{person}{Ariel Fuxman}, \bibinfo{person}{Elham
  Fazli}, {and} \bibinfo{person}{Ren{\'e}e~J Miller}.}
  \bibinfo{year}{2005}\natexlab{}.
\newblock \showarticletitle{Conquer: Efficient management of inconsistent
  databases}. In \bibinfo{booktitle}{\emph{Proceedings of the 2005 ACM SIGMOD
  international conference on Management of data}}. \bibinfo{pages}{155--166}.
\newblock


\bibitem[Fuxman and Miller(2007)]%
        {FUXMAN2007610}
\bibfield{author}{\bibinfo{person}{Ariel Fuxman} {and}
  \bibinfo{person}{Ren{\'{e}}e~J. Miller}.} \bibinfo{year}{2007}\natexlab{}.
\newblock \showarticletitle{First-order query rewriting for inconsistent
  databases}.
\newblock \bibinfo{journal}{\emph{J. Comput. Syst. Sci.}} \bibinfo{volume}{73},
  \bibinfo{number}{4} (\bibinfo{year}{2007}), \bibinfo{pages}{610--635}.
\newblock


\bibitem[Ge et~al\mbox{.}(2021)]%
        {DBLP:conf/icde/GeGM0W21}
\bibfield{author}{\bibinfo{person}{Congcong Ge}, \bibinfo{person}{Yunjun Gao},
  \bibinfo{person}{Xiaoye Miao}, \bibinfo{person}{Bin Yao}, {and}
  \bibinfo{person}{Haobo Wang}.} \bibinfo{year}{2021}\natexlab{}.
\newblock \showarticletitle{A Hybrid Data Cleaning Framework Using Markov Logic
  Networks (Extended Abstract)}. In \bibinfo{booktitle}{\emph{{ICDE}}}.
  \bibinfo{publisher}{{IEEE}}, \bibinfo{pages}{2344--2345}.
\newblock


\bibitem[Geerts et~al\mbox{.}(2013)]%
        {DBLP:journals/pvldb/GeertsMPS13}
\bibfield{author}{\bibinfo{person}{Floris Geerts},
  \bibinfo{person}{Giansalvatore Mecca}, \bibinfo{person}{Paolo Papotti}, {and}
  \bibinfo{person}{Donatello Santoro}.} \bibinfo{year}{2013}\natexlab{}.
\newblock \showarticletitle{The {LLUNATIC} Data-Cleaning Framework}.
\newblock \bibinfo{journal}{\emph{Proc. {VLDB} Endow.}} \bibinfo{volume}{6},
  \bibinfo{number}{9} (\bibinfo{year}{2013}), \bibinfo{pages}{625--636}.
\newblock


\bibitem[Greco et~al\mbox{.}(2003)]%
        {GrecoGZ03}
\bibfield{author}{\bibinfo{person}{Gianluigi Greco}, \bibinfo{person}{Sergio
  Greco}, {and} \bibinfo{person}{Ester Zumpano}.}
  \bibinfo{year}{2003}\natexlab{}.
\newblock \showarticletitle{A Logical Framework for Querying and Repairing
  Inconsistent Databases}.
\newblock \bibinfo{journal}{\emph{{IEEE} Trans. Knowl. Data Eng.}}
  \bibinfo{volume}{15}, \bibinfo{number}{6} (\bibinfo{year}{2003}),
  \bibinfo{pages}{1389--1408}.
\newblock


\bibitem[He et~al\mbox{.}(2018)]%
        {DBLP:journals/pvldb/HeCGZNC18}
\bibfield{author}{\bibinfo{person}{Yeye He}, \bibinfo{person}{Xu Chu},
  \bibinfo{person}{Kris Ganjam}, \bibinfo{person}{Yudian Zheng},
  \bibinfo{person}{Vivek~R. Narasayya}, {and} \bibinfo{person}{Surajit
  Chaudhuri}.} \bibinfo{year}{2018}\natexlab{}.
\newblock \showarticletitle{Transform-Data-by-Example {(TDE):} An Extensible
  Search Engine for Data Transformations}.
\newblock \bibinfo{journal}{\emph{Proc. {VLDB} Endow.}} \bibinfo{volume}{11},
  \bibinfo{number}{10} (\bibinfo{year}{2018}), \bibinfo{pages}{1165--1177}.
\newblock


\bibitem[Kahale et~al\mbox{.}(2020)]%
        {kahale2020meta}
\bibfield{author}{\bibinfo{person}{Lara~A Kahale}, \bibinfo{person}{Assem~M
  Khamis}, \bibinfo{person}{Batoul Diab}, \bibinfo{person}{Yaping Chang},
  \bibinfo{person}{Luciane~Cruz Lopes}, \bibinfo{person}{Arnav Agarwal},
  \bibinfo{person}{Ling Li}, \bibinfo{person}{Reem~A Mustafa},
  \bibinfo{person}{Serge Koujanian}, \bibinfo{person}{Reem Waziry},
  {et~al\mbox{.}}} \bibinfo{year}{2020}\natexlab{}.
\newblock \showarticletitle{Meta-Analyses Proved Inconsistent in How Missing
  Data Were Handled Across Their Included Primary Trials: A Methodological
  Survey}.
\newblock \bibinfo{journal}{\emph{Clinical Epidemiology}}  \bibinfo{volume}{12}
  (\bibinfo{year}{2020}), \bibinfo{pages}{527--535}.
\newblock


\bibitem[Karlas et~al\mbox{.}(2020)]%
        {DBLP:journals/pvldb/KarlasLWGC0020}
\bibfield{author}{\bibinfo{person}{Bojan Karlas}, \bibinfo{person}{Peng Li},
  \bibinfo{person}{Renzhi Wu}, \bibinfo{person}{Nezihe~Merve G{\"{u}}rel},
  \bibinfo{person}{Xu Chu}, \bibinfo{person}{Wentao Wu}, {and}
  \bibinfo{person}{Ce Zhang}.} \bibinfo{year}{2020}\natexlab{}.
\newblock \showarticletitle{Nearest Neighbor Classifiers over Incomplete
  Information: From Certain Answers to Certain Predictions}.
\newblock \bibinfo{journal}{\emph{Proc. {VLDB} Endow.}} \bibinfo{volume}{14},
  \bibinfo{number}{3} (\bibinfo{year}{2020}), \bibinfo{pages}{255--267}.
\newblock


\bibitem[Katsis et~al\mbox{.}(2010)]%
        {katsis2010inconsistency}
\bibfield{author}{\bibinfo{person}{Yannis Katsis}, \bibinfo{person}{Alin
  Deutsch}, \bibinfo{person}{Yannis Papakonstantinou}, {and}
  \bibinfo{person}{Vasilis Vassalos}.} \bibinfo{year}{2010}\natexlab{}.
\newblock \showarticletitle{Inconsistency resolution in online databases}. In
  \bibinfo{booktitle}{\emph{{ICDE}}}. \bibinfo{publisher}{{IEEE} Computer
  Society}, \bibinfo{pages}{1205--1208}.
\newblock


\bibitem[Khalfioui et~al\mbox{.}(2020)]%
        {Conquesto}
\bibfield{author}{\bibinfo{person}{Aziz Amezian~El Khalfioui},
  \bibinfo{person}{Jonathan Joertz}, \bibinfo{person}{Dorian Labeeuw},
  \bibinfo{person}{Ga{\"{e}}tan Staquet}, {and} \bibinfo{person}{Jef Wijsen}.}
  \bibinfo{year}{2020}\natexlab{}.
\newblock \showarticletitle{Optimization of Answer Set Programs for Consistent
  Query Answering by Means of First-Order Rewriting}. In
  \bibinfo{booktitle}{\emph{{CIKM}}}. \bibinfo{publisher}{{ACM}},
  \bibinfo{pages}{25--34}.
\newblock


\bibitem[Khayyat et~al\mbox{.}(2015)]%
        {DBLP:conf/sigmod/KhayyatIJMOPQ0Y15}
\bibfield{author}{\bibinfo{person}{Zuhair Khayyat}, \bibinfo{person}{Ihab~F.
  Ilyas}, \bibinfo{person}{Alekh Jindal}, \bibinfo{person}{Samuel Madden},
  \bibinfo{person}{Mourad Ouzzani}, \bibinfo{person}{Paolo Papotti},
  \bibinfo{person}{Jorge{-}Arnulfo Quian{\'{e}}{-}Ruiz}, \bibinfo{person}{Nan
  Tang}, {and} \bibinfo{person}{Si Yin}.} \bibinfo{year}{2015}\natexlab{}.
\newblock \showarticletitle{BigDansing: {A} System for Big Data Cleansing}. In
  \bibinfo{booktitle}{\emph{{SIGMOD} Conference}}. \bibinfo{publisher}{{ACM}},
  \bibinfo{pages}{1215--1230}.
\newblock


\bibitem[Kohler and Link(2021)]%
        {kohler2021possibilistic}
\bibfield{author}{\bibinfo{person}{Henning Kohler} {and}
  \bibinfo{person}{Sebastian Link}.} \bibinfo{year}{2021}\natexlab{}.
\newblock \showarticletitle{Possibilistic data cleaning}.
\newblock \bibinfo{journal}{\emph{IEEE Transactions on Knowledge and Data
  Engineering}} (\bibinfo{year}{2021}).
\newblock


\bibitem[Kolaitis and Pema(2012)]%
        {KOLAITIS201277}
\bibfield{author}{\bibinfo{person}{Phokion~G. Kolaitis} {and}
  \bibinfo{person}{Enela Pema}.} \bibinfo{year}{2012}\natexlab{}.
\newblock \showarticletitle{A dichotomy in the complexity of consistent query
  answering for queries with two atoms}.
\newblock \bibinfo{journal}{\emph{Inf. Process. Lett.}} \bibinfo{volume}{112},
  \bibinfo{number}{3} (\bibinfo{year}{2012}), \bibinfo{pages}{77--85}.
\newblock


\bibitem[Kolaitis et~al\mbox{.}(2013a)]%
        {10.14778/2536336.2536341}
\bibfield{author}{\bibinfo{person}{Phokion~G. Kolaitis}, \bibinfo{person}{Enela
  Pema}, {and} \bibinfo{person}{Wang{-}Chiew Tan}.}
  \bibinfo{year}{2013}\natexlab{a}.
\newblock \showarticletitle{Efficient Querying of Inconsistent Databases with
  Binary Integer Programming}.
\newblock \bibinfo{journal}{\emph{Proc. {VLDB} Endow.}} \bibinfo{volume}{6},
  \bibinfo{number}{6} (\bibinfo{year}{2013}), \bibinfo{pages}{397--408}.
\newblock


\bibitem[Kolaitis et~al\mbox{.}(2013b)]%
        {DBLP:journals/pvldb/KolaitisPT13}
\bibfield{author}{\bibinfo{person}{Phokion~G. Kolaitis}, \bibinfo{person}{Enela
  Pema}, {and} \bibinfo{person}{Wang{-}Chiew Tan}.}
  \bibinfo{year}{2013}\natexlab{b}.
\newblock \showarticletitle{Efficient Querying of Inconsistent Databases with
  Binary Integer Programming}.
\newblock \bibinfo{journal}{\emph{Proc. {VLDB} Endow.}} \bibinfo{volume}{6},
  \bibinfo{number}{6} (\bibinfo{year}{2013}), \bibinfo{pages}{397--408}.
\newblock


\bibitem[Koutris et~al\mbox{.}(2021)]%
        {DBLP:conf/pods/KoutrisOW21}
\bibfield{author}{\bibinfo{person}{Paraschos Koutris}, \bibinfo{person}{Xiating
  Ouyang}, {and} \bibinfo{person}{Jef Wijsen}.}
  \bibinfo{year}{2021}\natexlab{}.
\newblock \showarticletitle{Consistent Query Answering for Primary Keys on Path
  Queries}. In \bibinfo{booktitle}{\emph{{PODS}}}. \bibinfo{publisher}{{ACM}},
  \bibinfo{pages}{215--232}.
\newblock


\bibitem[Koutris and Suciu(2014)]%
        {KoutrisS14}
\bibfield{author}{\bibinfo{person}{Paraschos Koutris} {and}
  \bibinfo{person}{Dan Suciu}.} \bibinfo{year}{2014}\natexlab{}.
\newblock \showarticletitle{A Dichotomy on the Complexity of Consistent Query
  Answering for Atoms with Simple Keys}. In \bibinfo{booktitle}{\emph{{ICDT}}}.
  \bibinfo{publisher}{OpenProceedings.org}, \bibinfo{pages}{165--176}.
\newblock


\bibitem[Koutris and Wijsen(2015)]%
        {KoutrisW15}
\bibfield{author}{\bibinfo{person}{Paraschos Koutris} {and}
  \bibinfo{person}{Jef Wijsen}.} \bibinfo{year}{2015}\natexlab{}.
\newblock \showarticletitle{The Data Complexity of Consistent Query Answering
  for Self-Join-Free Conjunctive Queries Under Primary Key Constraints}. In
  \bibinfo{booktitle}{\emph{{PODS}}}. \bibinfo{publisher}{{ACM}},
  \bibinfo{pages}{17--29}.
\newblock


\bibitem[Koutris and Wijsen(2017)]%
        {KoutrisW17}
\bibfield{author}{\bibinfo{person}{Paraschos Koutris} {and}
  \bibinfo{person}{Jef Wijsen}.} \bibinfo{year}{2017}\natexlab{}.
\newblock \showarticletitle{Consistent Query Answering for Self-Join-Free
  Conjunctive Queries Under Primary Key Constraints}.
\newblock \bibinfo{journal}{\emph{{ACM} Trans. Database Syst.}}
  \bibinfo{volume}{42}, \bibinfo{number}{2} (\bibinfo{year}{2017}),
  \bibinfo{pages}{9:1--9:45}.
\newblock


\bibitem[Koutris and Wijsen(2018)]%
        {KoutrisW18}
\bibfield{author}{\bibinfo{person}{Paraschos Koutris} {and}
  \bibinfo{person}{Jef Wijsen}.} \bibinfo{year}{2018}\natexlab{}.
\newblock \showarticletitle{Consistent Query Answering for Primary Keys and
  Conjunctive Queries with Negated Atoms}. In
  \bibinfo{booktitle}{\emph{{PODS}}}. \bibinfo{publisher}{{ACM}},
  \bibinfo{pages}{209--224}.
\newblock


\bibitem[Koutris and Wijsen(2019)]%
        {DBLP:conf/icdt/KoutrisW19}
\bibfield{author}{\bibinfo{person}{Paraschos Koutris} {and}
  \bibinfo{person}{Jef Wijsen}.} \bibinfo{year}{2019}\natexlab{}.
\newblock \showarticletitle{Consistent Query Answering for Primary Keys in
  Logspace}. In \bibinfo{booktitle}{\emph{{ICDT}}}
  \emph{(\bibinfo{series}{LIPIcs}, Vol.~\bibinfo{volume}{127})}.
  \bibinfo{publisher}{Schloss Dagstuhl - Leibniz-Zentrum f{\"{u}}r Informatik},
  \bibinfo{pages}{23:1--23:19}.
\newblock


\bibitem[Koutris and Wijsen(2020)]%
        {KoutrisW20}
\bibfield{author}{\bibinfo{person}{Paraschos Koutris} {and}
  \bibinfo{person}{Jef Wijsen}.} \bibinfo{year}{2020}\natexlab{}.
\newblock \showarticletitle{First-Order Rewritability in Consistent Query
  Answering with Respect to Multiple Keys}. In
  \bibinfo{booktitle}{\emph{{PODS}}}. \bibinfo{publisher}{{ACM}},
  \bibinfo{pages}{113--129}.
\newblock


\bibitem[Koutris and Wijsen(2021)]%
        {KoutrisWTOCS20}
\bibfield{author}{\bibinfo{person}{Paraschos Koutris} {and}
  \bibinfo{person}{Jef Wijsen}.} \bibinfo{year}{2021}\natexlab{}.
\newblock \showarticletitle{Consistent Query Answering for Primary Keys in
  Datalog}.
\newblock \bibinfo{journal}{\emph{Theory Comput. Syst.}} \bibinfo{volume}{65},
  \bibinfo{number}{1} (\bibinfo{year}{2021}), \bibinfo{pages}{122--178}.
\newblock


\bibitem[Krishnan et~al\mbox{.}(2016)]%
        {DBLP:journals/pvldb/KrishnanWWFG16}
\bibfield{author}{\bibinfo{person}{Sanjay Krishnan}, \bibinfo{person}{Jiannan
  Wang}, \bibinfo{person}{Eugene Wu}, \bibinfo{person}{Michael~J. Franklin},
  {and} \bibinfo{person}{Ken Goldberg}.} \bibinfo{year}{2016}\natexlab{}.
\newblock \showarticletitle{ActiveClean: Interactive Data Cleaning For
  Statistical Modeling}.
\newblock \bibinfo{journal}{\emph{Proc. {VLDB} Endow.}} \bibinfo{volume}{9},
  \bibinfo{number}{12} (\bibinfo{year}{2016}), \bibinfo{pages}{948--959}.
\newblock


\bibitem[Li et~al\mbox{.}(2021)]%
        {DBLP:conf/icde/LiRBZCZ21}
\bibfield{author}{\bibinfo{person}{Peng Li}, \bibinfo{person}{Xi Rao},
  \bibinfo{person}{Jennifer Blase}, \bibinfo{person}{Yue Zhang},
  \bibinfo{person}{Xu Chu}, {and} \bibinfo{person}{Ce Zhang}.}
  \bibinfo{year}{2021}\natexlab{}.
\newblock \showarticletitle{CleanML: {A} Study for Evaluating the Impact of
  Data Cleaning on {ML} Classification Tasks}. In
  \bibinfo{booktitle}{\emph{{ICDE}}}. \bibinfo{publisher}{{IEEE}},
  \bibinfo{pages}{13--24}.
\newblock


\bibitem[Lopatenko and Bertossi(2007)]%
        {DBLP:conf/icdt/LopatenkoB07}
\bibfield{author}{\bibinfo{person}{Andrei Lopatenko} {and}
  \bibinfo{person}{Leopoldo~E. Bertossi}.} \bibinfo{year}{2007}\natexlab{}.
\newblock \showarticletitle{Complexity of Consistent Query Answering in
  Databases Under Cardinality-Based and Incremental Repair Semantics}. In
  \bibinfo{booktitle}{\emph{{ICDT}}} \emph{(\bibinfo{series}{Lecture Notes in
  Computer Science}, Vol.~\bibinfo{volume}{4353})}.
  \bibinfo{publisher}{Springer}, \bibinfo{pages}{179--193}.
\newblock


\bibitem[Manna et~al\mbox{.}(2015)]%
        {manna_ricca_terracina_2015}
\bibfield{author}{\bibinfo{person}{Marco Manna}, \bibinfo{person}{Francesco
  Ricca}, {and} \bibinfo{person}{Giorgio Terracina}.}
  \bibinfo{year}{2015}\natexlab{}.
\newblock \showarticletitle{Taming primary key violations to query large
  inconsistent data via {ASP}}.
\newblock \bibinfo{journal}{\emph{Theory Pract. Log. Program.}}
  \bibinfo{volume}{15}, \bibinfo{number}{4-5} (\bibinfo{year}{2015}),
  \bibinfo{pages}{696--710}.
\newblock


\bibitem[Marileo and Bertossi(2005)]%
        {DBLP:conf/sccc/MarileoB05}
\bibfield{author}{\bibinfo{person}{M{\'{o}}nica~Caniup{\'{a}}n Marileo} {and}
  \bibinfo{person}{Leopoldo~E. Bertossi}.} \bibinfo{year}{2005}\natexlab{}.
\newblock \showarticletitle{Optimizing repair programs for consistent query
  answering}. In \bibinfo{booktitle}{\emph{{SCCC}}}. \bibinfo{publisher}{{IEEE}
  Computer Society}, \bibinfo{pages}{3--12}.
\newblock


\bibitem[O'Neil et~al\mbox{.}(2009)]%
        {DBLP:conf/tpctc/ONeilOCR09}
\bibfield{author}{\bibinfo{person}{Patrick~E. O'Neil},
  \bibinfo{person}{Elizabeth~J. O'Neil}, \bibinfo{person}{Xuedong Chen}, {and}
  \bibinfo{person}{Stephen Revilak}.} \bibinfo{year}{2009}\natexlab{}.
\newblock \showarticletitle{The Star Schema Benchmark and Augmented Fact Table
  Indexing}. In \bibinfo{booktitle}{\emph{Performance Evaluation and
  Benchmarking, First {TPC} Technology Conference, {TPCTC} 2009, Lyon, France,
  August 24-28, 2009, Revised Selected Papers}} \emph{(\bibinfo{series}{Lecture
  Notes in Computer Science}, Vol.~\bibinfo{volume}{5895})},
  \bibfield{editor}{\bibinfo{person}{Raghunath~Othayoth Nambiar} {and}
  \bibinfo{person}{Meikel Poess}} (Eds.). \bibinfo{publisher}{Springer},
  \bibinfo{pages}{237--252}.
\newblock
\urldef\tempurl%
\url{https://doi.org/10.1007/978-3-642-10424-4\_17}
\showDOI{\tempurl}


\bibitem[Poess and Floyd(2000)]%
        {poess2000new}
\bibfield{author}{\bibinfo{person}{Meikel Poess} {and} \bibinfo{person}{Chris
  Floyd}.} \bibinfo{year}{2000}\natexlab{}.
\newblock \showarticletitle{New TPC benchmarks for decision support and web
  commerce}.
\newblock \bibinfo{journal}{\emph{ACM Sigmod Record}} \bibinfo{volume}{29},
  \bibinfo{number}{4} (\bibinfo{year}{2000}), \bibinfo{pages}{64--71}.
\newblock


\bibitem[Prokoshyna et~al\mbox{.}(2015)]%
        {DBLP:journals/pvldb/ProkoshynaSCMS15}
\bibfield{author}{\bibinfo{person}{Nataliya Prokoshyna},
  \bibinfo{person}{Jaroslaw Szlichta}, \bibinfo{person}{Fei Chiang},
  \bibinfo{person}{Ren{\'{e}}e~J. Miller}, {and} \bibinfo{person}{Divesh
  Srivastava}.} \bibinfo{year}{2015}\natexlab{}.
\newblock \showarticletitle{Combining Quantitative and Logical Data Cleaning}.
\newblock \bibinfo{journal}{\emph{Proc. {VLDB} Endow.}} \bibinfo{volume}{9},
  \bibinfo{number}{4} (\bibinfo{year}{2015}), \bibinfo{pages}{300--311}.
\newblock


\bibitem[Rahm and Do(2000)]%
        {rahm2000data}
\bibfield{author}{\bibinfo{person}{Erhard Rahm} {and} \bibinfo{person}{Hong~Hai
  Do}.} \bibinfo{year}{2000}\natexlab{}.
\newblock \showarticletitle{Data cleaning: Problems and current approaches}.
\newblock \bibinfo{journal}{\emph{IEEE Data Eng. Bull.}} \bibinfo{volume}{23},
  \bibinfo{number}{4} (\bibinfo{year}{2000}), \bibinfo{pages}{3--13}.
\newblock


\bibitem[Rekatsinas et~al\mbox{.}(2017)]%
        {DBLP:journals/pvldb/RekatsinasCIR17}
\bibfield{author}{\bibinfo{person}{Theodoros Rekatsinas}, \bibinfo{person}{Xu
  Chu}, \bibinfo{person}{Ihab~F. Ilyas}, {and} \bibinfo{person}{Christopher
  R{\'{e}}}.} \bibinfo{year}{2017}\natexlab{}.
\newblock \showarticletitle{HoloClean: Holistic Data Repairs with Probabilistic
  Inference}.
\newblock \bibinfo{journal}{\emph{Proc. {VLDB} Endow.}} \bibinfo{volume}{10},
  \bibinfo{number}{11} (\bibinfo{year}{2017}), \bibinfo{pages}{1190--1201}.
\newblock


\bibitem[Rezig et~al\mbox{.}(2021)]%
        {DBLP:journals/pvldb/RezigOAEMS21}
\bibfield{author}{\bibinfo{person}{El~Kindi Rezig}, \bibinfo{person}{Mourad
  Ouzzani}, \bibinfo{person}{Walid~G. Aref}, \bibinfo{person}{Ahmed~K.
  Elmagarmid}, \bibinfo{person}{Ahmed~R. Mahmood}, {and}
  \bibinfo{person}{Michael Stonebraker}.} \bibinfo{year}{2021}\natexlab{}.
\newblock \showarticletitle{Horizon: Scalable Dependency-driven Data Cleaning}.
\newblock \bibinfo{journal}{\emph{Proc. {VLDB} Endow.}} \bibinfo{volume}{14},
  \bibinfo{number}{11} (\bibinfo{year}{2021}), \bibinfo{pages}{2546--2554}.
\newblock


\bibitem[Rodr{\'{\i}}guez et~al\mbox{.}(2013)]%
        {DBLP:journals/is/RodriguezBM13}
\bibfield{author}{\bibinfo{person}{M.~Andrea Rodr{\'{\i}}guez},
  \bibinfo{person}{Leopoldo~E. Bertossi}, {and}
  \bibinfo{person}{M{\'{o}}nica~Caniup{\'{a}}n Marileo}.}
  \bibinfo{year}{2013}\natexlab{}.
\newblock \showarticletitle{Consistent query answering under spatial semantic
  constraints}.
\newblock \bibinfo{journal}{\emph{Inf. Syst.}} \bibinfo{volume}{38},
  \bibinfo{number}{2} (\bibinfo{year}{2013}), \bibinfo{pages}{244--263}.
\newblock


\bibitem[Tong et~al\mbox{.}(2014)]%
        {DBLP:conf/icde/TongCZLC14}
\bibfield{author}{\bibinfo{person}{Yongxin Tong}, \bibinfo{person}{Caleb~Chen
  Cao}, \bibinfo{person}{Chen~Jason Zhang}, \bibinfo{person}{Yatao Li}, {and}
  \bibinfo{person}{Lei Chen}.} \bibinfo{year}{2014}\natexlab{}.
\newblock \showarticletitle{CrowdCleaner: Data cleaning for multi-version data
  on the web via crowdsourcing}. In \bibinfo{booktitle}{\emph{{ICDE}}}.
  \bibinfo{publisher}{{IEEE} Computer Society}, \bibinfo{pages}{1182--1185}.
\newblock


\bibitem[Wijsen(2010)]%
        {wijsen2010first}
\bibfield{author}{\bibinfo{person}{Jef Wijsen}.}
  \bibinfo{year}{2010}\natexlab{}.
\newblock \showarticletitle{On the first-order expressibility of computing
  certain answers to conjunctive queries over uncertain databases}. In
  \bibinfo{booktitle}{\emph{{PODS}}}. \bibinfo{publisher}{{ACM}},
  \bibinfo{pages}{179--190}.
\newblock


\bibitem[Wijsen(2012)]%
        {10.1145/2188349.2188351}
\bibfield{author}{\bibinfo{person}{Jef Wijsen}.}
  \bibinfo{year}{2012}\natexlab{}.
\newblock \showarticletitle{Certain conjunctive query answering in first-order
  logic}.
\newblock \bibinfo{journal}{\emph{{ACM} Trans. Database Syst.}}
  \bibinfo{volume}{37}, \bibinfo{number}{2} (\bibinfo{year}{2012}),
  \bibinfo{pages}{9:1--9:35}.
\newblock


\bibitem[Wijsen(2019)]%
        {10.1145/3377391.3377393}
\bibfield{author}{\bibinfo{person}{Jef Wijsen}.}
  \bibinfo{year}{2019}\natexlab{}.
\newblock \showarticletitle{Foundations of Query Answering on Inconsistent
  Databases}.
\newblock \bibinfo{journal}{\emph{{SIGMOD} Rec.}} \bibinfo{volume}{48},
  \bibinfo{number}{3} (\bibinfo{year}{2019}), \bibinfo{pages}{6--16}.
\newblock


\bibitem[Yannakakis(1981)]%
        {DBLP:conf/vldb/Yannakakis81}
\bibfield{author}{\bibinfo{person}{Mihalis Yannakakis}.}
  \bibinfo{year}{1981}\natexlab{}.
\newblock \showarticletitle{Algorithms for Acyclic Database Schemes}. In
  \bibinfo{booktitle}{\emph{Very Large Data Bases, 7th International
  Conference, September 9-11, 1981, Cannes, France, Proceedings}}.
  \bibinfo{publisher}{{IEEE} Computer Society}, \bibinfo{pages}{82--94}.
\newblock


\end{thebibliography}
\balance
\appendix

\section{Efficient construction of PPJT}
\label{sec:no-key-containment}

\begin{proposition}
\label{prop:no-key-containment}
Let $q$ be an acyclic self-join-free BCQ whose attack graph is acyclic. 
If for all two distinct atoms $F,G\in q$, neither of $\key{F}$ or $\key{G}$ is included in the other, then $q$ has a PPJT that can be constructed in quadratic time in the number of atoms in $q$.
\end{proposition}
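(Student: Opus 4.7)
The plan is to construct the PPJT by a top-down recursive procedure that mirrors the inductive structure in Definition~\ref{defn:ppjt}. Starting from $q$, I would locate an unattacked atom $R$ (one exists because the attack graph of $q$ is acyclic), declare $R$ to be the root of the subtree under construction, partition $q \setminus \{R\}$ into its connected components $q_1, \dots, q_k$, recursively build a PPJT $(\tau_i, R_i)$ for each $q_i$, and finally attach every $\tau_i$ to $R$ by an edge $(R, R_i)$. The resulting rooted tree is a join tree of $q$ by the usual argument that removing an atom from an acyclic CQ splits it into acyclic components, and the PPJT property is enforced by construction: the root of every recursive subproblem is chosen to be unattacked in the subquery consisting of the atoms of that subproblem.

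For this recursion to be well-defined, each $q_i$ must inherit the four hypotheses of the proposition. Acyclicity of $q_i$ as a CQ, self-join-freeness, and key-incomparability are immediate, since all three properties are preserved under restriction to a subset of atoms. The subtle requirement is that the attack graph of each $q_i$ must remain acyclic, so that the recursive call can again find an unattacked atom to designate as its root.

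Establishing this invariant is the main obstacle. I would prove the following key lemma: under the key-incomparability hypothesis, if $q$ has an acyclic attack graph and $R$ is unattacked in $q$, then the attack graph of every connected component of $q \setminus \{R\}$ is also acyclic. The intuition is that key-incomparability prevents the closures $\keyclosure{F}{q}$ from shrinking in a consequential way when the FDs contributed by $R$ (and by atoms in sibling components) are removed, because no such FD can refine the key-closure of an atom in another component without its key being included in some key of that component. Consequently, any attack present in $q_i$ should already correspond to an attack in $q$, so a cycle in the attack graph of some $q_i$ would lift to a cycle in the attack graph of $q$, contradicting the hypothesis. Carrying out this lift carefully along the unique join-tree path between two atoms, while accounting for how the induced join tree changes when $R$ is removed, is the delicate technical step.

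For the complexity, let $n$ be the number of atoms in $q$. The attack graph of a subquery on $m$ atoms can be computed in time polynomial in $m$ from the definitions, so identifying an unattacked atom and partitioning the remaining atoms into connected components takes $O(n)$ work in each call. Since each atom is designated as the root of exactly one recursive call, there are $n$ such calls in total, giving an overall running time of $O(n^2)$ as claimed.
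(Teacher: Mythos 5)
Your overall architecture --- root at an unattacked atom, recurse on what remains, and show that attack-graph acyclicity is inherited by the pieces --- is the same as the paper's, but there is a genuine gap at the gluing step, and it sits exactly where the paper does all of its work. After you recursively obtain a PPJT $(\tau_i,R_i)$ for a piece $q_i$ and attach $R_i$ to $R$, the result is a join tree of $q$ only if every variable of $\queryvars{q_i}\cap\atomvars{R}$ occurs in $R_i$ (so that, by the running-intersection property of $\tau_i$, it also occurs on every atom between $R_i$ and any atom of $q_i$ containing it). An arbitrary unattacked atom of $q_i$ need not satisfy this, so ``the resulting rooted tree is a join tree of $q$ by the usual argument'' does not hold: the GYO-style argument gives acyclicity of the pieces, not that arbitrary join trees of the pieces can be hung off $R$ by single edges. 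Worse, your decomposition into connected components of $q\setminus\{R\}$ can make a valid root nonexistent: for $q()\obtainedfrom F(\underline{a},x,y,z), G(\underline{x,z},w), H(\underline{y,z},v)$ the hypotheses hold, $F$ is unattacked, and $\{G,H\}$ forms one connected component sharing $\{x,y,z\}$ with $F$, yet no single atom of the component contains all three variables, so the component cannot be attached to $F$ by one edge. The paper avoids both problems by starting from a fixed join tree rooted at the unattacked atom $F$, keeping its child subtrees $\tau_i$ separate, and then proving the existence of an atom $F_i'$ that is unattacked in $q_{\tau_i}$ \emph{and} satisfies $\queryvars{q_{\tau_i}}\cap\atomvars{F}\subseteq\key{F_i'}$; this is done by walking along a chain of attackers inside $q_{\tau_i}$, using key-incomparability and the unattackedness of $F$ to show each attacker inherits the shared variables in its key, and terminating by acyclicity of the attack graph. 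That existence argument is the missing idea in your proposal.

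Conversely, the step you single out as the main obstacle --- that each piece inherits an acyclic attack graph --- is nearly immediate under the hypothesis: since no key is contained in another, no functional dependency $\fd{\key{G}}{\var{G}}$ is ever applicable when computing $\keyclosure{F}{q'}$, hence $\keyclosure{F}{q'}=\key{F}$ for every subquery $q'$, and a two-cycle in the attack graph of a piece lifts verbatim to a two-cycle in $q$ (every attack cycle contains one). Your complexity accounting is also internally inconsistent (you call the attack-graph computation ``polynomial in $m$'' but then charge each call only $O(n)$), though that is minor next to the join-tree gap.
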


\begin{proof}
Let $q$ be a self-join-free Boolean conjunctive query with an acyclic attack graph.
Let $\tau$ be a join tree for $q$ (thus $q$ is $\alpha$-acyclic).
Assume the following hypothesis:
\begin{quote}
\emph{Hypothesis of Disjoint Keys:}
for all atoms $G,H\in q$, $G\neq H$, we have that $\key{G}$ and $\key{H}$ are not comparable by set inclusion.
\end{quote}
We show, by induction on $\card{q}$, that $\cqa{q}$ is in linear time.
For the basis of the induction, $\card{q}=\emptyset$, it is trivial that $\cqa{q}$ is in linear time.
For the induction step, let $\card{q}\geq 1$.
Let $F$ be an unattacked atom of $q$.
Let $(\tau,F)$ be a join tree of $q$ with root $F$.
Let $F_{1},\dots,F_{n}$ be the children of~$F$ in $(\tau,F)$ with subtrees $\tau_1$, $\tau_2$, $\dots$, $\tau_n$. 

Let $i\in\{1,\dots,n\}$.
We claim that $q_{\tau_i}$ has an acyclic attack graph.
Assume for the sake of contradiction that the attack graph of~$q_{\tau_i}$ has a cycle, and therefore has a cycle of size~$2$.
Then there are $G,H\in q_{\tau_i}$ such that $G\attacks{q_{\tau_i}}H\attacks{q_{\tau_i}}G$.
From the \emph{Hypothesis of Disjoint Keys}, it follows $G\attacks{q}H\attacks{q}G$, contradicting the acyclicity of $q$'s attack graph.

We claim the following: 
\begin{equation}
\textnormal{for every $G\in q_{\tau_i}$, 
$\atomvars{G}\cap\atomvars{F}\subseteq\key{G}$.}
\end{equation}
This claim follows from the \emph{Hypothesis of Disjoint Keys} and the assumption that $F$ is unattacked in $q$'s attack graph.

It suffices to show that there is an atom $F_{i}'\in q_{\tau_i}$ (possibly $F_{i}'=F_{i}$) such that
\begin{enumerate}
\item
$F_{i}'$ is unattacked in the attack graph of $q_{\tau_i}$; and
\item
$\queryvars{q_{\tau_i}}\cap\atomvars{F}\subseteq\key{F_{i}'}$.
\end{enumerate}
We distinguish two cases:
\begin{description}
\item[Case that $F_{i}$ is unattacked in the attack graph of $q_{\tau_i}$.]
Then we can pick $F_{i}':= F_{i}$.
\item[Case that $F_{i}$ is attacked in the attack graph of $q_{\tau_i}$.]
We can assume an atom $G$ such that $G\attacks{q_{\tau_i}}F_{i}$.
Since $G\nattacks{q}F$, by the \emph{Hypothesis of Disjoint Keys}, it must be that $\atomvars{F_{i}}\cap\atomvars{F}\subseteq\key{G}$.
Then from $\queryvars{q_{\tau_i}}\cap\atomvars{F}\subseteq\atomvars{F_{i}}$, it follows $\queryvars{q_{\tau_i}}\cap\atomvars{F}\subseteq\key{G}$.
If $G$ is unattacked in the attack graph of $q_{\tau_ii}$, then we can pick $F_{i}':= G$.
Otherwise we repeat the same reasoning (with $G$ playing the role previously played by $F_{i}$).
This repetition cannot go on forever since the attack graph of $q_{\tau_i}$ is acyclic.
\end{description}
\end{proof}

We remark that $\cqa{q}$ remains solvable in linear time for certain acyclic self-join-free CQ that is \FO-rewritable but does not have a PPJT. It uses techniques from efficient query result enumeration algorithms \cite{DBLP:conf/icdt/DeepHK21,DBLP:conf/sigmod/DeepHK20}.

\begin{proposition}
\label{prop:non-ppjt-linear}
Let $q() \obtainedfrom {R}(\underline{c}, x), {S}(\underline{c}, y), {T}(\underline{x, y})$ where $c$ is a constant. Then there exists a linear-time algorithm for $\cqa{q}$.
\end{proposition}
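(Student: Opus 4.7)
The plan is to first reduce $\cqa{q}$ to a pure set-theoretic question about three objects extracted from $\db$ in linear time. Since $c$ is a constant, the only $R$-block relevant to $q$ is the one with primary key $c$; similarly for $S$. Each block of $T$ is determined entirely by its key $(x,y)$, and since $T$ has no non-key attribute referenced in the query body, every repair agrees on whether $T(x_0,y_0,*)$ is nonempty. Thus I would define $X := \{a : R(c,a)\in\db\}$, $Y := \{b : S(c,b)\in\db\}$, and $E := \{(a,b) : T(\underline{a,b},*) \text{ is a block of }\db\}$. Each set is computed by a single linear scan of the appropriate relation.

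Next, I would establish the key semantic characterization: $\db$ is a yes-instance of $\cqa{q}$ if and only if $X\neq\emptyset$, $Y\neq\emptyset$, and $X\times Y \subseteq E$. The ``only if'' direction uses the fact that any repair is free to pick an arbitrary $a^*\in X$ from the $R$-block and an arbitrary $b^*\in Y$ from the $S$-block, so if some $(a^*,b^*)$ is missing from $E$ then the corresponding repair falsifies $q$. The ``if'' direction is immediate since, under those three conditions, every repair contains some $R(c,a^*)$, $S(c,b^*)$, and $(a^*,b^*)\in E$ guarantees a matching $T$-tuple.

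Given the characterization, the algorithm is a two-case test. First test $X\neq\emptyset$ and $Y\neq\emptyset$. Second, compare $|X|\cdot|Y|$ with $|E|$: if $|X|\cdot|Y| > |E|$, then by a pigeonhole count $X\times Y\not\subseteq E$, so return false. Otherwise $|X|\cdot|Y| \le |E| \le N$, and we can enumerate $X\times Y$ in $O(|X|\cdot|Y|) = O(N)$ time, checking each pair for membership in $E$ via a hash table built in $O(N)$ time. The total running time is therefore $O(N)$.

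The main obstacle, I expect, is justifying that the size-based short-circuit in the second case is sound and does not hide a subtle edge case (for example, when blocks of $T$ carry keys outside $X\times Y$, those spurious pairs inflate $|E|$ without helping, but they only make the counting test more conservative, never less). The semantic equivalence of step two is routine but needs to be stated carefully because the implicit quantification over repairs of $T$ must be unpacked: the essential point is that the $x$-coordinate and $y$-coordinate of any satisfying assignment are chosen independently from $X$ and $Y$ per repair, whereas $E$ itself is the same in every repair. Once this is in place, the algorithm and its linear bound follow without further difficulty.
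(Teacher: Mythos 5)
Your proposal is correct and follows essentially the same route as the paper's own proof: extract $X$ and $Y$ by linear scans, characterize the yes-instances via $X \times Y \subseteq T$, and use the cardinality comparison $|X|\cdot|Y| > |T|$ to short-circuit before materializing the product. The one point where you go slightly beyond the paper is the explicit nonemptiness test on $X$ and $Y$, which is in fact needed for the characterization to be exactly right (if the block $R(\underline{c},*)$ is empty, $X \times Y \subseteq T$ holds vacuously yet every repair falsifies $q$), so that addition is a genuine, if minor, improvement in care.
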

\begin{proof}
Let $\db$ be an instance for $\cqa{q}$. We define $X = \{a \mid R(\underline{c}, a) \in \db\}$ and $Y = \{b \mid S(\underline{c}, b) \in \db\}$. It is easy to see that $\db$ is a ``yes''-instance for $\cqa{q}$ if and only if $X \times Y \subseteq T$, where $\times$ denotes the Cartesian product.

Next, consider the following algorithm that computes $X \times Y$ in linear time, exploiting that $T$ is also part of the input to $\cqa{q}$.
\begin{enumerate}
\item[1] Compute $X = \{a \mid R(\underline{c}, a) \in \db\}$ and $Y = \{b \mid S(\underline{c}, b) \in \db\}$
\item[2] \textbf{if } $|X| \cdot |Y| > |T|$ \textbf{ then }
\item[3] \hspace{0.5cm} \textbf{return }false
\item[4] \textbf{return whether } $X \times Y \subseteq T$
\end{enumerate}
Line 1 and 2 run in time $O(|R| + |S| + |T|)$. If the algorithm terminates at line 3, then the algorithm runs in linear time, or otherwise we must have $|R| \cdot |S| \leq |T|$, and the algorithm thus runs in time $O(|R| + |S| + |T| + |R|\cdot|S|) = O(|R| + |S| + |T|)$.

Note that $q$ does not have a PPJT: in $q_1() \obtainedfrom R(\underline{c}, x), T(\underline{x, y})$, $R$ attacks $T$, and in $q_2() \obtainedfrom S(\underline{c}, y), T(\underline{x, y})$, $S$ attacks $T$. 
\end{proof}

\section{Missing proofs}
\label{sec:missing-proofs}
In this section, we provide the missing proofs.

\subsection{Proof of Proposition~\ref{prop:ppjt}}
\label{proof:prop:ppjt}
\begin{proof}
Suppose, for the sake of contradiction, that the attack graph is not acyclic. Then there must be two atoms $R,S$ such that $R \attacks{q} S$ and $S \attacks{q} R$ by Lemma 3.6 of \cite{KoutrisW17}. Let $(\tau, T)$ be the PPJT for $q$, and let $(\tau', U)$ be the smallest subtree of $(\tau, T)$ that contains both $S$ and $R$ (it may be that $U=R$ or $U=S$). The first observation is that in the subquery $q_{\tau'}$ it also holds that $R$ attacks $S$ and vice versa. Moreover, since $(\tau', U)$ is the smallest possible subtree, the unique path that connects $R$ and $S$ must go through the root $U$. We now distinguish two cases:
\begin{itemize}
\item If $U = R$, then $S$ attacks the root of the subtree $q'$, a contradiction to the PPJT definition. 
\item If $U \neq R$, then the unique path from $R$ to $S$ goes through $U$. Since $R$ must attack every atom in that path by Lemma 4.9 of \cite{10.1145/2188349.2188351}, it must also attack $U$, a contradiction as well. 
\end{itemize}
The proof is now complete by the classification theorem of \cite{DBLP:conf/pods/KoutrisOW21}.
\end{proof}

\subsection{Proof of Proposition~\ref{prop:cforest}}
\label{proof:prop:cforest}
\begin{proof}
Let $q$ be a query in $\cforest$ and let $G$ be the join graph of $q$ as in Definition 6 of \cite{FUXMAN2007610}. In particular, $(i)$ the vertices of $G$ are the atoms of $q$, and $(ii)$ there is an arc from $R$ to $S$ if $R \neq S$ and there is some variable $w \in \var{S}$ such that $w \in \var{R} \setminus \key{R}$.
By the definition of $\cforest$, $G$ is a directed forest with connected components $\tau_1, \tau_2, \dots, \tau_n$, where the root atoms are $R_1, R_2, \dots, R_n$ respectively. 

\paragraph{Claim 1: each $\tau_i$ is a join tree.} Suppose for the sake of contradiction that $\tau_i$ is not a join tree. Then there exists a variable $w$ and two non-adjacent atoms $R$ and $S$ in $\tau_i$ such that $w \in \var{R}$, $w \in \var{S}$, and for any atom $T_i$ in the unique path $R - T_1 - \dots - T_k - S$, we have  $w \notin \var{T_i}$. We must have $w \in \key{R}$ and $w \in \key{S}$, or otherwise there would be an arc between $R$ and $S$, a contradiction. From the property $\cforest$, it also holds that no atom in the tree receives arcs from two different nodes. Hence, there is either an arc $(T_1, R)$ or $(T_k,S)$. Without loss of generality, assume there is an arc from $T_1$ to $R$. Then, since all nonkey-to-key joins are full, $w \in \var{T_1}$, a contradiction to our assumption. 

\medskip

\paragraph{Claim 2: the forest $\tau_1 \cup \dots \cup \tau_n$ can be extended to a join tree $\tau$ of $q$.}  To show this, we will show that $\tau_1 \cup \dots \cup \tau_n$ corresponds to a partial join tree as constructed by the GYO ear-removal algorithm. Indeed, suppose that atom $T$ is a child of atom $T'$ in $\tau_i$. Then, $T$ was an ear while constructing $\tau_i$ for $q_{\tau_i}$, with $T'$ as its witness. Recall that this means that if a variable $x$ is not exclusive in $T$, then $x \in T'$. We will show that this is a valid ear removal step for $q$ as well. Indeed, consider an exclusive variable $x$ in $T$ for $q_{\tau_i}$  that does not remain exclusive in $q$. Then, $x$ occurs in some other tree $\tau_j$. We will now use the fact that, by Lemma 2 of~\cite{FUXMAN2007610}, if $\tau_i$ and $\tau_j$ share a variable $x$, then $x$ can only appear in the root atoms $R_i$ and $R_j$. This implies that $x$ appears at the root of $\tau_i$, and hence at $T'$ as well, a contradiction. 

\medskip

We finally claim that $(\tau, R_1)$ is a PPJT for $q$. By construction, $\tau$ is a join tree. 
Next, consider any two adjacent atoms $R$ and $S$ in $\tau$ such that $R$ is a parent of $S$ in $(\tau, R_1)$. 
Let  $p$ be any connected subquery of $q$ containing $R$ and $S$. It suffices to show that $S$ does not attack $R$ in $p$. If $R$ and $S$ are both root nodes of some $\tau_i$ and $\tau_j$, we must have $\var{R} \cap \var{S} \subseteq \key{S} \subseteq S^{+,p}$, and thus 
$S$ does not attack $R$ in $p$. If $R$ and $S$ are in the same join tree $\tau_i$, since there is no arc from $S$ to $R$, all nonkeys of $S$ are not present in $R$, and thus $\var{R} \cap \var{S} = \var{R} \cap \key{S} \subseteq \key{S} \subseteq S^{+,p}$. Hence, there is no attack from $S$ to $R$ as well.
 \end{proof}
 
\subsection{Proof of Theorem~\ref{thm:main:boolean}}
\label{sec:correctness-boolean}

In this section, we prove Theorem~\ref{thm:main:boolean}.  

\begin{definition}
\label{defn:good-key}
Let $\db$ be a database instance for $\cqa{q}$ and $R(\underline{\vec{x}}, \vec{y})$ an atom in $q$. We define the good keys of $R$ with respect to query $q$ and $\db$, denoted by $\goodkey{R}{q}{\db}$, as follows: 
\begin{align*}
\goodkey{R}{q}{\db} &:= \{\vec{c} \mid \db\ \text{is a ``yes''-instance for } \cqa{q_{\vec{x} \rightarrow \vec{c}}}\}.
\end{align*}
\end{definition}

Let $q$ be a self-join-free acyclic BCQ with a PPJT $(\tau, R)$. Lemma~\ref{lemma:reification} implies that in order to solve $\cqa{q}$, it suffices to check whether $\goodkey{R}{q}{\db} \neq\emptyset$.

\begin{lemma}
\label{lemma:reification}
Let $q$ be a self-join-free acyclic BCQ with a PPJT $(\tau, R)$. Let $\db$ be a database instance for $\cqa{q}$. Then the following statements are equivalent:
\begin{enumerate}
\item $\db$ is a ``yes''-instance for $\cqa{q}$; and
\item $\goodkey{R}{q}{\db} \neq \emptyset$.
\end{enumerate}
\end{lemma}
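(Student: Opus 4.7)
The implication $(2) \Rightarrow (1)$ is the easy direction. If $\vec{c} \in \goodkey{R}{q}{\db}$, then by definition every repair of $\db$ satisfies $q_{\vec{x} \rightarrow \vec{c}}$. Since $q_{\vec{x} \rightarrow \vec{c}}$ is obtained from $q$ by substituting constants for variables, any database satisfying $q_{\vec{x} \rightarrow \vec{c}}$ also satisfies $q$. Hence every repair of $\db$ satisfies $q$, which is precisely the statement that $\db$ is a yes-instance for $\cqa{q}$. This part should take just a couple of lines.

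For the harder direction $(1) \Rightarrow (2)$, my plan is to invoke the known structural lemma about unattacked atoms which was explicitly flagged in the body of the paper (Lemma~4.4 of~\cite{KoutrisW17}). The key observation is that the PPJT hypothesis, applied to the full subtree $\tau' = \tau$, tells us that the root $R$ is unattacked in the whole query~$q$. The cited lemma then says that whenever every repair of $\db$ satisfies $q$, there must exist a single $R$-block $\block$ with primary key $\vec{c}$ such that for \emph{every} repair, the unique $R$-tuple that the repair picks from $\block$ can be extended to a satisfying assignment for the rest of $q$. Rephrased, every repair of $\db$ satisfies $q_{\vec{x} \rightarrow \vec{c}}$, which is exactly $\vec{c} \in \goodkey{R}{q}{\db}$, so this set is nonempty.

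The main obstacle is being precise about how the cited ``reifiability'' statement for unattacked atoms matches the syntactic substitution $q_{\vec{x} \rightarrow \vec{c}}$ used in Definition~\ref{defn:good-key}. I would therefore spell out the translation carefully: picking the primary-key value $\vec{c}$ of the witnessing block~$\block$ and replacing the key variables $\vec{x}$ of $R$ by $\vec{c}$ throughout $q$ yields a Boolean query whose body still mentions the same non-key variables of $R$ and all other atoms; the reifiability conclusion exactly says this modified query is true in every repair. No further property of the join tree is needed here, since the role of the PPJT in this lemma is confined to guaranteeing that the root $R$ is unattacked in $q$; the deeper recursive use of the PPJT (ensuring unattackedness in every rooted subtree) will only be needed later, when Lemma~\ref{lemma:reification} is applied inductively in the correctness proof of the rewriting algorithm.
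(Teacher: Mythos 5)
Your proposal is correct and follows essentially the same route as the paper: both directions reduce to Lemma~4.4 of~\cite{KoutrisW17} applied to the root $R$, which is unattacked in $q$ by the PPJT definition with $\tau'=\tau$, and the paper's own proof is exactly this one-line invocation combined with Definition~\ref{defn:good-key}. Your extra care in spelling out the easy direction via monotonicity and the translation between the reifiability statement and the substitution $q_{[\vec{x}\rightarrow\vec{c}]}$ is sound but does not constitute a different argument.
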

\begin{proof}
By Lemma 4.4 in \cite{KoutrisW17}, we have that $\db$ is a ``yes''-instance for $\cqa{q}$ if and only if there exists a sequence of constant $\vec{c}$ such that $\db$ is a ``yes''-instance for $\cqa{q_{[\vec{x} \rightarrow \vec{c}]}}$, and the latter is equivelant to $\goodkey{R}{q}{\db} \neq \emptyset$ by Definition~\ref{defn:good-key}.
\end{proof}

\begin{example}
The atom $\relation{Employee}(\underline{x}, y, z)$ is unattacked in $\examplequery$. Observe that for $\text{employee\_id} = 0022$, no matter whether we choose the tuple \relation{Employee}(0022, \text{New York}, \text{New York}) or the tuple \relation{Employee}(0022, \text{Chicago}, \text{Chicago}) in a repair, the chosen tuple will join with some corresponding tuple in the \relation{Manager} and \relation{Contact} table. The query $\examplequery_{[x \rightarrow 0022]}$ will then return \texttt{True} for all repairs of database \database{Company}, and $0022 \in \goodkey{\relation{Employee}}{\examplequery}{\textbf{Company}} \neq\emptyset$. The \textbf{Company} database is then concluded to be a ``yes''-instance for $\cqa{\examplequery}$ by Lemma~\ref{lemma:reification}. 

We remark that converse also holds: if \textbf{Company} is known to be a ``yes''-instance for $\cqa{\examplequery}$, then by Lemma~\ref{lemma:reification}, the set $\goodkey{\relation{Employee}}{\examplequery}{\textbf{Company}}$ must also be nonempty.\qed 
\end{example}

\begin{lemma}
\label{lemma:pair-pruning}
     Let $R(\underline{\vec{x}}, \vec{y})$ be an atom in an acyclic self-join-free BCQ~$q$ with a PPJT $(\tau, R)$. Let $\db$ be an instance for $\cqa{q}$. For every sequence $\vec{c}$ of constants, of the same length as $\vec{x}$, the following are equivalent:
     \begin{enumerate}
          \item $\vec{c} \in \goodkey{R}{q}{\db}$; and \label{it:whole}
          \item 
          \label{it:pair-algo}
the block $R(\underline{\vec{c}},*)$ of $\db$ is non-empty and for every fact $R(\underline{\vec{c}},\vec{d})$ in $\db$, the following hold:
          \begin{enumerate}
          \item $\{R(\underline{\vec{c}},d)\}$ satisfies the BCQ $()\obtainedfrom R(\underline{\vec{x}}, \vec{y})$; and
          \label{it:self-pruning}
          \item 
          \label{it:pair-pruning}
for every child subtree $(\tau_S, S)$ of $(\tau, R)$, there exists  $\vec{s} \in \goodkey{S}{q_{\tau_S}}{\db}$ such that (i) all facts $S(\underline{\vec{s}}, \vec{t})$ agree on the joining variables in $\var{R} \cap \var{S}$ and (ii) for every fact $S(\underline{\vec{s}}, \vec{t})$ in $\db$, the pair $\{R(\underline{\vec{c}}, \vec{d})$, $S(\underline{\vec{s}}, \vec{t})\}$ satisfies the BCQ 
$()\obtainedfrom R(\underline{\vec{x}}, \vec{y}), S(\underline{\vec{u}}, \vec{v}),$ 
where $S(\underline{\vec{u}}, \vec{v})$ is the $S$-atom of $q$.
          \end{enumerate}
          
     \end{enumerate}   
\end{lemma}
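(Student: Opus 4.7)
The plan is to prove the two directions separately, exploiting the recursive structure of the PPJT to reduce both to statements about unattacked atoms that can be handled via Lemma~\ref{lemma:reification}. The backbone of the argument is that, because $R$ is unattacked in $q$ and each child root $S$ is unattacked in $q_{\tau_S}$, good-key sets admit the clean ``pick a repair one block at a time'' characterization that this lemma makes local.

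For the direction \eqref{it:whole} $\Rightarrow$ \eqref{it:pair-algo}, I will argue by the adversarial-repair principle: if some desired local property fails, I build a repair in which $q_{[\vec{x}\rightarrow\vec{c}]}$ is false, contradicting $\vec{c}\in\goodkey{R}{q}{\db}$. Non-emptiness of the block is immediate since with an empty block no repair contains an $R$-fact with key $\vec{c}$. For the self-pruning condition \eqref{it:self-pruning}, I fix an arbitrary fact $R(\underline{\vec{c}},\vec{d})$ and a repair $\mathbf{r}$ containing it; because $\mathbf{r}$ has only one tuple in the block $R(\underline{\vec{c}},*)$, any satisfying homomorphism of $q_{[\vec{x}\rightarrow\vec{c}]}$ must map the $R$-atom to $R(\underline{\vec{c}},\vec{d})$, so the local equalities on the $R$-atom must hold. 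For the pair-pruning condition \eqref{it:pair-pruning}, fix $R(\underline{\vec{c}},\vec{d})$ and a child subtree $(\tau_S,S)$. After partially evaluating $R$ to $R(\underline{\vec{c}},\vec{d})$, the join variables of $R$ with $S$ are fixed to a tuple $\vec{j}$. By running over all repairs containing $R(\underline{\vec{c}},\vec{d})$, satisfaction of $q_{[\vec{x}\rightarrow\vec{c}]}$ forces $\db$ to be a yes-instance for the subquery $q_{\tau_S}$ with the $R$-$S$ join variables fixed to $\vec{j}$; by Lemma~\ref{lemma:reification} applied to the PPJT $(\tau_S,S)$ (where $S$ is unattacked in $q_{\tau_S}$), there exists $\vec{s}$ in the good keys of $S$ for that restricted subquery, which translates to the stated uniform-join condition together with $\vec{s}\in\goodkey{S}{q_{\tau_S}}{\db}$.

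For the direction \eqref{it:pair-algo} $\Rightarrow$ \eqref{it:whole}, I will show that every repair $\mathbf{r}$ of $\db$ satisfies $q_{[\vec{x}\rightarrow\vec{c}]}$ by assembling a homomorphism recursively along $(\tau,R)$. Let $R(\underline{\vec{c}},\vec{d})$ be the unique tuple of the block present in $\mathbf{r}$; by \eqref{it:self-pruning} it satisfies the local constraints of the $R$-atom, and thus provides values $\vec{j}$ for the join variables with each child $S$. By \eqref{it:pair-pruning}, for every child subtree $(\tau_S,S)$ I obtain a good key $\vec{s}\in\goodkey{S}{q_{\tau_S}}{\db}$ whose entire $S$-block agrees on join variables with $\vec{j}$; hence the unique $S$-tuple $S(\underline{\vec{s}},\vec{t})$ in $\mathbf{r}$ joins correctly with $R(\underline{\vec{c}},\vec{d})$, and $\vec{s}\in\goodkey{S}{q_{\tau_S}}{\db}$ together with Lemma~\ref{lemma:reification} yields that $(q_{\tau_S})_{[\vec{u}\rightarrow\vec{s}]}$ holds in $\mathbf{r}$. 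Combining the $R$-atom homomorphism with the homomorphisms produced for each child subtree gives a witness for $q_{[\vec{x}\rightarrow\vec{c}]}$ in $\mathbf{r}$; these per-child homomorphisms do not interfere because distinct subtrees $\tau_{S_1},\tau_{S_2}$ of a join tree share no variables with one another beyond those already in $R$.

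The main obstacle is the bookkeeping in \eqref{it:pair-pruning}: I must justify why the right object to existentially quantify is a good key $\vec{s}$ of $S$ for the stand-alone subquery $q_{\tau_S}$ (rather than for $q_{\tau_S}$ extended with an equality fixing the $R$-side join values), and why the uniform-join clause in \eqref{it:pair-pruning}(i) is both necessary and sufficient to bridge this gap. The resolution is that fixing the $R$-side join values on an unattacked child amounts to restricting the good-key set of $S$ in $q_{\tau_S}$ to those $\vec{s}$ whose entire $S$-block projects onto $\vec{j}$ on the join attributes; I will spell this equivalence out carefully, since it is precisely where the non-trivial use of the PPJT property (namely, that $S$ is unattacked in $q_{\tau_S}$) is needed for Lemma~\ref{lemma:reification} to be applicable at the child level.
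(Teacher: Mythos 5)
Your proposal is correct, and half of it coincides with the paper's own proof: for the direction \ref{it:pair-algo}$\Rightarrow$\ref{it:whole}, both arguments take an arbitrary repair, use \ref{it:pair-pruning} to find, for each child $S$, a good-key block whose unique representative in the repair joins with the unique $R$-fact of block $R(\underline{\vec{c}},*)$, and stitch the resulting valuations together using the join-tree property that distinct child subtrees share variables only through $R$. Where you genuinely diverge is the harder direction \ref{it:whole}$\Rightarrow$\ref{it:pair-algo}. The paper proves it by contraposition: assuming \ref{it:pair-pruning} fails for some fact $R(\underline{\vec{c}},\vec{d})$ and child $S$, it explicitly assembles a falsifying repair---it keeps $R(\underline{\vec{c}},\vec{d})$, picks from every good-key $S$-block a tuple that does not join with it, and completes the remaining blocks with a falsifying repair of the auxiliary instance $\db'=\db\setminus R\setminus\{S(\underline{\vec{s}},*)\mid\vec{s}\in\goodkey{S}{q_{\tau_S}}{\db}\}$ for $\cqa{q_{\tau_S}}$. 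You instead argue directly: since every repair of the non-$R$ tables extends to a repair containing the fixed fact $R(\underline{\vec{c}},\vec{d})$, the database $\db$ is a yes-instance for $\cqa{(q_{\tau_S})_{[\vec{w}\rightarrow\vec{j}]}}$, and Lemma~\ref{lemma:reification} applied to this substituted subquery yields the required good key. Your route is more modular and shorter, pushing all the combinatorics into the reification lemma; its price is the extra (true, but nowhere stated in the paper) observation that instantiating variables with constants destroys no attacks, so that $(\tau_S,S)$ remains a PPJT of $(q_{\tau_S})_{[\vec{w}\rightarrow\vec{j}]}$ and Lemma~\ref{lemma:reification} is actually applicable, plus the careful translation between membership in $\goodkey{S}{(q_{\tau_S})_{[\vec{w}\rightarrow\vec{j}]}}{\db}$ and the pair of conditions (i)--(ii) of \ref{it:pair-pruning}---both of which you correctly flag as the points needing care, and both of which do go through. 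The paper's construction, by contrast, is self-contained at the level of repairs and exhibits the counterexample explicitly, which is arguably more informative about why the pair-pruning rule is complete; your version makes clearer that the whole lemma is really the reification property propagated one level down the tree.
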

\begin{proof}

We consider two directions.
\begin{description}
     \item[\ref{it:pair-algo}$\implies$\ref{it:whole}]
     Here we must have $\goodkey{S}{q_S}{\db} \neq \emptyset$ for all child node $S$ of $R$ in $\tau$. Let $r$ be any repair of $\db$ and let $R(\vec{c}, \vec{d}) \in r$. Since~\ref{it:pair-algo} holds, for every child node $S$ of $R$, there exists a fact $S(\vec{s}, \vec{d}) \in r$ with $\vec{s} \in \goodkey{S}{q_S}{\db}$ and a valuation $\mu_S$ such that 
     $R(\mu_S(\vec{x}), \mu_S(\vec{y})) = R(\vec{c}, \vec{d})$ and $S(\mu_S(\vec{u}), \mu_S(\vec{v})) = S(\vec{s}, \vec{t})$. Since $r$ is a repair of $\db$ and $\vec{s} \in \goodkey{S}{q_S}{\db}$, there exists a valuation $\xi_S$ such that 
     $\xi_S(q_S) \subseteq r$ with $\xi_S(\vec{u}) = \vec{s} = \mu_S(\vec{u})$. Note that all $\mu_S$ agree on the valuation of $\var{\vec{x}} \cup \var{\vec{y}}$, let $\mu$ be the valuation such that $R(\mu(\vec{x}), \mu(\vec{y})) = R(\mu(\vec{x_S}), \mu(\vec{y_S}))$ for all child node $S$ of $R$.

     Next we show that for all $q_S$ and any variable $z \in \var{R} \cap \var{q_S}$, $\mu(z) = \xi_S(z)$. Since $r$ is consistent, we must have 
     $S(\mu_S(\vec{u}), \mu_S(\vec{v})) = S(\xi_S(\vec{u_S}), \xi_S(\vec{v_S})) \in r.$ Since $T$ is a join tree, we must have $z \in \var{R} \cap \var{S}$, and it follows that $\xi_S(z) = \mu_S(z) = \mu(z)$, as desired.

     Then, the following valuation 
     $$\mu(z) = \begin{cases} \mu(z) & z \in \var{R} \\ \xi_i(z) & z \in \var{q_S} \setminus \var{R} \\ d & z = d \text{ is constant}\end{cases}$$ is well-defined and satisfies that 
     $\mu(q_{\vec{x} \rightarrow \vec{c}}) \subseteq r,$ as desired.

     \item[\ref{it:whole}$\implies$\ref{it:pair-algo}]
     By contraposition. Assume that~\ref{it:pair-algo} does not hold, and we show that there exists a repair $r$ of $\db$ that does not satisfy $q_{[\vec{x} \rightarrow \vec{c}]}$. 

     If~\ref{it:self-pruning} does not hold, then there exists some fact $f = R(\vec{c}, \vec{d})$ that does not satisfy $R(\underline{\vec{x}}, \vec{y})$, and any repair containing the fact $f$ does not satisfy $q_{[\vec{x} \rightarrow \vec{c}]}$. Next we assume that~\ref{it:self-pruning} holds but~\ref{it:pair-pruning} does not.

     If $\goodkey{S}{q_S}{\db} = \emptyset$ for some child node $S$ of $R$ in $\tau$, then by monotonicity of conjunctive queries and Lemma~\ref{lemma:reification}, $\db$ is a ``no''-instance for $\cqa{q_S}$, $\cqa{q}$ and thus $\cqa{q_{[\vec{x} \rightarrow \vec{c}]}}$. In what follows we assume that $\goodkey{S}{q_S}{\db} \neq \emptyset$ for all child node $S$ of $R$. 

     Since~\ref{it:pair-pruning} does not hold, there exist a fact $R(\vec{c}, \vec{d})$ and some child node $S$ of $R$ in $\tau$ and query $q_S$ such that for any block $S(\vec{s}, *)$ with $\vec{s} \in \goodkey{S}{q_S}{\db}$, there exists a fact $S(\vec{s}, \vec{t})$ that does not join with $R(\vec{c}, \vec{d})$.

     Let $\db' = \db \setminus R \setminus \{S(\vec{s}, *) \mid \vec{s} \in \goodkey{S}{q_S}{\db}\}$. We show that $\db'$ is a ``no''-instance for $\cqa{q_S}$. Indeed, suppose otherwise that $\db'$ is a ``yes''-instance for $\cqa{q_S}$, then there exists some $\vec{s}$ such that $\db'$ is a ``yes''-instance for $\cqa{q_{S, [\vec{u} \rightarrow \vec{s}]}}$. Note that by construction, $\vec{s} \notin \goodkey{S}{q_S}{\db}$. Since $\db' \subseteq \db$, we have $\db$ is a ``yes''-instance for $\cqa{q_{S, [\vec{u} \rightarrow \vec{s}]}}$, implying that $\vec{s} \in \goodkey{S}{q_S}{\db}$, a contradiction.

     Consider the following repair $r$ of $\db$ that contains

     \begin{itemize}
          \item $R(\vec{c}, \vec{d})$ and an arbitrary fact from all blocks $R(\vec{b}, *)$ with $\vec{b} \neq \vec{c}$;
          \item for each $\vec{s} \in \goodkey{S}{q_S}{\db}$, any fact $S(\vec{s}, \vec{t})$ that does not join with $R(\vec{c}, \vec{d})$; and 
          \item any falsifying repair $r'$ of $\db'$ for $\cqa{q_S}$.
     \end{itemize}

     We show that $r$ does not satisfy $q_{[\vec{x} \rightarrow \vec{c}]}$. Suppose for contradiction that there exists a valuation $\mu$ with $\mu(q_{[\vec{x} \rightarrow \vec{c}]}) \subseteq r$ and $R(\mu(\vec{x}), \mu(\vec{y})) = R(\vec{c}, \vec{d}) \in r$. Let $S(\vec{s^*}, \vec{t^*}) = S(\mu(\vec{u}), \mu(\vec{v}))$, then we must have 
     $\vec{s^*} \notin \goodkey{S}{q_S}{\db}$, since otherwise we would have $S(\vec{s^*}, \vec{t^*})$ joining with $R(\vec{c}, \vec{d})$ where we have $\vec{s^*} \in \goodkey{S}{q_S}{\db}$, a contradiction to the construction of $r$. Since $\vec{s^*} \notin \goodkey{S}{q_S}{\db}$, we would then have $\mu(q_S) \subseteq r'$, a contradiction to that $r'$ is a falsifying repair of $\db'$ for $\cqa{q_S}$. Finally, if~\ref{it:pair-pruning} holds, then all facts $S(\vec{s}, \vec{t})$ must agree on $\vec{w}$ since they all join with the same fact $R(\vec{c} ,\vec{d})$. 
\end{description}   

The proof is now complete. 
\end{proof}

\begin{proof}[Proof of Theorem~\ref{thm:main:boolean}]
It suffices to present rewriting rules to compute each $\goodkey{R}{q}{\db}$ for each atom $R$ in $q$ by Lemma~\ref{lemma:pair-pruning} , and show that these rewriting rules are equivalent to those presented in Section~\ref{sec:datalog}, which are shown to run in linear time. We denote $R_{\mathsf{gk}}$ as the Datalog predicate for $\goodkey{R}{q}{\db}$. 
It is easy to see that \textbf{Rule 1} computes all blocks of $R$ violating item~\ref{it:self-pruning} of Lemma~\ref{lemma:pair-pruning}. 

To compute the blocks of $R$ violating item~\ref{it:pair-pruning}, we denote $R_{\mathsf{gki}}$ as the predicate for the subset of $R_{\mathsf{gk}}$ that satisfies condition (i) of item~\ref{it:pair-pruning}.
For each child $S(\underline{\vec{u}}, \vec{v})$ of $R$ in a PPJT, let $\vec{w}$ be a sequence of all variables in $\var{R} \cap \var{S}$. The following rules find all blocks in $R$ that violate condition (ii) of item~\ref{it:pair-pruning}.
\begin{align}
\label{rule:join}
S_{\mathsf{join}}(\vec{w}) & \obtainedfrom S(\underline{\vec{u}}, \vec{v}), S_{\mathsf{gki}}(\vec{u}). \\ 
\label{rule:pp}
\fk{R}(\vec{x}) & \obtainedfrom R(\vec{x}, \vec{y}) \lnot S_{\mathsf{join}}(\vec{w}).
\end{align}
We then compute the predicate $R_{\mathsf{gk}}$ denoting $\goodkey{R}{q}{\db}$ with
\begin{align}
\label{rule:gk}
R_{\mathsf{gk}}(\vec{x}) &\obtainedfrom R(\underline{x}, \vec{y}), \lnot \fk{R}(\vec{x}).
\end{align}
If $R$ has a parent, then we may compute all blocks in $R$ violating condition (i) of item~\ref{it:pair-pruning} using the following rules for every variable at the $i$-th position of $\vec{y}$,
\begin{align}
\label{rule:gkinot}
R_{\mathsf{gk}^{\lnot}}(\vec{x}) &\obtainedfrom R(\vec{x}, \vec{y}), R(\vec{x}, \vec{y}'), y_i \neq y_i'. \\ 
\label{rule:gki}
R_{\mathsf{gki}}(\vec{x}) & \obtainedfrom R_{\mathsf{gk}}(\vec{x}), \lnot R_{\mathsf{gk}^{\lnot}}(\vec{x}). 
\end{align}
Now we explain why Rules~(\ref{rule:join}) through~(\ref{rule:gki}) are equivalent to \textbf{Rule 2, 3, 4}.
First, the head $R_{\mathsf{gk}^{\lnot}}$ of Rule~(\ref{rule:gkinot}) can be safely renamed to $\fk{R}$ and yield \textbf{Rule 2}. Rule~(\ref{rule:pp}) is equivalent to \textbf{Rule 3}. Finally, Rules~(\ref{rule:join}),~(\ref{rule:gk}) and~(\ref{rule:gki}) can be merged to \textbf{Rule 4} since to compute each $S_{\mathsf{join}}$, we only need $S(\vec{u}, \vec{v})$ and $\fk{S}$. Note that the Rule~(\ref{rule:join}) for the atom $R$ will have $0$ arity if $R$ is the root of the PPJT. 
\end{proof}

\subsection{Proof of Lemma~\ref{lemma:parameterization}}
\label{sec:lemma:parameterization}
\begin{proof}

Consider both directions. First we assume that $\vec{c}$ is a consistent answer of $q$ on $\db$. Let $r$ be any repair of $\db$. Then there exists a valuation $\mu$ with $\mu(q) \subseteq r$ with $\mu(\vec{u}) = \vec{c}$, and hence $\mu(q_{[\vec{u} \rightarrow \vec{c}]}) = \mu(q) \subseteq r$. That is, $q_{[\vec{u} \rightarrow \vec{c}]}(r)$ is true. Hence $\db$ is a ``yes''-instance for $\cqa{q_{[\vec{u} \rightarrow \vec{c}]}}$. For the other direction, we assume that $\db$ is ``yes''-instance for $\cqa{q_{[\vec{u} \rightarrow \vec{c}]}}$. Let $r$ be any repair of $\db$. Then there is a valuation $\mu$ with $\mu(q_{[\vec{u} \rightarrow \vec{c}]}) \subseteq r$. Let $\theta$ be the valuation with $\theta(\vec{u}) = \vec{c}$. Consider the valuation 
$$\mu^+(z)  = \begin{cases}
     \theta(z) & z \in \var{\vec{u}} \\
     \mu(z) & \text{otherwise,}
\end{cases}$$
and we have $\mu^+(q) = \mu(q_{[\vec{u} \rightarrow \vec{c}]}) \subseteq r$ with $\mu^+(\vec{u}) = \theta(\vec{u}) = \vec{c}$, as desired.
\end{proof}

\subsection{Proof of Theorem~\ref{thm:main:full}}
\label{proof:thm:main:full}
\begin{proof}

Our algorithm first evaluates $q$ on $\db$ to yield a set $S$ of size $|\mathsf{OUT}_{p}|$ in time $O(N \cdot |\mathsf{OUT}_p|)$.  
Here the set $S$ must contain all the consistent answers of $q$ on $\db$. By Lemma~\ref{lemma:parameterization}, we then return all answers $\vec{c} \in S$ such that $\db$ is a ``yes''-instance for $\cqa{q_{[\vec{u} \rightarrow \vec{c}]}}$, which runs in $O(N)$ by Theorem~\ref{thm:main:boolean}. This approach gives an algorithm with running time $O(N \cdot |\mathsf{OUT}_p|)$.

If $q$ is full, there is an algorithm that computes the set of consistent answers even faster. The algorithm proceeds by (i) removing all blocks with at least two tuples from $\db$ to yield $\db^c$ and (ii) evaluating $q$ on $\db^c$. 
It suffices to show that every consistent answer to $q$ on $\db$ is an answer to $q$ on $\db^c$. 
Assume that $\vec{c}$ is a consistent answer to $q$ on $\db$. Consider $q_{[\vec{u} \rightarrow \vec{c}]}$, a disconnected CQ where $\vec{u}$ is a sequence of all variables in $q$. Its \FO-rewriting effectively contains \textbf{Rule 1} for each atom in $q_{[\vec{u} \rightarrow \vec{c}]}$, which is equivalent to step (i), and then checks whether $\db^c$ satisfies $q_{[\vec{u} \rightarrow \vec{c}]}$ by Lemma~B.1 of~\cite{DBLP:conf/pods/KoutrisOW21}. By Lemma~\ref{lemma:parameterization}, $\db$ is a ``yes''-instance for $\cqa{q_{[\vec{u} \rightarrow \vec{c}]}}$, and thus the \FO-rewriting concludes that $\db^c$ satisfies $q_{[\vec{u} \rightarrow \vec{c}]}$, i.e.\ $\vec{c}$ is an answer to $q$ on $\db^c$.
In our algorithm, step (i) runs in $O(N)$ and since $q$ is full, step (ii) runs in time $O(N + |\mathsf{OUT}_c|)$.
\end{proof}

\section{Improvement upon existing systems}
\label{sec:sys-improvements}
\subsection{Conquer}
Fuxman and Miller~\cite{FUXMAN2007610} identified $\cforest$, a class of CQs whose consistent answers can be computed via an \FO-rewriting. However, their accompanying system can only handle queries in $\cforest$ whose join graph is a tree, unable to handle the query in $\cforest$ whose join graph is not connected \cite{fuxman2005conquer}. Since we were unable to find the original ConQuer implementation, we re-implemented ConQuer and added an efficient implementation of the method \texttt{RewriteConsistent} in Figure~2 of \cite{FUXMAN2007610}, enabling us to produce the consistent  SQL rewriting for every query in $\cforest$.

\subsection{Conquesto}

Conquesto \cite{Conquesto} produces a non-recursive Datalog program that implements the algorithm in \cite{KoutrisW17}, targetting all \FO-rewritable self-join-free CQs. However, it suffers from repeated computation and unnecessary cartesian products. 
For example, the Conquesto rewriting for the CQ $q(z) \obtainedfrom R_1(\underline{x}, y, z), R_2(\underline{y}, v, w)$ is shown as follows, where Rule~(\ref{rule:common-predicate1}) and~(\ref{rule:common-predicate2}) share the common predicate $\mathsf{R_2}(y, v, w)$ in their bodies, resulting in re-computation, and Rule~(\ref{conquesto:bad_block}) involves a Cartesian product.
\begin{align}
\mathsf{Sr_{R_2}}(y) \obtainedfrom&  \mathsf{R_2}(y, v, w). \label{rule:common-predicate1}\\ 
\mathsf{Yes_{R_2}}(y) \obtainedfrom & \mathsf{Sr_{R_2}}(y), \mathsf{R_2}(y, v, w). \\ 
\mathsf{\mathsf{Sr_{R_1}}}(z) \obtainedfrom & \mathsf{R_1}(x, y, z), \mathsf{R_2}(y, v, w). \label{rule:common-predicate2}\\ 
\mathsf{Gf_{R_1}}(v_2, x, y, z) \obtainedfrom & \mathsf{\mathsf{Sr_{R_1}}}(z), \mathsf{R_1}(x, y, v_2), \mathsf{Yes_{R_2}}(y), v_2 = z. \\
\mathsf{Bb_{R_1}}(x, z) \obtainedfrom & \mathsf{\mathsf{Sr_{R_1}}}(z), \mathsf{R_1}(x, y, v_2), \lnot \mathsf{Gf_{R_1}}(v_2, x, y, z). \label{conquesto:bad_block} \\
\mathsf{Yes_{R_1}}(z) \obtainedfrom & \mathsf{\mathsf{Sr_{R_1}}}(z), \mathsf{R_1}(x, y, z), \lnot \mathsf{Bb_{R_1}}(x, z). \label{conquesto:yes_rule}
\end{align}

We thus implement FastFO to address the aforementioned issues, incorporating our ideas in Subsection~\ref{sec:ground-atom}. Instead of re-computing the \emph{local safe ranges} such as $\mathsf{Sr_{R_1}}(y)$ and $\mathsf{\mathsf{Sr_{R_2}}}(z)$, we compute a \emph{global safe range} $\mathsf{Sr}(x, y, z)$, which includes all key variables from all atoms and the free variables. This removes all undesired Cartesian products and the recomputations of the local safe ranges at once. The FastFO rewriting for $q$ is presented below.
\begin{align}
\mathsf{Sr}(x, y, z) \obtainedfrom & \mathsf{R_1}(x, y, z), \mathsf{R_2}(y, v, w). \\ 
\mathsf{Yes_{R_2}}(y) \obtainedfrom & \mathsf{Sr}(x, y, z), \mathsf{R_2}(y, v, w). \\
\mathsf{Gf_{R_1}}(v_2, x, y, z) \obtainedfrom & \mathsf{Sr}(x, \_, z), \mathsf{R_1}(x, y, v_2), \mathsf{Yes_{R_2}}(y), v_2 = z. \\
\mathsf{Bb_{R_1}}(x, z) \obtainedfrom & \mathsf{Sr}(x, \_, z), \mathsf{R_1}(x, y, v_2), \lnot \mathsf{Gf_{R_1}}(v_2, x, y, z). \\
\mathsf{Yes_{R_1}}(z) \obtainedfrom & \mathsf{Sr}(x, y, z), \mathsf{R_1}(x, y, z), \lnot \mathsf{Bb_{R_1}}(x, z).  \label{fastfo:yes_rule}
\end{align}

For evaluation, the rules computing each intermediate relation (i.e.\ all rules except for the one computing $\mathsf{Yes_{R_1}}(z)$) are then translated to a SQL subquery via a \lstinline!WITH! clause.

\section{Additional Tables and Figures}
\begin{table*}
\footnotesize
\caption{A summary of the Stackoverflow Dataset.}
\label{tbl:stackoverflow-schema-full}
\begin{tabular}{l c c c p{11.8cm}}
Table & \# of rows & $\mathsf{inRatio}$ & max.\ $\mathsf{bSize}$ & Attributes \\
\hline
Users & 14,839,627 & 0\% & 1 & \underline{Id}, AboutMe, Age, CreationDate, DisplayName, DownVotes, EmailHash, LastAccessDate, Location, Reputation, UpVotes, Views, WebsiteUrl, AccountId \\
Posts & 53,086,328 & 0\% & 1 & \underline{Id}, AcceptedAnswerId, AnswerCount, Body, ClosedDate, CommentCount, CommunityOwnedDate, CreationDate, FavoriteCount, LastActivityDate, LastEditDate, LastEditorDisplayName, LastEditorUserId, OwnerUserId, ParentId, PostTypeId, Score, Tags, Title, ViewCount \\
PostLinks & 7,499,403 & 0\% & 1 & Id, CreationDate, \underline{PostId}, \underline{RelatedPostId}, \underline{LinkTypeId} \\
PostHistory & 141,277,451 & 0.001\% & 4 & Id, \underline{PostHistoryTypeId}, \underline{PostId}, RevisionGUID, \underline{CreationDate}, \underline{UserId}, UserDisplayName, Comment, Text \\
Comments & 80,673,644 & 0.0012\% & 7 & Id, \underline{CreationDate}, PostId, Score, Text, \underline{UserId} \\
Badges  & 40,338,942 & 0.58\% & 941 & Id, \underline{Name}, \underline{UserId}, \underline{Date} \\
Votes & 213,555,899 & 30.9\% & 1441 &  Id, \underline{PostId}, \underline{UserId}, BountyAmount, VoteTypeId, \underline{CreationDate}
\end{tabular}
\end{table*} 

\label{sec:additional-figures}
Figure~\ref{fig:syn-incon-rest} presents the performance of queries $q_2$, $q_4$ and $q_6$ on varying inconsistency ratios, supplementing Figure~\ref{fig:syn-incon}. Figure~\ref{fig:syn-block} summarizes the performance of all seven synthetic queries on varying block sizes.

\begin{figure*}[t]
\includegraphics[scale=0.25]{figures/dot_legend.pdf}
\vspace{-3ex}  
\end{figure*}
\begin{figure*}[t]
\subfloat[$q_2$]{
     \includegraphics[width=0.25\textwidth]{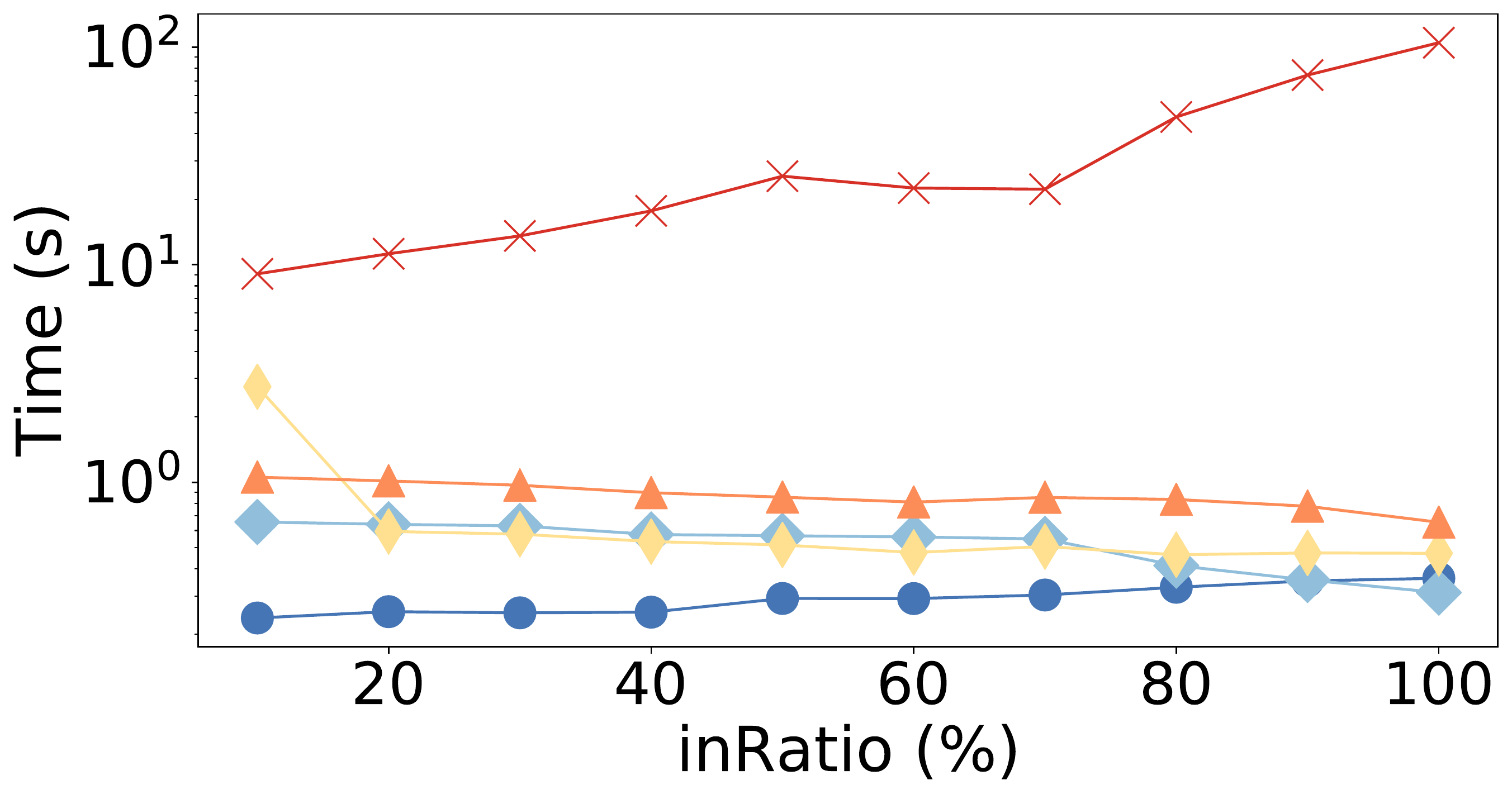} 
}
\subfloat[$q_4$]{
     \includegraphics[width=0.25\textwidth]{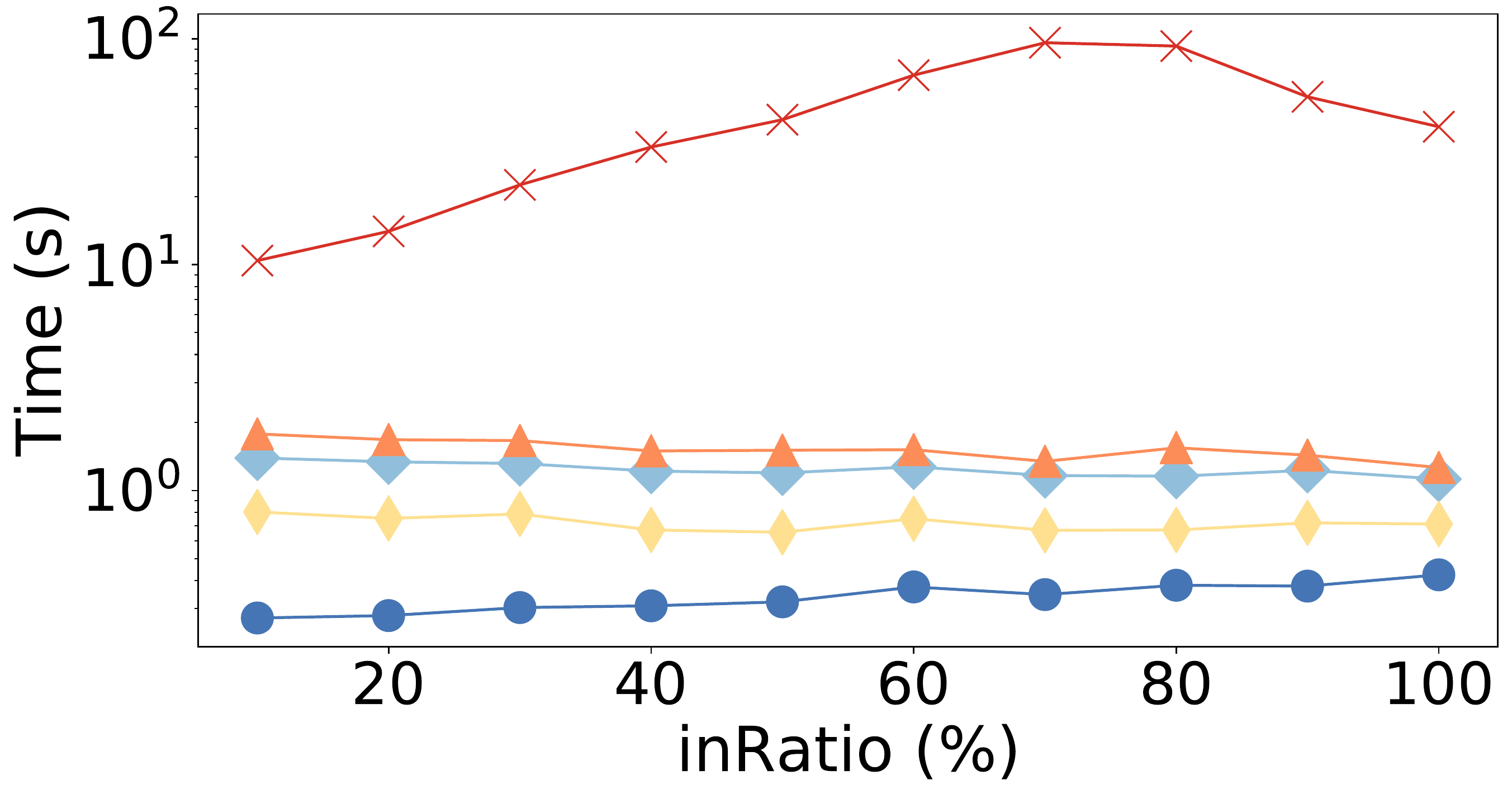} 
}
\subfloat[$q_6$]{
     \includegraphics[width=0.25\textwidth]{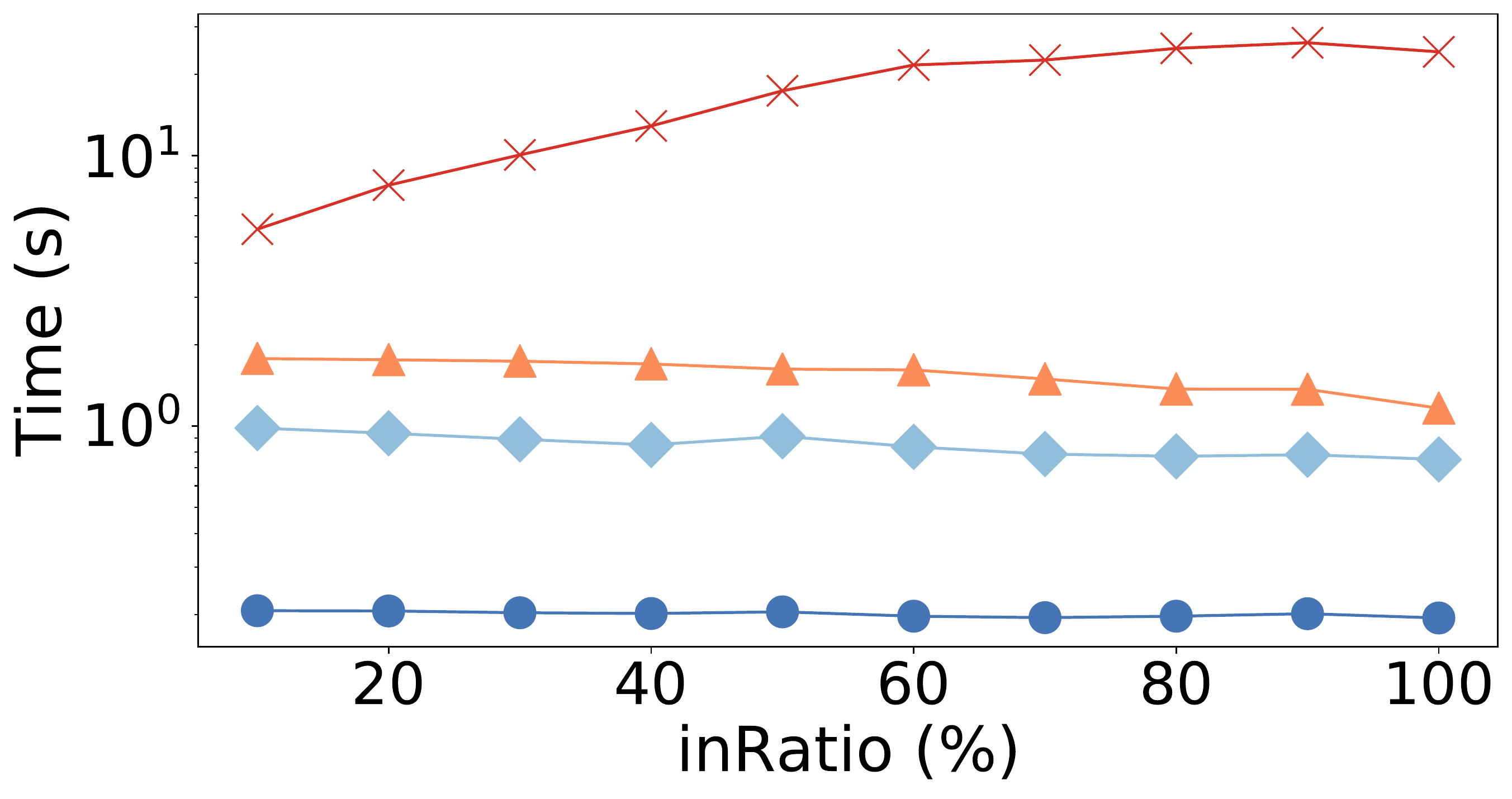} 
}
\caption{Performance of different systems on inconsistent databsae of varying inconsistency ratio}
\label{fig:syn-incon-rest}
\end{figure*}

\begin{figure*}[t]
\subfloat[$q_1$]{
     \includegraphics[width=0.25\textwidth]{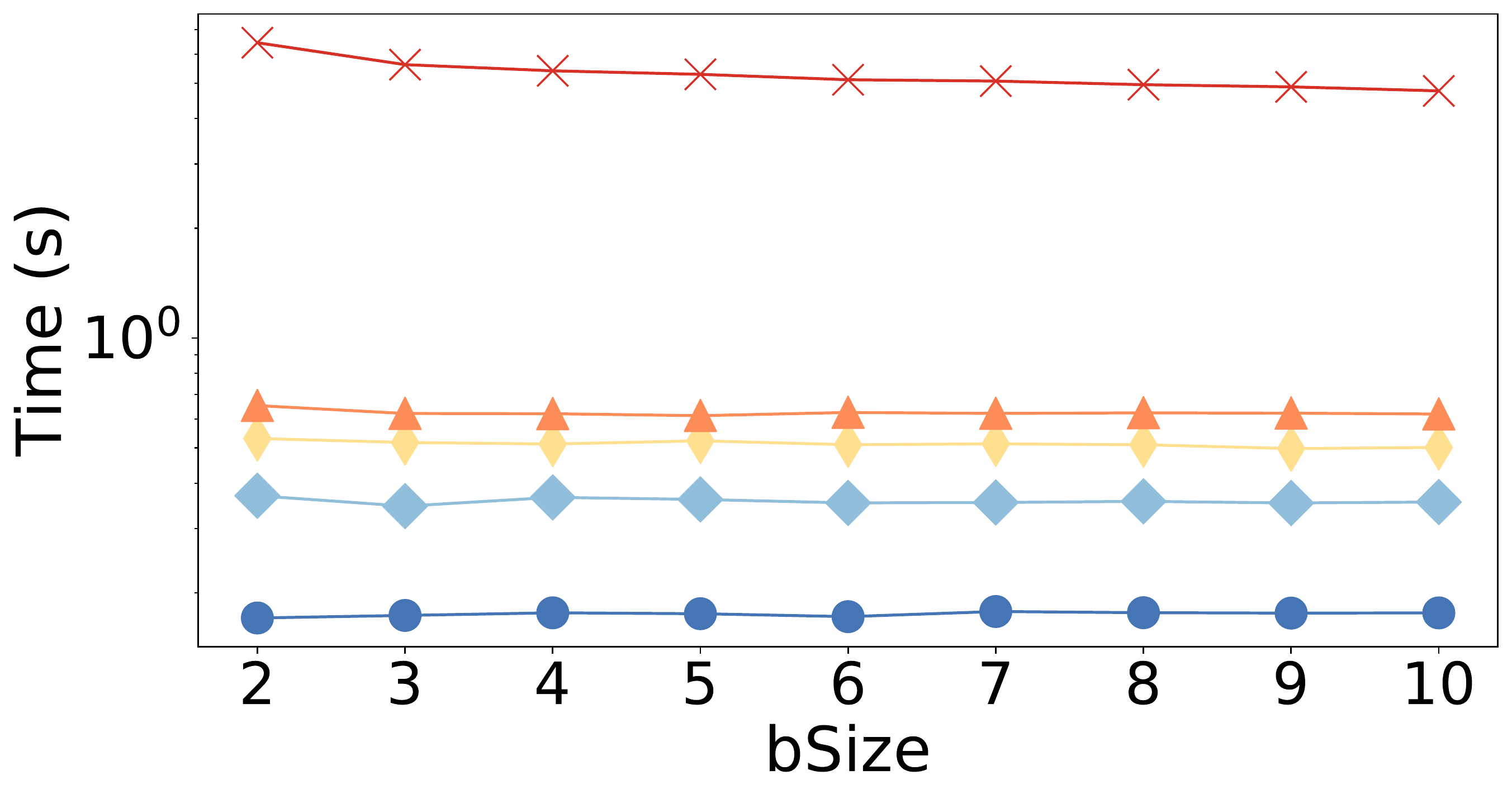} 
}
\subfloat[$q_2$]{
     \includegraphics[width=0.25\textwidth]{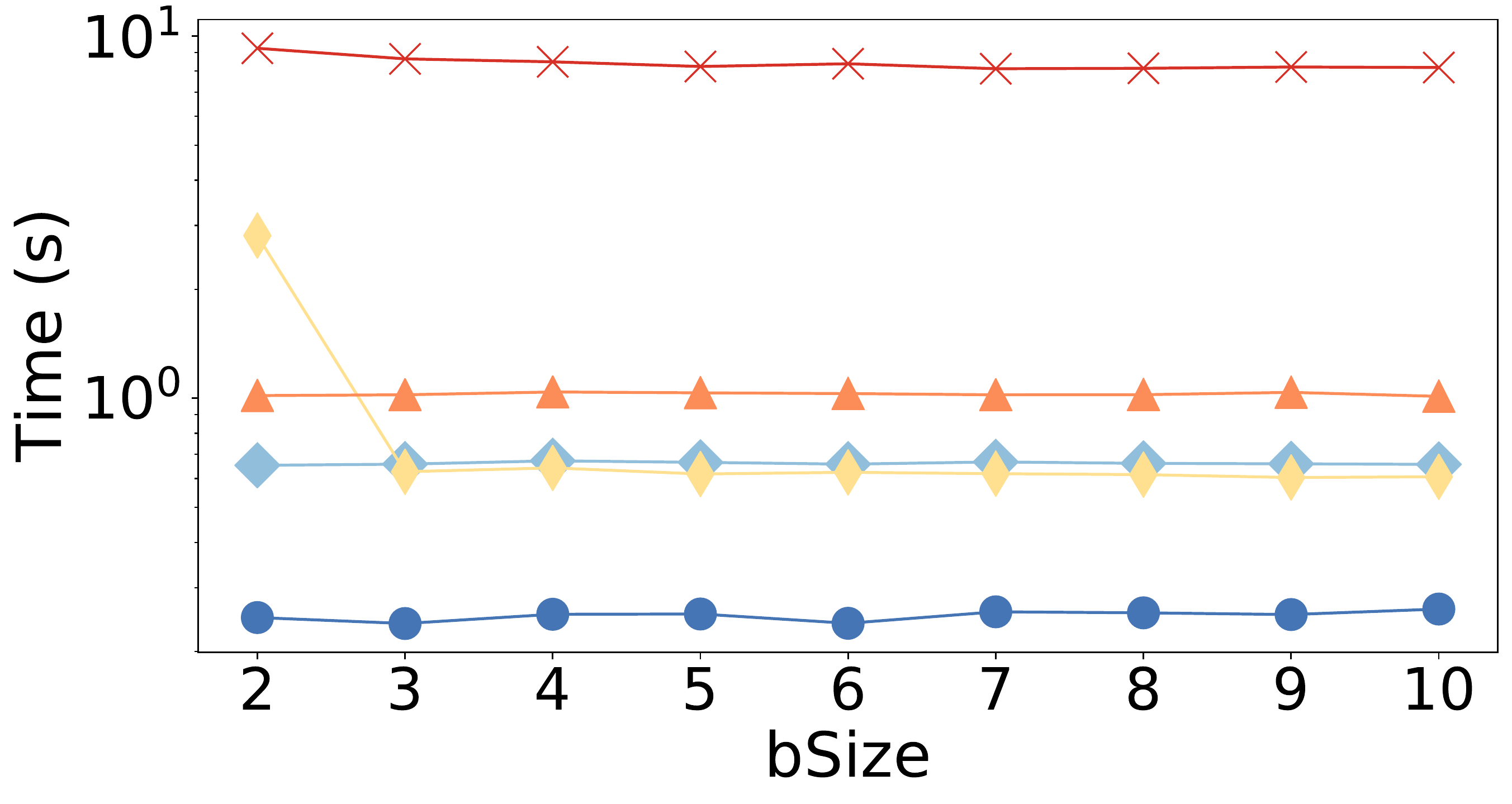} 
}
\subfloat[$q_3$]{
     \includegraphics[width=0.25\textwidth]{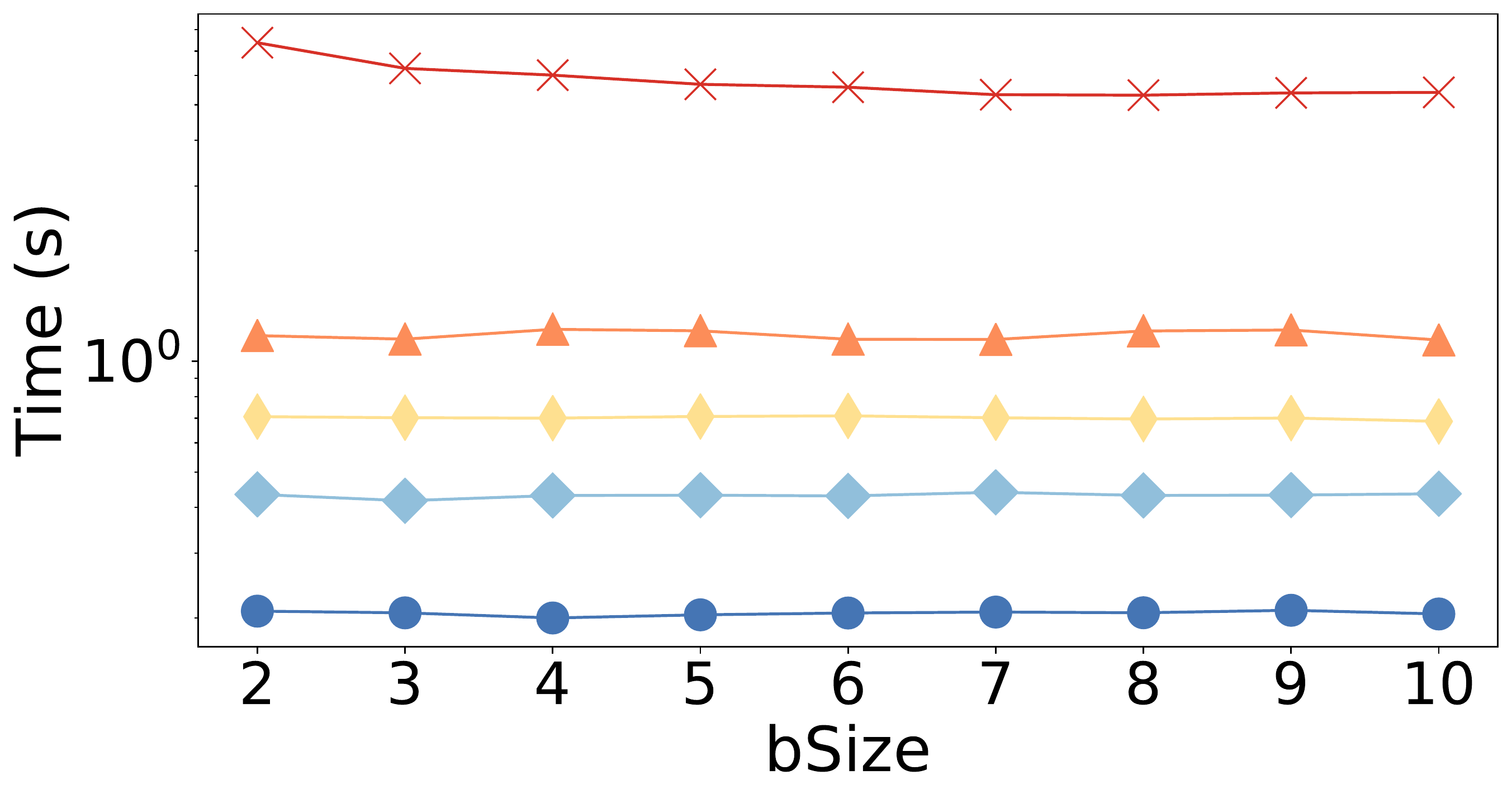} 
}
\subfloat[$q_4$]{
     \includegraphics[width=0.25\textwidth]{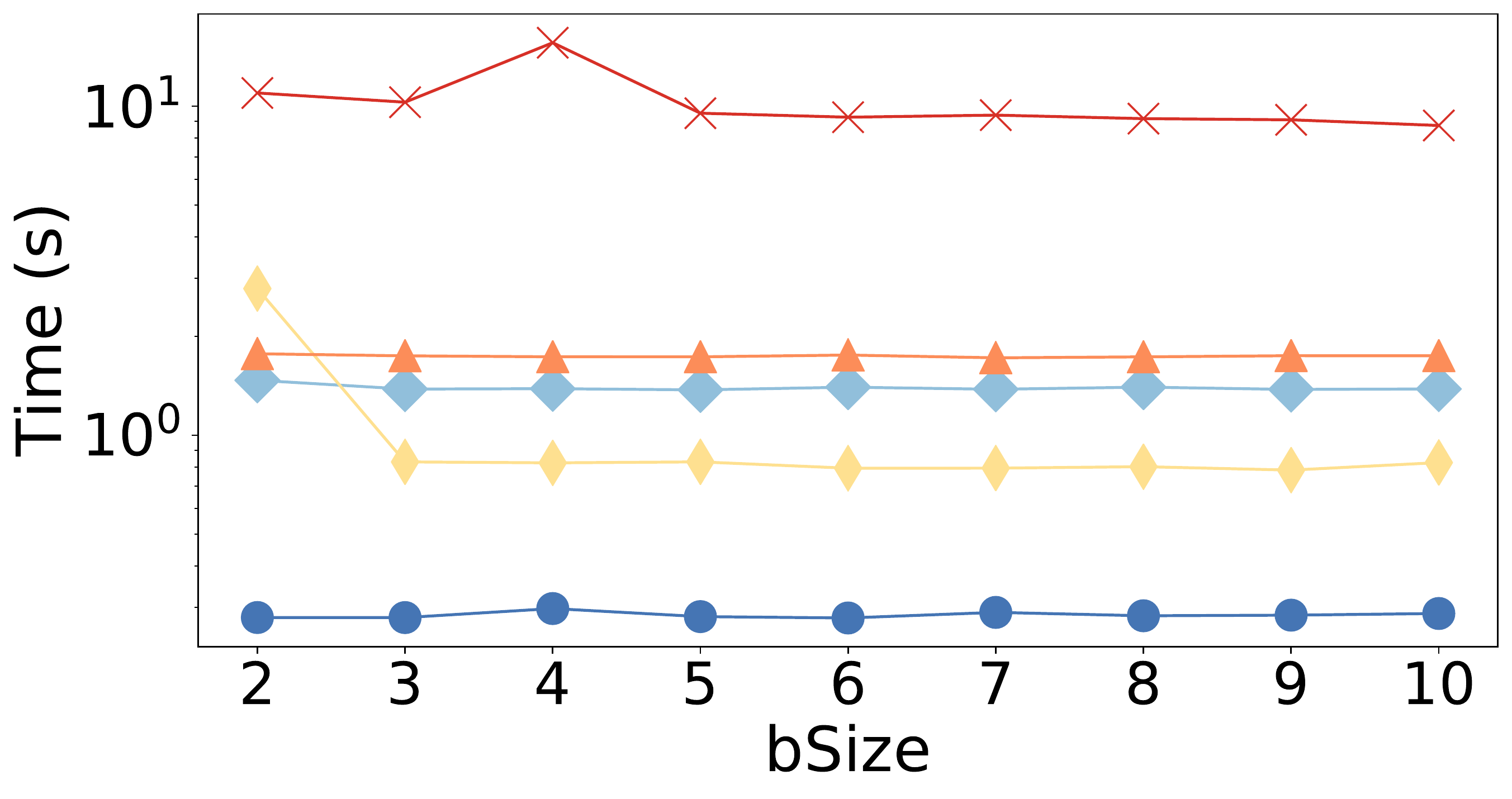} 
}
\quad
\subfloat[$q_5$]{
     \includegraphics[width=0.25\textwidth]{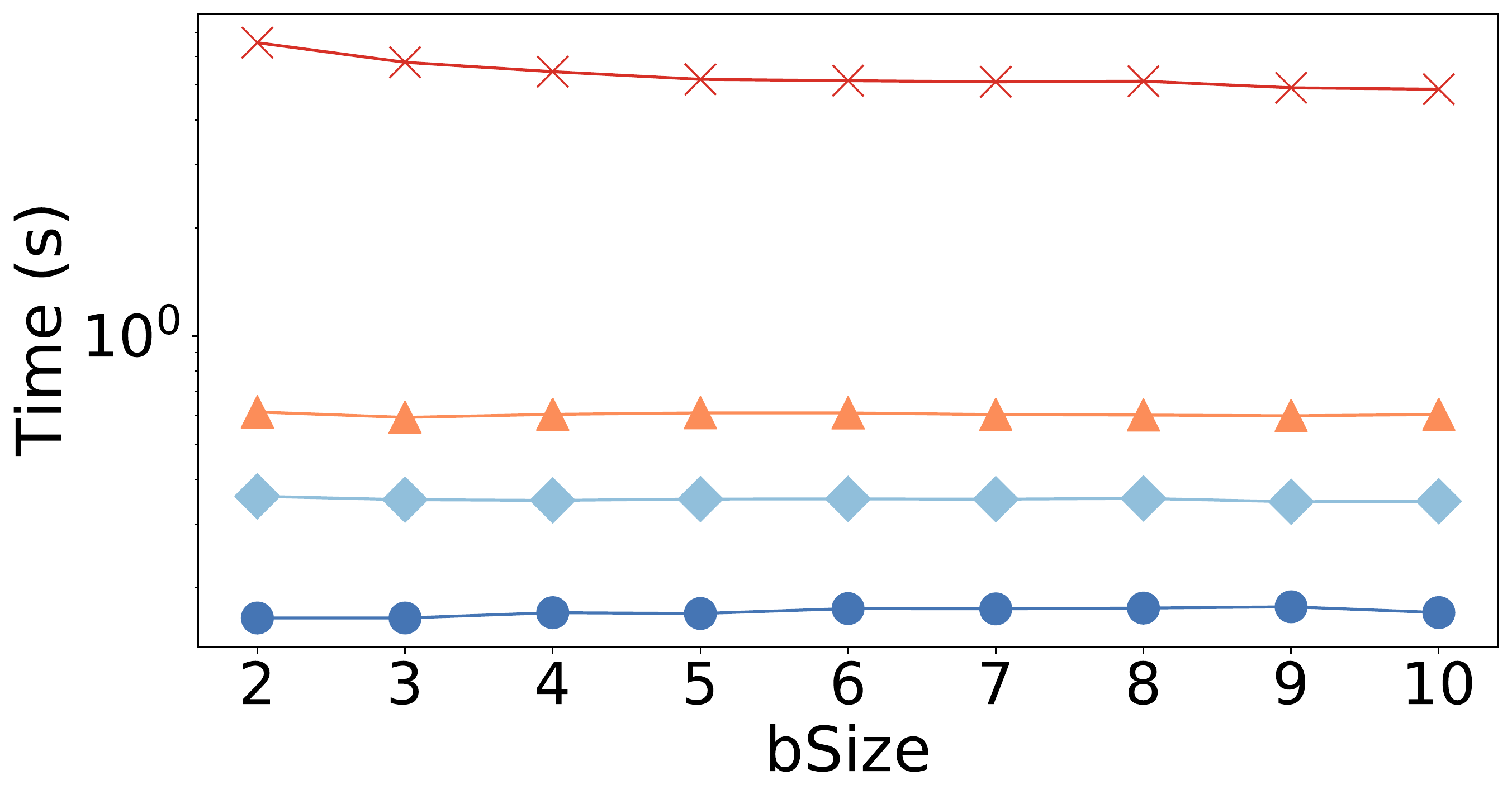} 
}
\subfloat[$q_6$]{
     \includegraphics[width=0.25\textwidth]{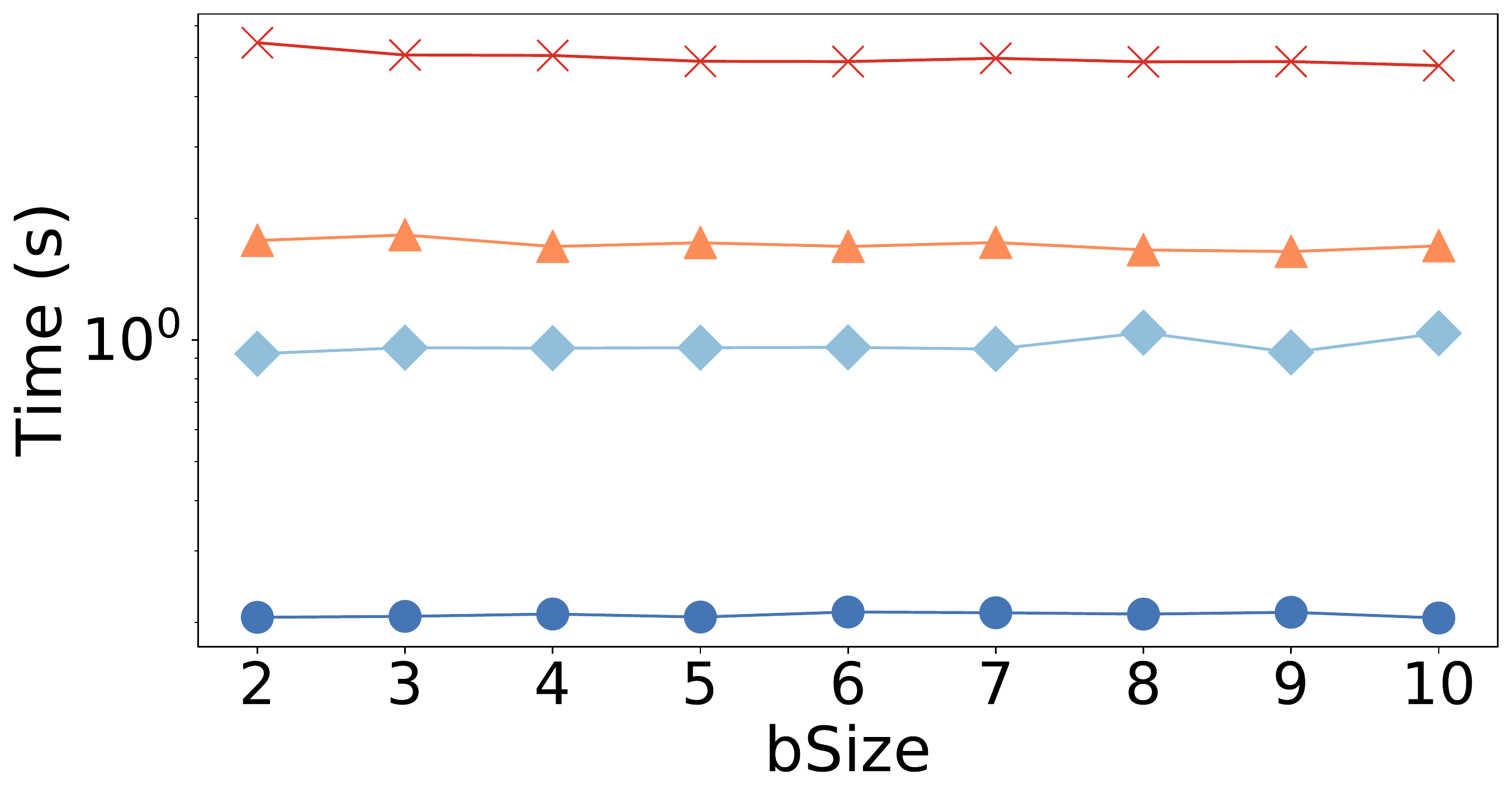} 
}
\subfloat[$q_7$]{
     \includegraphics[width=0.25\textwidth]{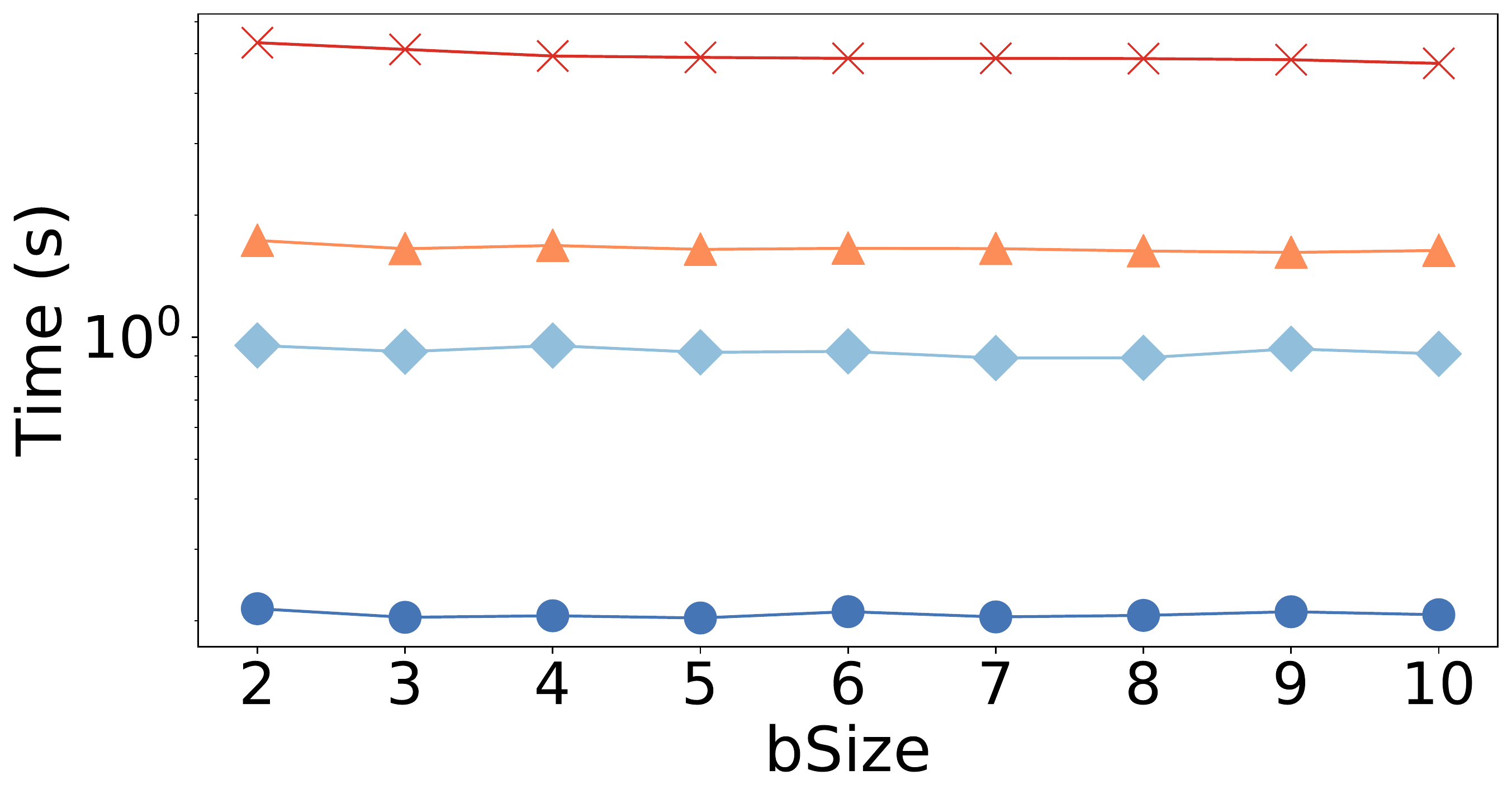} 
}
\caption{Performance of different systems on inconsistent database of varying block size}
\label{fig:syn-block}
\end{figure*}


\end{document}